\newcites{oa}{References}
\newcolumntype{d}[1]{D..{#1}} 
\titleformat{\section}{\large\bfseries}{\thesection}{1em}{}
\titleformat{\paragraph}{\normalfont\normalsize\bfseries}{\theparagraph}{0.7em}{}
\titlespacing*{\paragraph}{0pt}{1ex plus 1ex minus .2ex}{1pt}
\newtheorem{theorem}{Theorem}[section]
\newtheorem{proposition}{Proposition}[section]
\newtheorem{lemma}{Lemma}[section]
\newcommand{\neutralize}[1]{\expandafter\let\csname c@#1\endcsname\count@}
\newtheorem{assumption}{Assumption}[section]
\newtheorem{corollary}{Corollary}[section]
\newtheorem{example}{Example}[section]
\let\endtitlepage\relax
\newenvironment{mytitlepage}%
  {\begin{titlepage}\def\@thanks{}}%
  {\end{titlepage}}
\xpatchcmd\titlepage{\setcounter{page}\@ne}{}{}{}
\xpatchcmd\endtitlepage{\setcounter{page}\@ne}{}{}{}
\newcommand{\ostar}{\mathbin{\mathpalette\make@circled\star}}
\newcommand{\make@circled}[2]{%
  \ooalign{$\m@th#1\smallbigcirc{#1}$\cr\hidewidth$\m@th#1#2$\hidewidth\cr}%
}
\newcommand{\smallbigcirc}[1]{%
  \vcenter{\hbox{\scalebox{0.77778}{$\m@th#1\bigcirc$}}}%
}
\DeclareMathOperator{\plim}{plim}
\DeclareMathOperator{\diag}{diag}
\DeclareMathOperator{\argmax}{argmax}
\definecolor{colari}{rgb}{0.7, 0, 0.7}
\definecolor{coland}{rgb}{0, 0.7, 0.4}
\title{Inference for Two-Stage Extremum Estimators\footnote[1]{For comments and suggestions, we are grateful to Arnaud Dufays, Ulrich Hounyo, Mathieu Marcoux, Antoine Djogbenou, Frank Windmeijer, Xiaohong Chen, Jean-Marie Dufour, Jad Beyhum, Prosper Dovonon, Désiré Kédagni, Pamela Giustinelli and Florian Pelgrin. We also thank the participants of the EDHEX Business School econometric seminar, the CIREQ econometric seminar, the 58th Annual Meetings of the CEA, and the 2024 conference of IAAE.\\
Replication codes for the results from this research are available at \url{https://github.com/ahoundetoungan/InferenceTSE}.}}
\author[a]{Aristide Houndetoungan\footnote[2]{Corresponding author \--
Email addresses: \href{mailto:aristide.houndetoungan@cyu.fr}{aristide.houndetoungan@cyu.fr} (A. Houndetoungan),\\ \href{mailto:abdoulhaki.maoude@concordia.ca}{abdoulhaki.maoude@concordia.ca} (A. H. Maoude)}}
\author[b]{Abdoul Haki Maoude}
\affil[a]{\normalsize\emph{CY Cergy Paris University}\vspace{-2pt}}
\affil[b]{\normalsize\emph{Concordia University}\vspace{-2pt}}
\date{\normalsize October 2024}
\begin{document}
\setlength{\abovedisplayskip}{4pt}
\setlength{\belowdisplayskip}{4pt}
\begin{mytitlepage}
\maketitle

\vspace{-0.8cm}
\begin{abstract}
\noindent 
{\linespread{1.2}\selectfont
We present a simulation-based inference approach for two-stage estimators, focusing on extremum estimators in the second stage. We accommodate a broad range of first-stage estimators, including extremum estimators, high-dimensional estimators, and other types of estimators such as 
Bayesian estimators. The key contribution of our approach lies in its ability to estimate the asymptotic distribution of two-stage estimators, even when the distributions of both the first- and second-stage estimators are non-normal and when the second-stage estimator's bias, scaled by the square root of the sample size, does not vanish asymptotically.  This enables reliable inference in situations where standard methods fail. Additionally, we propose a debiased estimator, based on the mean of the estimated distribution function, which exhibits improved finite sample properties. Unlike resampling methods, our approach avoids the need for multiple calculations of the two-stage estimator. We illustrate the effectiveness of our method in an empirical application on peer effects in adolescent fast-food consumption, where we address the issue of biased instrumental variable estimates resulting from many weak instruments.

\vspace{0.5cm}
\noindent \textbf{Keywords}: Hypothesis Testing, Two-stage Estimators,  Semiparametric and Nonparametric Methods, Simulation Methods, High-Dimensional Asymptotics  

\vspace{0.2cm}
\noindent \textbf{JEL Classification}: C12, C13, C14, C15, C55.
}
\end{abstract}
\end{mytitlepage}

\newpage

\section{Introduction}
Two-stage estimation approaches are widely used to address challenges such as endogeneity, selection bias, non-identification, missing data, and high dimensionality \cite[e.g.,][]{hirano2003efficient, jofre2003estimation, newey2003instrumental, 
 freyberger2022identification, chernozhukov2022locally, ichimura2022influence, boucher2020estimating}. 
These methods involve estimating a function or parameter in the first stage (FS), followed by incorporating the estimator into a second-stage (SS) model to estimate another parameter, denoted as $\boldsymbol{\theta}_0$. The estimator of $\boldsymbol{\theta}_0$ is referred to as a \textit{two-stage} (TS) or \textit{plug-in} estimator. For inference, researchers often require the asymptotic distribution of $\sqrt{n}(\boldsymbol{\hat{\theta}}_n - \boldsymbol{\theta}_0)$, where $\boldsymbol{\hat{\theta}}_n$ is the TS estimator and $n$ is the sample size. However, deriving this distribution can be complex due to the sampling error introduced in the FS.

Three important issues may arise regarding the asymptotic distribution of $\sqrt{n}(\boldsymbol{\hat \theta}_n - \boldsymbol{\theta}_0)$: 
\begin{enumerate}[topsep=0em, itemsep=-0.4em]
    \item Estimating the asymptotic variance can be challenging due to the FS sampling error.
    \item The asymptotic mean may not be zero if the FS sampling error is substantial (slow convergence rate in the FS), leading to a TS estimator with a regularization bias.
    \item The asymptotic distribution may not be normal if the asymptotic distribution of the FS estimator is not normal.
\end{enumerate} 
While the literature has often focused on the first issue \citep[e.g.,][]{newey1994large, ackerberg2012practical, chen2015sieve}, the latter two have not received much attention, despite their potential to arise in many contexts. Standard inference methods typically impose conditions for the asymptotic distribution of $\sqrt{n}(\boldsymbol{\hat \theta}_n - \boldsymbol{\theta}_0)$ to be normal with a mean of zero. If these conditions are not met, such methods may become inappropriate and lead to incorrect conclusions in hypothesis tests.

In this paper, we propose a simulation-based inference approach for TS estimators, focusing on extremum estimators in the SS. We accommodate a broad range of FS estimators, including extremum estimators, high-dimensional estimators, and other types of estimators (e.g., 
Bayesian estimators). The novelty of our approach lies in its ability to estimate the asymptotic cumulative distribution function (CDF) of $\sqrt{n}(\boldsymbol{\hat \theta}_n - \boldsymbol{\theta}_0)$, even when the distribution is non-normal and the asymptotic mean is different from zero. By estimating this mean, we demean $\sqrt{n}(\boldsymbol{\hat \theta}_n - \boldsymbol{\theta}_0)$ and introduce a debiased estimator that demonstrates improved finite sample properties. Our debiased estimator is straightforward and easily applicable to complex models. Unlike resampling methods, it avoids the need for repeated calculations of the TS estimator. 

Regularization bias can arise in various situations, particularly when the FS model is high-dimensional or when the number of observations in the FS is small relative to $n$. The slow convergence rate of the FS estimator can significantly distort the SS estimator \citep[][]{chernozhukov2017double, belloni2014high, belloni2017program, cattaneo2019two}. Moreover, when dealing with complex models, researchers may compute a Bayesian estimator, such as a posterior mean, which is then used in the SS to obtain a standard extremum estimator \citep[e.g.,][]{breza2020using, lubold2023identifying}. As noted by \cite{zellner1984bayesian}, Bayesian estimators may not be asymptotically normally distributed, potentially leading to a non-normal asymptotic distribution in the SS. Non-normal asymptotic distributions can also occur in the FS with certain frequentist estimators that are robust to outliers, such as the trimmed mean and U-statistics \citep{stigler1973asymptotic, ma2011asymptotic}. Our approach covers a broad range of models, including these contexts where standard methods are no longer applicable.

Our approach consists first of examining the \textit{conditional} asymptotic distribution of \break$\sqrt{n}(\boldsymbol{\hat{\theta}}_n - \boldsymbol{\theta}_0)$, given any realization of the FS estimator. Since we are disregarding the FS sampling error at this stage, the conditional distribution resembles that of a single-step estimator. Consequently, under certain assumptions, it will be normal with some mean and variance that depend on the FS estimator. Next, we characterize the \textit{unconditional} asymptotic CDF of $\sqrt{n}(\boldsymbol{\hat{\theta}}_n - \boldsymbol{\theta}_0)$ by using both the asymptotic normality at the SS, conditional on the FS estimator, and the asymptotic distribution of the FS estimator. We demonstrate that this CDF may not follow a normal distribution and may have a mean that is not necessarily zero.

We then combine simulations from an estimator of the distribution of the FS estimator with the asymptotic normality in the second stage, conditional on the FS estimator, to simulate the asymptotic distribution of $\sqrt{n}(\boldsymbol{\hat{\theta}}_n - \boldsymbol{\theta}_0)$. This simulation allows us to construct a sample for $\sqrt{n}(\boldsymbol{\hat{\theta}}_n - \boldsymbol{\theta}_0)$, which we use to approximate its CDF and obtain confidence intervals (CIs) for $\boldsymbol{\theta}_0$. Additionally, the average of the constructed sample estimates the asymptotic mean of $\sqrt{n}(\boldsymbol{\hat{\theta}}_n - \boldsymbol{\theta}_0)$, which helps reduce the bias of $\boldsymbol{\hat{\theta}}_n$.

Simulating from the estimated distribution of the FS estimator is a key requirement of our method and is feasible for a broad range of models. When conducting inference in the FS using certain frequentist approaches, the asymptotic distribution is known (generally normal), which allows us to draw from a normal distribution. In cases where a Bayesian estimator is used in the FS, the posterior distribution can serve as an estimator, and samples can be obtained via Gibbs sampling or the Metropolis-Hastings algorithm \citep[][]{casella1992explaining, chib1995understanding}.\footnote{An estimator of the distribution of the FS estimator can also be derived using resampling methods when standard approaches are not easily applicable. In general, our approach avoids the need for multiple computations of the estimators, except when resampling is required to obtain the asymptotic distribution in the FS.}

We evaluate the performance of our method through an extensive simulation study, covering various models, including an instrumental variable (IV) model with many weak instruments \citep{cattaneo2019two}, a Poisson model with unobserved variables, and a multivariate GARCH model \citep{gonccalves2022bootstrapping}, with a number of returns increasing in $n$. Our simulation results confirm that our method performs well on these models.

Furthermore, to demonstrate the effectiveness of our method, we revisit the empirical analysis by \cite{fortin2015peer} on peer effects in adolescent fast-food consumption. The estimation of endogenous peer effects often relies on IV methods, where instruments are constructed based on the characteristics of friends' friends \citep{bramoulle2009identification}. We address the issue of biased estimates that can arise when instruments are weak. By using the characteristics of both close and distant friends, we expand the IV set to include many weak instruments. We correct the finite sample bias of the IV estimator and provide valid inference. Our findings suggest that a one-point increase in the average friend's fast-food consumption frequency leads to a 0.23 increase in one's fast-food consumption frequency.

This paper contributes to the extensive and growing literature on inference methods for sequential estimators. Most existing studies impose conditions for the asymptotic distribution of $\sqrt{n}(\boldsymbol{\hat \theta}_n - \boldsymbol{\theta}_0)$ to be normal with a zero mean \citep{newey1984method, andrews1994asymptotics, hotz1993conditional, murphy2002estimation, boucher2020estimating, houndetoungan2023ident}. Examples include cases where both the first- and second-stage estimators are finite-dimensional extremum estimators converging at the same rate, and scenarios where the SS estimator is asymptotically invariant to infinitesimal variations in the FS estimator. We contribute to this literature by proposing a new method that relies on more flexible conditions. Additionally, we do not impose a specific class of estimators in the FS, making our approach more general. Importantly, our method yields results comparable to those obtained from classical asymptotic inference methods when applicable.

Even though the asymptotic distribution of the plug-in estimator is normal, accurately calculating its variance can be complex. Various methods have been proposed in the literature to estimate the asymptotic variance \citep[e.g., see][]{newey1984method, newey1994asymptotic, ackerberg2012practical, chen2015sieve}, and resampling methods have also been employed for this purpose \citep[e.g.,][]{gonccalves2005bootstrap, kline2012score, armstrong2014fast, honore2017poor, gonccalves2022bootstrapping}. We discuss scenarios where our simulation method can be used to estimate the asymptotic variance from the estimated CDF of $\sqrt{n}(\boldsymbol{\hat \theta}_n - \boldsymbol{\theta}_0)$. Moreover, our method not only estimates the asymptotic variance but also the asymptotic mean, thereby reducing the bias of $\boldsymbol{\hat \theta}_n$.

Our framework is also related to the literature on extremum estimators in the presence of high-dimensional nuisance parameters \citep[e.g.,][]{ai2007estimation, belloni2014high, belloni2017program, farrell2015robust, chernozhukov2015valid, mikusheva2022inference}. Although $\boldsymbol{\hat \theta}_n$ can still be consistent in this context, the limiting distribution of $\sqrt{n}(\boldsymbol{\hat \theta}_n - \boldsymbol{\theta}_0)$ may not have a zero mean. Resampling methods are often used to approximate the regularization bias of $\boldsymbol{\hat \theta}_n$, but the choice of method depends on the studied model. For instance, in IV approaches with many weak instruments, the standard bootstrap method fails to infer $\boldsymbol{\theta}_0$, while the Jackknife method is appropriate \citep[see][]{cattaneo2019two}. The standard bootstrap also fails when the FS involves variable selection methods such as Lasso \citep{chatterjee2011bootstrapping}. We contribute to this literature because our bias reduction technique can be applied to a broad range of models. As long as simulating from the distribution of the FS estimator is possible, the asymptotic mean of $\sqrt{n}(\boldsymbol{\hat \theta}_n - \boldsymbol{\theta}_0)$ can be estimated to reduce the bias of $\boldsymbol{\hat \theta}_n$.

The regularization bias of extremum estimators in the presence of high-dimensional nuisance parameters can also be addressed using the double or debiased machine learning (DML) approach \citep{chernozhukov2017double, chernozhukov2018double, chernozhukov2022locally}. This method combines Neyman-orthogonal moments (or scores) with cross-fitting to produce an estimator whose bias converges to zero faster than $1/\sqrt{n}$. \cite{chernozhukov2017double} present this approach in the context of several models, including partially linear regression models and interactive regression models. However, obtaining orthogonal moments can be challenging in some cases, particularly for nonlinear models or those with weakly dependent data. Our approach differs from the DML method in that it does not require modifying moment functions or obtaining orthogonal scores. Instead, we retain the standard estimation procedure and apply bias reduction post-estimation. We illustrate the bias-reducing performance of our method using a multivariate GARCH model, where obtaining an orthogonal score can be challenging. 

\paragraph{Plan of the Paper}
The remainder of the paper is organized as follows. In Section \ref{sec:condMestim}, we present our framework. Section \ref{sec:overview} provides an overview of our approach using a leading example. In Section \ref{sec:asymp}, we present our main results. Section \ref{sec:simu} provides a simulation study to assess the finite sample performance of our approach. In Section \ref{sec:appli}, we present an empirical analysis with peer effects.  Section \ref{sec:conc} concludes the paper.

\paragraph{Notation} 
The symbols  $\mathbb{E}$ and $\mathbb{V}$ denote expectation and variance, respectively. $\lVert . \rVert$ is the $\ell_2$-norm. $\partial_{x}$ is the derivative with respect to $x$. The symbol $\mathbbm{1}\{\cdot\}$ is the indicator function. If $\boldsymbol a = (a_1,~ \dots, ~ a_d)^{\prime}$, $\boldsymbol b=(b_1,~ \dots, ~ b_d)^{\prime}$ are vectors in $\mathbb{R}^d$, then $\boldsymbol a \preceq \boldsymbol b$ means that $a_k \leq b_k$ for all $k$. $\boldsymbol{I}_{d}$ is the $d$-dimensional identity matrix. We use $\lim$ and $\plim$ to denote the standard limit and the limit in probability as $n$ grows to infinity, respectively. For a positive definite matrix $\mathbf{M}$, we use $\mathbf{M}^{1/2}$ to denote its Cholesky decomposition and $\mathbf{M}^{-1/2}$ to denote the Cholesky decomposition of its inverse.

\section{Framework} \label{sec:condMestim}
This section introduces the class of plug-in estimators that are studied in this paper. For expositional ease, we consider the case of M-estimators in the second stage (SS). However, our findings can be extended to any extremum estimator since inference methods for extremum estimators are similar, regardless of whether it is an M-estimator, GMM estimator, or MD estimator \cite[see][]{amemiya1985advanced}.
Moreover, due to this similarity, we interchangeably use the terms M-estimator and extremum estimator in the paper. 

In the SS, we assume that the practitioner maximizes an objective function given by 
\begin{equation}\textstyle Q_n(\boldsymbol{\theta}, ~\mathbf{y}_n,~ \mathbf{X}_n, ~\mathbf{\hat B}_n) = \dfrac{1}{n}\sum_{i = 1}^nq(\boldsymbol{\theta},~ y_{i}, ~\boldsymbol{x}_{i}, ~\boldsymbol{\hat\beta}_{n,i}),\label{eq:objective}\end{equation} 
where $\mathbf{y}_n = (y_{1},~\dots,~y_{n})^{\prime}$, $\mathbf{X}_n = (\boldsymbol{x}_{1},~\dots,~\boldsymbol{x}_{n})^{\prime}$, $\mathbf{\hat B}_n = (\boldsymbol{\hat\beta}_{n,1},~\dots,~\boldsymbol{\hat\beta}_{n,n})^{\prime}$, and $q$ is a known function. In Equation \eqref{eq:objective}, $y_{i}$ and $\boldsymbol{x}_{i}$ are observed variables for the $i$-th unit in the sample (e.g., $y_{i}$ is a dependent variable and $\boldsymbol{x}_{i}$ are explanatory variables). $\boldsymbol{\hat\beta}_{n,1},~\dots,~\boldsymbol{\hat\beta}_{n,n}$ are estimators from some first stage (FS) regression (e.g., prediction of some variable in a preliminary regression). The estimator $\boldsymbol{\hat\beta}_{n,i}$ may be a scalar or finite-dimensional vector. The subscript $i$ may also refer to time in time-series models. We will refer to $\mathbf{\hat B}_n$ as the FS estimator. Importantly, we do not require $\boldsymbol{\hat\beta}_{n,i}$ to originate from an extremum estimation, or to have a particular asymptotic distribution (like the normal distribution). However, we assume that $\boldsymbol{\hat\beta}_{n,i}$ uniformly converges in probability to some $\boldsymbol{\beta}_{0,i}$, the true value of the parameter that it is designed to estimate (see Assumption \ref{ass:converge:beta} below).

Let $\boldsymbol{\hat{\theta}}_n$ be the estimator that maximizes the objective function \eqref{eq:objective}. $\boldsymbol{\hat{\theta}}_n$ is called \textit{two-stage} (TS) or \textit{plug-in} estimator. We denote by $\boldsymbol{\theta}_0$ the true value of the parameter $\boldsymbol{\theta}$; i.e., the value taken by $\boldsymbol{\theta}$ in the data-generating process (DGP). 

Special cases within our framework arise when $\boldsymbol{\beta}_{0,i} = f(\boldsymbol{z}_i, \boldsymbol{\gamma}_0)$ for some function $f$, where $\boldsymbol{z}_i$ is a control variable that may overlap components of $\boldsymbol{x}_i$, and $\boldsymbol{\gamma}_0$ is a parameter. In this case, we have $\boldsymbol{\hat\beta}_{n,i} = f(\boldsymbol{z}_i, \boldsymbol{\hat\gamma}_n)$, where $\boldsymbol{\hat\gamma}_n$ is an estimator of $\boldsymbol{\gamma}_0$. An example of this situation is the instrumental variable (IV) approach with $\boldsymbol{z}_i$ being the instrument and $\boldsymbol{\hat\beta}_{n,i}$ is the predicted value of the endogenous variable to be plugged into the second stage \citep{cattaneo2019two}. The function $f$ may also be constant; that is, $\boldsymbol{\beta}_{0,i} = \boldsymbol{\beta}_{0}$ for any $i$, where $\boldsymbol{\beta}_{0}$ is a finite-dimensional vector to be estimated in the first stage \citep[][]{murphy2002estimation}. In the case of semiparametric or nonparametric specification, we can have  $\boldsymbol{\beta}_{0,i} = f_n(\boldsymbol{z}_i, \boldsymbol{\gamma}_{0,n})$, where the specification of the function $f_n$ and the dimension of the parameter $\boldsymbol{\gamma}_{0,n}$ depends on the sample size $n$. Examples of estimators in this situation are power series, splines, and Fourier series approximations \citep[see][]{belloni2015some}.

Let $\mathcal{Y}\subseteq \mathbb{R}^{K_{y}}$, $\mathcal{X}\subseteq \mathbb{R}^{K_{x}}$, and $\mathcal{B} \subseteq \mathbb{R}^{K_{\beta}}$ be the supports of $y$, $\boldsymbol{x}$, and $\boldsymbol{\beta}_{0,i}$, respectively, where $K_{y}$, $K_{x}$, and $K_{\beta}$ are the corresponding dimensions. Let also $\boldsymbol{\Theta} \subset \mathbb{R}^{K_{\theta}}$ be the space of $\boldsymbol{\theta}_0$, where $K_{\theta}$ is the dimension of $\boldsymbol{\theta}_0$. 
We introduce the following assumptions.

\begin{assumption}[First-Stage]
    $\boldsymbol{\hat{\beta}}_{n,i}- \boldsymbol{\beta}_{0,i}$ converges in probability to zero, uniformly in $i$, in the sense that: $\displaystyle\max_i  \textstyle\lVert\boldsymbol{\hat{\beta}}_{n,i} - \boldsymbol{\beta}_{0,i} \rVert = o_p(1)$.\label{ass:converge:beta}
\end{assumption}

\begin{assumption}[Regularity Conditions]\label{ass:Qfonction}\hfill\\ \begin{inparaenum}
[(i)]
\item For all $\boldsymbol{\theta}$, $q(\boldsymbol{\theta}, ~y, ~\boldsymbol{x}, ~\boldsymbol{b})$ is a measurable function of $(y, ~\boldsymbol{x}^{\prime}, ~\boldsymbol{b}^{\prime})^{\prime}$ in the space $\mathcal{Y} \times \mathcal{X} \times \mathcal{B}$. \label{ass:Qfonction:measure} \\
\item For all $(y, ~\boldsymbol{x}^{\prime}, ~\boldsymbol{b}^{\prime})^{\prime} \in \mathcal{Y} \times \mathcal{X} \times \mathcal{B}$, $q(\boldsymbol{\theta}, ~y, ~\boldsymbol{x}, ~\boldsymbol{b})$ is twice continuously differentiable in $\boldsymbol{\theta}$. \label{ass:Qfonction:diff} \end{inparaenum}
\end{assumption}

\noindent Assumption \ref{ass:converge:beta} is a common requirement when dealing with FS estimators that may be infinite-dimensional \cite[see][]{chen2003estimation, ichimura2010characterization}. The condition will hold in many applications. For the case where $\boldsymbol{\beta}_{0,i} = f(\boldsymbol{z}_i, \boldsymbol{\gamma}_0)$, Assumption \ref{ass:converge:beta} requires $\boldsymbol{\hat\gamma}_n$ to be a consistent estimator and $f(\boldsymbol{z}_i, \boldsymbol{\gamma})$ to be continuously differentiable in $\boldsymbol{\gamma}$, with bounded derivative uniformly in $i$.\footnote{The result follows from the mean value theorem: $\boldsymbol{\hat{\beta}}_{n,i} - \boldsymbol{\beta}_{0,i} = \partial_{\boldsymbol{\gamma}^{\prime}}f(\boldsymbol{z}_i, \boldsymbol{\gamma}^+)(\boldsymbol{\hat\gamma}_n - \boldsymbol{\gamma}_0)$, for some $\boldsymbol{\gamma}^+$ that lies between $\boldsymbol{\hat\gamma}_n$ and $\boldsymbol{\gamma}_0$.} For nonparametric sieve estimators, lower-level conditions can be imposed to satisfy Assumption \ref{ass:converge:beta} \citep[see][]{belloni2015some}. Certain of these conditions are discussed by \cite{cattaneo2019two} in their online appendix. Assumption \ref{ass:converge:beta} also holds in the case where $\boldsymbol{\beta}_{0,i}$ represents fixed effects from another model \citep[e.g.,][]{dzemski2019empirical, yan2019statistical}.  
Assumption \ref{ass:Qfonction} sets regularity conditions on the objective function's behavior. These conditions are generally imposed for classical M-estimators and do not involve the FS estimator \citep[see][]{amemiya1985advanced}. 

Despite the risk of regularization bias, $\boldsymbol{\hat{\theta}}_n$ generally converges to $\boldsymbol{\theta}_0$ in probability, even when the FS estimator is high-dimensional \citep[][]{chen2003estimation, cattaneo2019two}. The regularization bias specifically affects $\sqrt{n}(\boldsymbol{\hat \theta}_n - \boldsymbol{\theta}_0)$, which may not have a zero mean asymptotically due to the $\sqrt{n}$ factor. We acknowledge the consistency of $\boldsymbol{\hat{\theta}}_n$ as a high-level assumption. 
\begin{assumption}[Consistency] \label{ass:consistent}
    $\boldsymbol{\hat \theta}_n$ is a consistent estimator of $\boldsymbol{\theta}_0$.
\end{assumption}
\noindent The proof of this consistency is context-dependent, and the required conditions may vary. In Online Appendix (OA) \ref{sm:cons}, we discuss primitive conditions for Assumption \ref{ass:consistent} by adapting Theorem 4.1.1 of \cite{amemiya1985advanced} to our framework.

\section{Overview of our Approach}\label{sec:overview}
Before presenting the theoretical framework underlying our approach and the formal results, this section provides an overview through an illustrative example involving a latent variable model. In this example, $\sqrt{n}(\boldsymbol{\hat \theta}_n - \boldsymbol{\theta}_0)$ asymptotically follows a normal distribution with a mean of zero; this result can be established using standard methods \citep[][]{newey1994large}. This section demonstrates how our method can be effectively applied to a simple case before extending it to more complex scenarios.

\begin{example}[Latent variable model\label{eg:latent}] We consider the following model: 
$$y_i = \theta_0 \beta_{0,i} + \varepsilon_i, \quad \beta_{0,i} = f(\boldsymbol{z}_i, \boldsymbol{\gamma}_0) = \boldsymbol{z}_i^{\prime}\boldsymbol{\gamma}_0, \quad d_i = \mathbbm{1}\{\beta_{0,i} > v_i\},  \quad v_i \sim \text{Uniform}(0, ~1),$$ 

\noindent where $\boldsymbol{\gamma}_0$ is an unknown parameter, $\theta_0$ is the parameter of interest, $\beta_{0,i}$ is an unobserved probability, and $\varepsilon_i$'s are independent and identically distributed (i.i.d) random errors with mean zero and variance $\sigma_{0,\varepsilon}^2$.  Assume that we observe an i.i.d. sample of $(y_i, ~d_i, ~ \boldsymbol{z}_i^{\prime})^{\prime}$, for $i = 1, ~\dots, ~ n$. We can use a two-stage (TS) approach to estimate $\theta_0$. In the first stage, we estimate $\boldsymbol{\gamma}_0$ by regressing $d_i$ on $\boldsymbol{z}_i$. Let $\boldsymbol{\hat \gamma}_n$ be the ordinary least squares (OLS) estimator of $\boldsymbol{\gamma}_0$. In the second stage, we estimate $\theta_0$ using the regression of $y_i$ on $\hat\beta_{n,i} = \boldsymbol{z}_i^{\prime}\boldsymbol{\hat\gamma}_n$.
 
The objective function to be maximized in the second stage is 
$\textstyle Q_n(\theta, ~\mathbf{y}_n, ~\mathbf{\hat{B}}_n) = -\frac{1}{n}\sum_{i = 1}^n (y_i -\theta\hat\beta_{n,i})^2,$
where $\mathbf{\hat{B}}_n = (\hat\beta_{n,1}, ~\dots,~ \hat\beta_{n,n})^{\prime}$. The first-order condition of this maximization is $\frac{2}{n}\sum_{i = 1}^n( y_i - \hat{\theta}_n\hat\beta_{n,i})\hat\beta_{n,i} = 0$, where $\hat{\theta}_n$ is the TS estimator of $\theta_0$. By the mean value theorem, this condition solves to $\sqrt{n}(\hat{\theta}_n - \theta_0) = \hat A_n^{-1} \dot q_n(\theta_0, ~\mathbf{y}_n, ~\mathbf{\hat{B}}_n)$, where 
$$
  \textstyle  \hat A_n = \frac{2}{n}\sum_{i = 1}^n\hat\beta_{n,i}^2  \quad \text{and} \quad \dot q_n(\theta_0, ~\mathbf{y}_n, ~\mathbf{\hat{B}}_n) = \frac{2}{\sqrt{n}}\sum_{i = 1}^n( y_i - \theta_0\hat\beta_{n,i})\hat\beta_{n,i}.
$$

\noindent We will refer to $\dot q_n(\theta_0, ~\mathbf{y}_n, ~\mathbf{\hat{B}}_n)$ as the influence function (IF). It is not always straightforward to apply the central limit theorem (CLT) to the IF as in the case of a single-step approach. This complexity arises because the terms $( y_i - \theta_0\hat\beta_{n,i})\hat\beta_{n,i}$ in the expression for the IF are dependent across $i$ through the FS estimator.

Assume that $A_0 = \plim \hat A_n$ exists.
For the sake of simplicity, we treat $\boldsymbol{z}_i$ as a nonstochastic variable. We define the conditional expectation and conditional variable of the IF, given $\mathbf{\hat{B}}_n$, as follows:
\begingroup
\allowdisplaybreaks
\begin{align}\label{eq:eg:EnVn}
\begin{split}
    \mathcal{E}_n &:=\textstyle \mathbb{E}(\dot q_n(\theta_0, ~\mathbf{y}_n, ~\mathbf{\hat{B}}_n)|\mathbf{\hat{B}}_n) = \frac{2}{\sqrt{n}}\theta_0\sum_{i = 1}^n( \beta_{0,i} - \hat\beta_{n,i})\hat\beta_{n,i},\\
    V_n &:=\textstyle \mathbb{V}(\dot q_n(\theta_0, ~\mathbf{y}_n, ~\mathbf{\hat{B}}_n)|\mathbf{\hat{B}}_n) = \frac{4}{n}\sigma^2_{0,\varepsilon}\sum_{i = 1}^n\hat\beta_{n,i}^2.  
\end{split}
\end{align}
\endgroup

We also define the standardized IF, conditional on $\mathbf{\hat B}_n$, by subtracting from the IF its expectation conditional on $\mathbf{\hat B}_n$ and then dividing the resulting difference by its conditional standard deviation.
$$u_n := \textstyle V_n^{-\frac{1}{2}}(\dot q_n(\theta_0, ~\mathbf{y}_n, ~\mathbf{\hat{B}}_n) - \mathcal{E}_n) = \sum_{i = 1}^n\hat a_{n,i} (y_i - \theta_0\beta_{0,i}), ~~ \text{where} ~~ \hat a_{n,i} = \sigma^{-1}_{0,\varepsilon}\hat\beta_{n,i}\big(\sum_{i = 1}^n\hat\beta_{n,i}^2\big)^{-\frac{1}{2}}.$$ 

\noindent The expectation of the standardized IF, $u_n$, is zero and the variance is one.
Conditional on $\mathbf{\hat{B}}_n$, the variables $a_{n,i}$'s are nonstochastic and $\sum_{i = 1}^n\hat a_{n,i} (y_i - \theta_0\beta_{0,i})$  is a sum of independent variables. Consequently, by a conditional central limit theorem (CLT), the conditional distribution of  $u_n$, given $\mathbf{\hat{B}}_n$, converges to $N(0, ~1)$, for almost all $\mathbf{\hat{B}}_n$.\footnote{See an example of conditional CLT in \cite{Rubshtein1996ACL}. Indeed, Lyapunov’s condition is verified if $\sum_{i = 1}^n \hat a_{n,i}^{2+\nu} \mathbb{E}(\lvert \varepsilon_i \rvert^{2 + \nu}) = o_p(1)$, for some $\nu > 0$. A similar condition is also required in the case where 
$\beta_{0,i}$ is known and $\theta_0$ is estimated using a single-step approach.} 

Given that $\sqrt{n}(\hat{\theta}_n - \theta_0) = \hat A_n^{-1} (V_n^{1/2} u_n - \mathcal{E}_n)$, we show that the unconditional asymptotic distribution of $\sqrt{n}(\hat{\theta}_n - \theta_0)$ can be approximated by the CDF of:
\begin{equation}
    \psi_{n} = A_0^{-1}(V_0^{1/2}\zeta + \mathcal{E}_{n}),\label{eq:psi}
\end{equation}
where $\zeta \sim N(0, ~1)$ and $V_0 = \plim V_n$. The term $V_0^{1/2}\zeta + \mathcal{E}_n$ represents the decomposition of the IF, with $V_0^{1/2}\zeta$ capturing the variance of the SS error term conditional on the FS estimator, and $\mathcal{E}_n$ accounting for the sampling error from the FS. This result is important because it allows for the simulation of $\psi_{n}$. For some large $\kappa$, we construct the sample: 
$$\{\hat\psi_{n,s} = \hat A_n^{-1}(\hat V_n^{1/2}\zeta_s + \hat{\mathcal{E}}_{n,s}), ~ s = 1, ~\dots,~ \kappa\},$$ 
where $\hat{V}_n = \frac{4}{n}\hat\sigma_{n,\varepsilon}^2\sum_{i = 1}^n\hat\beta_{n,i}^2$,  $\hat \sigma_{n,\varepsilon}^2$  is a consistent estimator of $\sigma^2_{0,\varepsilon}$, $\zeta_s$'s are i.i.d simulations from $N(0, ~1)$, and $\hat{\mathcal{E}}_{n,s}$'s are i.i.d simulations from an estimator of the distribution of $\mathcal{E}_{n}$.
Since the first stage is an OLS regression, the estimator of the distribution of $\boldsymbol{\hat\gamma}_n$ is a normal distribution with mean $\boldsymbol{\hat \gamma}_n$ and variance $\hat{\mathbb{V}}(\boldsymbol{\hat\gamma}_n) = (\sum_{i = 1}^n \boldsymbol{z}_i\boldsymbol{z}_i^{\prime})^{-1}(\sum_{i = 1}^n\hat{\nu}_i^2\boldsymbol{z}_i\boldsymbol{z}_i^{\prime}) (\sum_{i = 1}^n \boldsymbol{z}_i\boldsymbol{z}_i^{\prime})^{-1}$, where $\hat{\nu}_i = d_i - \boldsymbol{z}_i^{\prime}\boldsymbol{\hat \gamma}_n$. For $s=1,~\dots, ~\kappa$, let $\bar\beta_{n,i}^{(s)}  = \boldsymbol{z}_i^{\prime}\boldsymbol{\bar \gamma}_n^{(s)}$, where $\boldsymbol{\bar \gamma}_n^{(s)} \sim N(\boldsymbol{\hat \gamma}_n,~\hat{\mathbb{V}}(\boldsymbol{\hat\gamma}_n))$. Thus, $\hat{\mathcal{E}}_{n,s} = \frac{2\hat \theta_n}{\sqrt{n}}\sum_{i = 1}^n( \hat\beta_{n,i} - \bar\beta_{n,i}^{(s)})\bar\beta_{n,i}^{(s)}$.

The $2.5\%$ and $97.5\%$ quantiles of the sample $\{\hat{\theta}_n - \hat\psi_{n,s}/\sqrt{n}, ~ s = 1, ~\dots,~ \kappa\}$ are the bounds of the $95\%$ confidence interval (CI) of $\theta_0$. 

We also show that the asymptotic variance of $\sqrt{n}(\hat\theta_n - \theta_0)$ is given by $A_0^{-1}(V_0 + \lim \mathbb V(\mathcal E_n))A_0^{-1}$. A consistent estimator of this variance is: 
$$
    \hat A_n^{-1}(\hat V_n + \hat{\mathbb{V}}(\mathcal{E}_n))\hat A_n^{-1}, ~~ \text{where} ~~ \textstyle \hat{\mathbb{V}}(\mathcal{E}_n) = \dfrac{1}{\kappa - 1}\sum_{s = 1}^{\kappa}(\hat{\mathcal{E}}_{n,s} - \hat{\mathbb{E}}(\mathcal{E}_n))^2 ~~ \text{and} ~~ \hat{\mathbb{E}}(\mathcal{E}_n) = \dfrac{1}{\kappa}\sum_{s = 1}^{\kappa}\hat{\mathcal{E}}_{n,s}.$$

For more complex models, the asymptotic mean of $\sqrt{n}(\hat\theta_n - \theta_0)$ (which is the same for $\psi_n$) may not be zero. From \eqref{eq:psi}, this asymptotic mean is given by $A_0^{-1}\lim\mathbb{E}(\mathcal{E}_n)$. If  $\lim\mathbb{E}(\mathcal{E}_n)$ is not zero, the plug-in estimator can exhibit significant bias in finite samples. We address this issue by proposing the debias estimator $\theta_{n,\kappa}^{\ast} = \hat\theta_n - (\sqrt{n}\hat A_n)^{-1}\hat{\mathbb{E}}(\mathcal{E}_n)$. We demonstrate the limiting distribution of $\sqrt{n}(\theta_{n,\kappa}^{\ast} - \theta_0)$ has a zero mean.

A key ingredient of our approach lies in computing the conditional variance of the IF. 
One important simplification in this example is that $\varepsilon_i$'s are independent of the FS estimator. This is employed when computing $\mathcal{E}_n$ and $V_n$ in \eqref{eq:eg:EnVn}. In a more general context, computing the conditional moments of the IF may be challenging. We will later discuss this situation in Section \ref{sec:asymp:finite}.

\end{example}

\section{Inference for Conditional Extremum Estimators}\label{sec:asymp}
We present our main results in this section. Technical details of proofs can be found in Appendix \ref{Append:proof}. 
The first-order condition of the maximization of  \eqref{eq:objective} is $\frac{1}{n}\sum_{i = 1}^n\partial_{\boldsymbol{\theta}}q(\boldsymbol{\hat \theta}_n, ~y_{i}, ~\boldsymbol{x}_{i}, ~\boldsymbol{\hat{\beta}}_{n,i}) = 0$. By applying the mean value theorem to $\frac{1}{n}\sum_{i = 1}^n\partial_{\boldsymbol{\theta}}q(\boldsymbol{\theta}, ~y_{i}, ~\boldsymbol{x}_{i}, ~\boldsymbol{\hat{\beta}}_{n,i})$ with respect to (w.r.t) $\boldsymbol{\theta}$, we obtain: 
\begin{equation}
        \textstyle\Delta_n := \sqrt{n}(\boldsymbol{\hat{\theta}}_n - \boldsymbol{\theta}_0)  = \mathbf{A}_n^{-1}\dfrac{1}{\sqrt{n}}\sum_{i = 1}^n \boldsymbol{\dot q}_{n,i}(y_i,~\boldsymbol{\hat{\beta}}_{n,i}),\label{eq:sqrtntheta}
\end{equation}
where $\boldsymbol{\dot q}_{n,i}(y_i,~\boldsymbol{\hat{\beta}}_{n,i}) = \partial_{\boldsymbol{\theta}} q(\boldsymbol{\theta}_0, ~y_{i}, ~\boldsymbol{x}_{i}, ~\boldsymbol{\hat{\beta}}_{n,i})$ and $\mathbf{A}_n = -\frac{1}{n}\sum_{i = 1}^n \partial_{\boldsymbol{\theta}}\partial_{\boldsymbol{\theta}^\prime}q(\boldsymbol{\theta}^+_n, ~y_{i}, ~\boldsymbol{x}_{i}, ~\boldsymbol{\hat{\beta}}_{n,i})$, for some $\boldsymbol{\theta}^+_n$ that lies between $\boldsymbol{\hat \theta}_n$ and $\boldsymbol{\theta}_0$. As $\plim \boldsymbol{\hat \theta}_n = \boldsymbol{\theta}_0$, we also have $\plim \boldsymbol{\theta}^+_n = \boldsymbol{\theta}_0$. In large samples, $\mathbf{A}_n$ is assumed to be nonsingular (see Assumption \ref{ass:Anosing}  below). Let $\boldsymbol{\dot q}_{n}(\mathbf{y}_n,~\mathbf{\hat{B}}_n) = \frac{1}{\sqrt{n}} \sum_{i = 1}^n \boldsymbol{\dot q}_{n,i}(y_i,~\boldsymbol{\hat{\beta}}_{n,i})$. We will refer to $\boldsymbol{\dot q}_{n}(\mathbf{y}_n,~\mathbf{\hat{B}}_n)$ as the influence function (IF).  

In the case of a single-step estimator, the central limit theorem (CLT) implies (under regularity conditions) that the IF is asymptotically normally distributed with zero mean \cite[see][Theore 4.1.3]{amemiya1985advanced}. A crucial condition that is required by the CLT is that the dependence among the variables $\boldsymbol{\dot q}_{n,i}(y_i,~\boldsymbol{\hat{\beta}}_{n,i})$'s is "weak." Roughly speaking, if we define a certain order between the subscripts $i$'s (e.g., if  $i$ is time), the correlation between $\boldsymbol{\dot q}_{n,i}(y_i,~\boldsymbol{\hat{\beta}}_{n,i})$ and $\boldsymbol{\dot q}_{n,j}(y_j,~\boldsymbol{\hat{\beta}}_{n,j})$ must vanish at a certain rate as $\lvert i - j\rvert$ grows to infinity \citep[see][]{withers1981central, romano2000more, ekstrom2014general}. For TS estimators, the variables $\boldsymbol{\dot q}_{n,i}(y_i,~\boldsymbol{\hat{\beta}}_{n,i})$'s are dependent on each other because they all depend on the same FS estimator. Consequently, the weak dependence condition does not hold in general, even though $\boldsymbol{\hat{\beta}}_{n,i}$ uniformly converges in probability to $\boldsymbol{\beta}_{0, i}$. Without imposing additional conditions, there is no general CLT that guarantees asymptotic normality in this case. 

Our approach does not involve applying the CLT directly to the IF. Instead, we assume that the \textit{conditional} distribution of the IF, given $\mathbf{\hat{B}}_n$, is asymptotically normal. This condition is less restrictive in many cases because, conditional on $\mathbf{\hat{B}}_n$, the sampling error from the first stage is ignored, and the IF can be viewed as that of a single-step extremum estimator.

We introduce the following regularity assumptions.
\begin{assumption}[Influence Function]\label{ass:IF}\hfill\\
     \begin{inparaenum}[(i)] \item $\mu_{\nu}(\mathbf{\hat{B}}_n) = \mathbb{E}(\lVert \boldsymbol{\dot q}_{n}(\mathbf{y}_n,~\mathbf{\hat{B}}_n)\rVert^{\nu}|\mathbf{\hat{B}}_n)$ and $\mathbb{E}(\mu_{\nu}(\mathbf{\hat{B}}_n))$ exist for some $\nu>2$, where  $\mathbb{E}(\mu_{\nu}(\mathbf{\hat{B}}_n))$ is bounded. \label{ass:IF:moments} \\
     \item $\mathbf{V}_n := \mathbb{V}(\boldsymbol{\dot q}_{n}(\mathbf{y}_n,~\mathbf{\hat{B}}_n)|\mathbf{\hat{B}}_n)$ converges in probability to some nonstochastic quantity $\mathbf{V}_0$ and $\mathcal{E}_n:=\mathbb{E}(\boldsymbol{\dot q}_{n}(\mathbf{y}_n,~\mathbf{\hat{B}}_n)|\mathbf{\hat{B}}_n)$ converges in distribution to some random variable $\mathcal{E}_0$. \label{ass:IF:dist}
     \end{inparaenum}
\end{assumption}
\begin{assumption}[Hessian Matrix]\label{ass:Anosing}
    For any estimator $\boldsymbol{\theta}_n^+$ such that $\plim \boldsymbol{\theta}_n^+ = \boldsymbol{\theta}_0$, the Hessian of the objective function at $\boldsymbol{\theta}_n^+$, given by $\frac{1}{n}\sum_{i = 1}^n \partial_{\boldsymbol{\theta}}\partial_{\boldsymbol{\theta}^\prime}q(\boldsymbol{\theta}_n^+, ~y_{i}, ~\boldsymbol{x}_{i}, ~\boldsymbol{\hat{\beta}}_{n,i})$, converges in probability to a finite nonsingular matrix $\mathbf{A}_0 = \lim \mathbb{E}\big(\frac{1}{n}\sum_{i = 1}^n \partial_{\boldsymbol{\theta}}\partial_{\boldsymbol{\theta}^\prime}q(\boldsymbol{\theta}_0, ~y_{i}, ~\boldsymbol{x}_{i}, ~\boldsymbol{\beta}_{0,i})\big)$.
\end{assumption}

\noindent Condition (\ref{ass:IF:moments}) of Assumption \ref{ass:IF} imposes weak requirements on the existence of the conditional and unconditional moments of the IF.
Condition (\ref{ass:IF:dist}) will hold in general because $\mathbf{V}_n$ can be expressed as an average, whereas $\mathcal{E}_n$ is a sum of $n$ random variables scaled by $1/\sqrt{n}$. By the uniform Law of Large Numbers (LLN), if $\mathbf{V}_n$ is smooth in $\mathbf{\hat{B}}_n$, then it will converge in probability to a constant because the FS estimator is consistent (see Example \ref{eg:latent}). Additionally, in many cases, it is possible to express $\mathcal{E}_n$ as a function of $\mathbf{C}_n(\boldsymbol{\hat\gamma}_n - \boldsymbol{b}_n)$, for some sequences $\mathbf{C}_n$ and $\boldsymbol{b}_n$ and some estimator $\boldsymbol{\hat\gamma}_n$, such that $\mathbf{C}_n(\boldsymbol{\hat\gamma}_n - \boldsymbol{b}_n)$ has a limiting distribution.\footnote{In Example \ref{eg:latent}, $\textstyle\mathcal{E}_n$ can be approximated using a first-order Taylor expansion around $\boldsymbol{\gamma}_0$ as $\mathcal{E}_n \approx - (\frac{2\theta_0}{n}\sum_{i = 1}^n\boldsymbol{z}_i^{\prime}\boldsymbol{\gamma}_0\boldsymbol{z}_i^{\prime})\sqrt{n}(\boldsymbol{\hat\gamma}_n - \boldsymbol{\gamma}_0)$. Consequently, $\mathcal{E}_0$ is normally distributed with a zero mean.} Importantly, Condition (\ref{ass:IF:dist}) is flexible enough to encompass many models. Specifically, we allow for $\mathcal{E}_0$ not to be normally distributed and its mean may not be zero. 

Assumption \ref{ass:Anosing} ensures the consistency of the Hessian matrix. Under some weak regularity conditions, for instance, if $\partial_{\boldsymbol{\theta}}\partial_{\boldsymbol{\theta}^\prime}q(\boldsymbol{\theta}_0, ~y_{i}, ~\boldsymbol{x}_{i}, ~\boldsymbol{\beta}_{0,i})$ is ergodic stationary across $i$ with a finite variance, the LLN implies that $\plim\frac{1}{n}\sum_{i = 1}^n \partial_{\boldsymbol{\theta}}\partial_{\boldsymbol{\theta}^\prime}q(\boldsymbol{\theta}_0, ~y_{i}, ~\boldsymbol{x}_{i}, ~\boldsymbol{\beta}_{0,i}) = \mathbf{A}_0$. Assumption \ref{ass:Anosing} extends this convergence to the case where $\boldsymbol{\theta}_0$ and $\boldsymbol{\beta}_{0,i}$ are replaced with consistent estimators. It adapts Conditions (B) in Theorem 4.1.3 of \cite{amemiya1985advanced} to TS estimation approaches. We discuss primitive conditions for Assumption \ref{ass:Anosing} in OA \ref{sm:hessian}. These conditions require the Hessian at $\boldsymbol{\theta}_n^+$  to be smooth in $\boldsymbol{\theta}_n^+$ and $\boldsymbol{\hat{\beta}}_{n,i}$.

In the rest of this section, we first present some theoretical results on the asymptotic distribution of $\Delta_n$. Subsequently, we discuss the finite sample approximation of this distribution and introduce our debiased plug-in estimator.
 
\subsection{Asymptotic Distribution}\label{sec:asymp:theo}
We first examine the conditional distribution of the IF, given $\mathbf{\hat{B}}_n$. Indeed, treating $\mathbf{\hat{B}}_n$ as a predetermined sequence in $n$ allows us to approach the problem as in the case of single-step M-estimators. However, the IF may not have a zero mean, even asymptotically, thereby leading to a limiting distribution of $\Delta_n$ that is not centered at zero \citep[e.g., see][]{chernozhukov2018double}. Therefore, we define the \textit{standardized} IF as $\boldsymbol{u}_{n}(\mathbf{y}_n,~\mathbf{\hat{B}}_n) := \mathbf{V}_n^{-1/2}(\boldsymbol{\dot q}_{n}(\mathbf{y}_n,~\mathbf{\hat{B}}_n) - \mathcal{E}_n)$, which has a zero mean and variance $\boldsymbol{I}_{K_{\theta}}$. We impose the following assumption.

\begin{assumption}[Conditional Asymptotic Normality]
The conditional distribution of the standardized influence function $\boldsymbol{u}_{n}(\mathbf{y}_n,~\mathbf{\hat{B}}_n)$, given $\mathbf{\hat{B}}_n$,  converges to $N(0, ~\boldsymbol{I}_{K_{\theta}})$; in the sense that for all $\boldsymbol{t}\in\mathbb{R}^{K_{\theta}}$, we have $\plim \mathbb{P}(\boldsymbol{u}_{n}(\mathbf{y}_n,~\mathbf{\hat{B}}_n) \preceq \boldsymbol{t}|\mathbf{\hat{B}}_n) = \Phi(\boldsymbol{t})$, where $\Phi$ is the CDF of $N(0, ~\boldsymbol{I}_{K_{\theta}})$. \label{ass:CLT}
\end{assumption}

\noindent Conditional on $\mathbf{\hat{B}}_n$, the variable $\boldsymbol{u}_{n}(\mathbf{y}_n,~\mathbf{\hat{B}}_n)$ can be viewed as the standardized IF of a single-step estimator. Consequently, a conditional CLT may imply Assumption \ref{ass:CLT} \cite[see][Theorem 1]{Rubshtein1996ACL}.\footnote{A similar interpretation of Assumption \ref{ass:CLT} by \cite{kato2011note} is that ${\sup_{g\in LB}}\lvert \mathbb{E}[g(\boldsymbol{u}_{n}(\mathbf{y}_n,~\mathbf{\hat{B}}_n))|\mathbf{\hat{B}}_n] - {\displaystyle\varint}_{\mathbb{R}} g(t)d\Phi(\boldsymbol{t})\rvert = o_p(1)$, where $LB$ is the set of all functions on $\mathbb{R}$ with Lipschitz norm bounded by one. See also \cite{fligner1979use, van2000asymptotic} who used conditional asymptotic normality.}
For example, if $y_i$ is independent across $i$, conditional on $\mathbf{\hat{B}}_n$, then the variables $\boldsymbol{\dot q}_{n,i}(y_i,~\boldsymbol{\hat \beta}_{n,i})$'s would also be independent across $i$, conditional on $\mathbf{\hat{B}}_n$. Thus, the Lyapunov CLT or Lindeberg CLT, conditional on $\mathbf{\hat{B}}_n$, implies Assumption \ref{ass:CLT} (under similar conditions to that of single-step estimators). When $y_i$'s are dependent, we may use a more general CLT for dependent processes if the dependence is weak conditional on $\mathbf{\hat{B}}_n$. 

Assumption \ref{ass:CLT} allows us to separate the model error in the SS (conditional on the FS sampling error) from the sampling error in the FS. Importantly, since  $\boldsymbol{u}_{n}(\mathbf{y}_n,~\mathbf{\hat{B}}_n)$ is standardized, its first two conditional moments, given $\mathbf{\hat{B}}_n$, are independent of $\mathbf{\hat{B}}_n$. Consequently, even \textit{unconditionally}, $\boldsymbol{u}_{n}(\mathbf{y}_n, \mathbf{\hat{B}}_n)$ asymptotically follows a standard normal distribution. We claim and show this result in Lemma \ref{lem:CLT} in Appendix \ref{append:dist}. However, this result does not extend to the non-standardized IF, as its conditional moments depend on $\mathbf{\hat{B}}_n$. 

The following theorem establishes the asymptotic distribution of $\Delta_n$.

\begin{theorem}[Asymptotic Distribution]\label{theo:dist}
    Let
    $\boldsymbol{\psi}_n =  \mathbf{A}_0^{-1}\mathbf{V}_0^{1/2}\boldsymbol{\zeta} + \mathbf{A}_0^{-1}\mathcal{E}_n$, where $\boldsymbol{\zeta} \sim N(0, ~\boldsymbol{I}_{K_{\theta}})$. Let $F$ be the limiting distribution function of $\boldsymbol{\psi}_n$; that is, $ F(\boldsymbol{t}) = \lim\mathbb{P}(\boldsymbol{\psi}_n \preceq \boldsymbol{t})$ for all $\boldsymbol{t}\in\mathbb{R}^{K_{\theta}}$. Under Assumptions \ref{ass:converge:beta}--\ref{ass:CLT}, we have $\lim \mathbb{P}(\sqrt{n}(\boldsymbol{\hat{\theta}}_n - \boldsymbol{\theta}_0) \preceq \boldsymbol{t}) = F(\boldsymbol{t})$. 
\end{theorem}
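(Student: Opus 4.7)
The plan is to start from the mean-value expansion \eqref{eq:sqrtntheta}, $\Delta_n = \mathbf{A}_n^{-1} \boldsymbol{\dot q}_n(\mathbf{y}_n,\mathbf{\hat B}_n)$, and rewrite the IF using the standardized version as $\boldsymbol{\dot q}_n(\mathbf{y}_n,\mathbf{\hat B}_n) = \mathbf{V}_n^{1/2}\boldsymbol{u}_n(\mathbf{y}_n,\mathbf{\hat B}_n) + \mathcal{E}_n$. This gives the decomposition
\begin{equation*}
\Delta_n = \mathbf{A}_n^{-1}\mathbf{V}_n^{1/2}\boldsymbol{u}_n(\mathbf{y}_n,\mathbf{\hat B}_n) + \mathbf{A}_n^{-1}\mathcal{E}_n,
\end{equation*}
which mirrors the definition of $\boldsymbol{\psi}_n$ with $\boldsymbol{\zeta}$ replaced by $\boldsymbol{u}_n$ and $\mathbf{A}_0,\mathbf{V}_0$ replaced by their sample analogues. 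Since $\mathbf{A}_n \xrightarrow{p} \mathbf{A}_0$ (Assumption \ref{ass:Anosing}) and $\mathbf{V}_n \xrightarrow{p} \mathbf{V}_0$ (Assumption \ref{ass:IF}(\ref{ass:IF:dist})), and since $\boldsymbol{u}_n = O_p(1)$ (because its unconditional variance equals $\boldsymbol{I}_{K_\theta}$ by the tower property) and $\mathcal{E}_n = O_p(1)$ (from its convergence in distribution to $\mathcal{E}_0$), a direct Slutsky argument gives $\Delta_n = \mathbf{A}_0^{-1}\mathbf{V}_0^{1/2}\boldsymbol{u}_n + \mathbf{A}_0^{-1}\mathcal{E}_n + o_p(1)$. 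So the problem reduces to identifying the limiting distribution of $\boldsymbol{W}_n := \mathbf{A}_0^{-1}\mathbf{V}_0^{1/2}\boldsymbol{u}_n + \mathbf{A}_0^{-1}\mathcal{E}_n$.

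The key step is to show joint convergence $(\boldsymbol{u}_n,\mathcal{E}_n) \xrightarrow{d} (\boldsymbol{\zeta},\mathcal{E}_0)$ with $\boldsymbol{\zeta}\sim N(0,\boldsymbol{I}_{K_\theta})$ \emph{independent} of $\mathcal{E}_0$. I would do this by L\'evy's continuity theorem applied to the characteristic function of $\boldsymbol{W}_n$. Conditioning on $\mathbf{\hat B}_n$ (which is measurable for $\mathcal{E}_n$ by definition), and setting $\boldsymbol{s} = \mathbf{V}_0^{1/2}(\mathbf{A}_0^{-1})^\prime \boldsymbol{t}$, write
\begin{equation*}
\phi_{\boldsymbol{W}_n}(\boldsymbol{t}) \;=\; \mathbb{E}\!\left[e^{i\boldsymbol{t}^\prime \mathbf{A}_0^{-1}\mathcal{E}_n}\,\mathbb{E}\!\left[e^{i\boldsymbol{s}^\prime \boldsymbol{u}_n(\mathbf{y}_n,\mathbf{\hat B}_n)}\,\big|\,\mathbf{\hat B}_n\right]\right].
\end{equation*}
Assumption \ref{ass:CLT} (conditional asymptotic normality) implies, via the conditional version of L\'evy's theorem, that the inner conditional characteristic function converges in probability to $h_0 := e^{-\|\boldsymbol{s}\|^2/2}$. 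Splitting $\phi_{\boldsymbol{W}_n}(\boldsymbol{t}) = h_0\,\mathbb{E}[e^{i\boldsymbol{t}^\prime \mathbf{A}_0^{-1}\mathcal{E}_n}] + \mathbb{E}[e^{i\boldsymbol{t}^\prime\mathbf{A}_0^{-1}\mathcal{E}_n}(h_n - h_0)]$, the first term converges to $h_0\,\mathbb{E}[e^{i\boldsymbol{t}^\prime\mathbf{A}_0^{-1}\mathcal{E}_0}]$ by the continuous mapping / portmanteau theorem applied to $\mathcal{E}_n\xrightarrow{d}\mathcal{E}_0$, and the remainder is bounded by $\mathbb{E}|h_n - h_0|$, which vanishes by bounded convergence since $|h_n - h_0| \leq 2$ and $h_n \xrightarrow{p} h_0$. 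The product form of the limit is precisely the characteristic function of a standard normal $\boldsymbol{\zeta}$ independent of $\mathcal{E}_0$, passed through the linear map $(\boldsymbol{\zeta},\mathcal{E}_0)\mapsto \mathbf{A}_0^{-1}\mathbf{V}_0^{1/2}\boldsymbol{\zeta} + \mathbf{A}_0^{-1}\mathcal{E}_0$, which is exactly $\boldsymbol{\psi}_\infty := \lim \boldsymbol{\psi}_n$ with CDF $F$.

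The main obstacle is the interchange of limits implicit in the characteristic-function computation: Assumption \ref{ass:CLT} delivers only \emph{in-probability} convergence of the conditional characteristic function, not pointwise or almost sure convergence, so a direct dominated convergence argument on the inner conditional expectation is unavailable. The decomposition into $h_0$ plus a remainder controlled by $\mathbb{E}|h_n - h_0|$ circumvents this by exploiting the uniform bound $|h_n|\leq 1$ and standard facts that convergence in probability of a uniformly bounded sequence implies $L^1$-convergence. A secondary technical point is the $o_p(1)$ Slutsky step: one must verify that $\mathbf{A}_n^{-1}\mathbf{V}_n^{1/2} - \mathbf{A}_0^{-1}\mathbf{V}_0^{1/2} = o_p(1)$ (using continuity of Cholesky and matrix inverse at the nonsingular limits $\mathbf{A}_0$ and $\mathbf{V}_0$), which together with the $O_p(1)$ bounds on $\boldsymbol{u}_n$ and $\mathcal{E}_n$ closes the proof via Slutsky.
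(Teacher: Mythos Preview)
Your proof is correct and follows essentially the same strategy as the paper: decompose $\Delta_n = \mathbf{A}_n^{-1}\mathbf{V}_n^{1/2}\boldsymbol{u}_n + \mathbf{A}_n^{-1}\mathcal{E}_n$, then exploit that the conditional limit of $\boldsymbol{u}_n$ given $\mathbf{\hat B}_n$ is a \emph{constant} (standard normal) to obtain asymptotic independence of $\boldsymbol{u}_n$ and $\mathcal{E}_n$, combined with Slutsky for $\mathbf{A}_n,\mathbf{V}_n$. The only difference is tooling: the paper works directly with conditional CDFs (establishing a preliminary lemma that the unconditional CDF of $\boldsymbol{u}_n$ tends to $\Phi$, then arguing that $\plim\mathbb{P}(\Delta_n\preceq\boldsymbol{t}\mid\mathcal{E}_n)=\plim\mathbb{P}(\boldsymbol{\psi}_n\preceq\boldsymbol{t}\mid\mathcal{E}_n)$ and taking expectations via bounded convergence), whereas you go through the joint characteristic function, which is a more standard and arguably cleaner way to formalize the same asymptotic-independence step.
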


The proof of Theorem \ref{theo:dist} is presented in Appendix \ref{append:dist}. Since $\Delta_n = \mathbf{A}_n^{-1}\mathbf{V}_n^{1/2}\boldsymbol{u}_{n}(\mathbf{y}_n,~\mathbf{\hat{B}}_n) + \mathbf{A}_n^{-1}\mathcal{E}_n$ and $\boldsymbol{u}_{n}(\mathbf{y}_n,~\mathbf{\hat{B}}_n)$ is asymptotically normally distributed, we show that $\boldsymbol{u}_{n}(\mathbf{y}_n,~\mathbf{\hat{B}}_n)$ can be substituted with $\boldsymbol{\zeta}$. It is noteworthy that this substitution is not trivial because $\boldsymbol{u}_{n}(\mathbf{y}_n,~\mathbf{\hat{B}}_n)$ and $\mathcal{E}_n$ may be correlated. However, we show that they are asymptotically independent; that is, both the asymptotic conditional distribution of $\boldsymbol{u}_{n}(\mathbf{y}_n,~\mathbf{\hat{B}}_n)$, given $\mathbf{\hat{B}}_n$ and the asymptotic unconditional distribution are the same (see Lemma \ref{lem:CLT}). In the expression of $\boldsymbol{\psi}_n$,  the sampling error from the FS is captured by the term $\mathbf{A}_0^{-1}\mathcal{E}_n$, whereas $\mathbf{A}_0^{-1}\mathbf{V}_0^{1/2}\boldsymbol{\zeta}$ captures variability in the SS.\footnote{In Appendix \ref{append:dist:p}, we extend Theorem \ref{theo:dist} to the uniform convergence. We show that $\sup_{\boldsymbol{t}\in\mathbb{R}^{K_{\theta}}}\lvert \mathbb{P}(\mathbf{V}_n^{-1/2}\mathbf{A}_0\Delta_n \preceq \boldsymbol{t}) -  G(\boldsymbol{t})\rvert = o_p(1)$, where $G$ is the limiting distribution function of $\boldsymbol{\zeta} + \mathbf{V}_n^{-1/2}\mathcal{E}_n$.}

Similarly, we can also decompose the asymptotic variance of $\Delta_n$. Let $\boldsymbol{\Sigma}_n = \mathbb{V}(\boldsymbol{\dot q}_{n}(\mathbf{y}_n,~\mathbf{\hat{B}}_n))$ be the variance of the IF and  $\boldsymbol{\Sigma}_0 = \lim \boldsymbol{\Sigma}_n$ be the limit of this variance. By Slutsky's theorem, the asymptotic variance of $\Delta_n $ is $\mathbb{V}(\textstyle\Delta_0):=\mathbf{A}_0^{-1}\boldsymbol{\Sigma}_0\mathbf{A}_0^{-1}$. This expression is similar to the asymptotic variance formula for single-step M-estimators. Yet, a notable difference here is that the sampling error from the FS estimator is incorporated into $\boldsymbol{\Sigma}_0$. By the law of iterated variances, we have $\boldsymbol{\Sigma}_n = \mathbb{E}(\mathbf{V}_n) +  \mathbb{V}(\mathcal{E}_n)$. The first term on the right-hand side (RHS) is the asymptotic variance of the IF conditional on the FS sampling error. The second term is the variance due to the FS estimation. Using this equation, we establish the following result.

\begin{theorem}[Asymptotic Variance] Under Assumptions \ref{ass:converge:beta}--\ref{ass:Anosing}, the variance of the asymptotic distribution of $\Delta_0$ is given by $\mathbb{V}(\Delta_0) = \mathbf{A}_0^{-1}(\mathbf{V}_0 + \mathbb{V}(\mathcal{E}_0))\mathbf{A}_0^{-1}$.\label{theo:variance}
\end{theorem}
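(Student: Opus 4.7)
The plan is to combine the representation $\Delta_n = \mathbf{A}_n^{-1}\boldsymbol{\dot q}_n(\mathbf{y}_n,\mathbf{\hat B}_n)$ from \eqref{eq:sqrtntheta} with the identity $\mathbb{V}(\Delta_0) = \mathbf{A}_0^{-1}\boldsymbol{\Sigma}_0\mathbf{A}_0^{-1}$ already established via Slutsky's theorem and Assumption \ref{ass:Anosing} in the paragraph preceding the theorem, where $\boldsymbol{\Sigma}_0 = \lim \mathbb{V}(\boldsymbol{\dot q}_n(\mathbf{y}_n,\mathbf{\hat B}_n))$. Everything therefore reduces to showing that $\boldsymbol{\Sigma}_0 = \mathbf{V}_0 + \mathbb{V}(\mathcal{E}_0)$.

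The natural first step is to apply the law of total variance to the IF, conditioning on $\mathbf{\hat B}_n$, so that $\boldsymbol{\Sigma}_n = \mathbb{E}[\mathbb{V}(\boldsymbol{\dot q}_n|\mathbf{\hat B}_n)] + \mathbb{V}[\mathbb{E}(\boldsymbol{\dot q}_n|\mathbf{\hat B}_n)] = \mathbb{E}(\mathbf{V}_n) + \mathbb{V}(\mathcal{E}_n)$. The theorem then follows from two separate limit statements: $\mathbb{E}(\mathbf{V}_n)\to \mathbf{V}_0$ and $\mathbb{V}(\mathcal{E}_n)\to\mathbb{V}(\mathcal{E}_0)$. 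For the first, Assumption \ref{ass:IF}(ii) supplies $\mathbf{V}_n\xrightarrow{p}\mathbf{V}_0$ with a deterministic target, so it only remains to upgrade convergence in probability to convergence in mean via uniform integrability of the entries of $\mathbf{V}_n$. This is where Assumption \ref{ass:IF}(i) enters: by conditional Jensen, $\mathrm{tr}(\mathbf{V}_n)\leq \mathbb{E}(\lVert\boldsymbol{\dot q}_n\rVert^2|\mathbf{\hat B}_n)=\mu_2(\mathbf{\hat B}_n)$, and applying Jensen once more (to the convex $x\mapsto x^{\nu/2}$ with $\nu/2\geq 1$) gives $\mu_2(\mathbf{\hat B}_n)^{\nu/2}\leq \mu_\nu(\mathbf{\hat B}_n)$. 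Taking expectations yields $\sup_n \mathbb{E}(\lVert\mathbf{V}_n\rVert_F^{\nu/2})\leq \sup_n\mathbb{E}(\mu_\nu(\mathbf{\hat B}_n))<\infty$, and since $\nu/2>1$ the entries of $\mathbf{V}_n$ form a uniformly integrable family, delivering $\mathbb{E}(\mathbf{V}_n)\to \mathbf{V}_0$.

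For the second limit, Assumption \ref{ass:IF}(ii) gives $\mathcal{E}_n\xrightarrow{d}\mathcal{E}_0$. To pass convergence in distribution through to convergence of the second moments (and hence of the variance), the family $\{\lVert\mathcal{E}_n\rVert^2\}_n$ must be uniformly integrable. Conditional Jensen applied componentwise gives $\lVert\mathcal{E}_n\rVert^\nu\leq \mathbb{E}(\lVert\boldsymbol{\dot q}_n\rVert^\nu\mid \mathbf{\hat B}_n)=\mu_\nu(\mathbf{\hat B}_n)$, so Assumption \ref{ass:IF}(i) yields $\sup_n\mathbb{E}(\lVert\mathcal{E}_n\rVert^\nu)<\infty$ with $\nu>2$, which is exactly the moment bound needed for uniform integrability of $\lVert\mathcal{E}_n\rVert^2$. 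A standard Skorokhod-representation argument then delivers $\mathbb{E}(\mathcal{E}_n)\to\mathbb{E}(\mathcal{E}_0)$ and $\mathbb{E}(\mathcal{E}_n\mathcal{E}_n^\prime)\to\mathbb{E}(\mathcal{E}_0\mathcal{E}_0^\prime)$, whence $\mathbb{V}(\mathcal{E}_n)\to\mathbb{V}(\mathcal{E}_0)$. Combining the two limits with the total-variance decomposition gives $\boldsymbol{\Sigma}_0=\mathbf{V}_0+\mathbb{V}(\mathcal{E}_0)$, and sandwiching by $\mathbf{A}_0^{-1}$ finishes the proof.

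The only substantive obstacle is the passage from the modes of convergence in Assumption \ref{ass:IF}(ii) to convergence of the corresponding variances. Without the $\nu>2$ moment condition in Assumption \ref{ass:IF}(i), the in-probability limit of $\mathbf{V}_n$ need not transfer to $\mathbb{E}(\mathbf{V}_n)$, and the in-distribution limit of $\mathcal{E}_n$ need not transfer to $\mathbb{V}(\mathcal{E}_n)$; so the load-bearing step is the pair of conditional-Jensen bounds that produce uniform integrability, which is exactly what Assumption \ref{ass:IF}(i) was calibrated to give.
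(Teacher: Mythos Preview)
Your proof is correct and follows essentially the same route as the paper: the law of total variance decomposition $\boldsymbol{\Sigma}_n=\mathbb{E}(\mathbf V_n)+\mathbb{V}(\mathcal E_n)$, followed by passing each term to its limit via uniform integrability arguments grounded in the $\nu>2$ moment condition of Assumption~\ref{ass:IF}(\ref{ass:IF:moments}). Your conditional-Jensen bounds making the uniform integrability of $\mathbf V_n$ explicit are in fact more careful than the paper's brief claim that ``bounded expectation'' plus positive semidefiniteness suffices; the paper otherwise appeals to the same references (Bogachev, Chung) that your Skorokhod argument effectively unpacks.
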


Furthermore, Theorem \ref{theo:dist} implies that $\Delta_n$ is asymptotically normally distributed if $\mathcal{E}_0$ follows a normal distribution. However, $\Delta_n$ may exhibit a regularization bias that depends on the expectation of $\mathcal{E}_0$. This leads to the following result. 

\begin{corollary}[Asymptotic Normality]Under Assumptions  \ref{ass:converge:beta}--\ref{ass:CLT}, if $\mathcal{E}_0\sim N\big(\mathbb{E}(\mathcal{E}_0), ~\mathbb{V}(\mathcal{E}_0)\big)$, then $\sqrt{n}(\boldsymbol{\hat{\theta}}_n - \boldsymbol{\theta}_0)$ converges in distribution to $N\big(\mathbf{A}_0^{-1}\mathbb{E}(\mathcal{E}_0), ~\mathbf{A}_0^{-1}(\mathbf{V}_0 + \mathbb{V}(\mathcal{E}_0))\mathbf{A}_0^{-1}\big)$.\label{cor:normal} 
\end{corollary}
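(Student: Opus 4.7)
The plan is to obtain the corollary as an essentially immediate consequence of Theorem \ref{theo:dist} combined with Theorem \ref{theo:variance}, together with the stability of the Gaussian family under independent linear combinations.

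First, I would invoke Theorem \ref{theo:dist} to identify the asymptotic distribution of $\Delta_n = \sqrt{n}(\boldsymbol{\hat\theta}_n-\boldsymbol{\theta}_0)$ with that of $\boldsymbol{\psi}_n = \mathbf{A}_0^{-1}\mathbf{V}_0^{1/2}\boldsymbol{\zeta} + \mathbf{A}_0^{-1}\mathcal{E}_n$. Since $\mathcal{E}_n\rightsquigarrow \mathcal{E}_0$ by Assumption \ref{ass:IF}(\ref{ass:IF:dist}), and $\boldsymbol{\zeta}\sim N(0,\boldsymbol{I}_{K_\theta})$ is drawn independently of the sequence $\mathcal{E}_n$ in the construction of $\boldsymbol{\psi}_n$, the continuous mapping theorem gives that $\boldsymbol{\psi}_n$ converges in distribution to $\mathbf{A}_0^{-1}\mathbf{V}_0^{1/2}\boldsymbol{\zeta} + \mathbf{A}_0^{-1}\mathcal{E}_0$. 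I would specifically point back to Lemma \ref{lem:CLT} (invoked in the proof of Theorem \ref{theo:dist}) to justify that the noise term $\boldsymbol{\zeta}$ representing the limiting standardized influence function is asymptotically independent of $\mathcal{E}_0$; this is the only subtle point, and it has already been established earlier in the appendix.

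Next, under the hypothesis $\mathcal{E}_0 \sim N(\mathbb{E}(\mathcal{E}_0),~\mathbb{V}(\mathcal{E}_0))$, both summands of the limit are Gaussian: $\mathbf{A}_0^{-1}\mathbf{V}_0^{1/2}\boldsymbol{\zeta} \sim N(0,~\mathbf{A}_0^{-1}\mathbf{V}_0\mathbf{A}_0^{-1})$ (using symmetry of $\mathbf{A}_0$, and hence of $\mathbf{A}_0^{-1}$, together with $(\mathbf{V}_0^{1/2})(\mathbf{V}_0^{1/2})^\prime = \mathbf{V}_0$), and $\mathbf{A}_0^{-1}\mathcal{E}_0 \sim N(\mathbf{A}_0^{-1}\mathbb{E}(\mathcal{E}_0),~\mathbf{A}_0^{-1}\mathbb{V}(\mathcal{E}_0)\mathbf{A}_0^{-1})$. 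Because these two summands are independent, their sum is again Gaussian, with mean and variance obtained by addition. This delivers $\mathbf{A}_0^{-1}\mathbb{E}(\mathcal{E}_0)$ for the mean and $\mathbf{A}_0^{-1}(\mathbf{V}_0 + \mathbb{V}(\mathcal{E}_0))\mathbf{A}_0^{-1}$ for the variance, which matches Theorem \ref{theo:variance} and produces the stated distribution.

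The main obstacle, if any, is a purely expository one: ensuring that the independence between $\boldsymbol{\zeta}$ and $\mathcal{E}_0$ is not taken for granted. In the proof of Theorem \ref{theo:dist}, this independence is built in via Lemma \ref{lem:CLT}, so I would cite that lemma explicitly rather than re-derive it. Once independence is in place, everything else is a one-line application of the fact that independent affine transformations of jointly normal vectors remain normal.
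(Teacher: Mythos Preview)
Your proposal is correct and follows essentially the same approach as the paper: invoke Theorem \ref{theo:dist} to identify the limit of $\Delta_n$ with that of $\boldsymbol{\psi}_n$, pass to the limit $\boldsymbol{\psi}_0 = \mathbf{A}_0^{-1}\mathbf{V}_0^{1/2}\boldsymbol{\zeta} + \mathbf{A}_0^{-1}\mathcal{E}_0$ using the asymptotic independence of $\boldsymbol{\zeta}$ and $\mathcal{E}_0$, and then conclude normality from the sum of two independent Gaussian vectors. Your explicit appeal to Lemma \ref{lem:CLT} for the independence step is a helpful clarification that the paper's proof leaves implicit.
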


\noindent The regularization bias of $\Delta_n$ is given by  $\mathbf{A}_0^{-1}\mathbb{E}(\mathcal{E}_0)$. Corollary \ref{cor:normal} shares similarities with Theorem 1 of \cite{cattaneo2019two}. In the context of IV approaches with many instruments, they show that $\mathbf{V}_n^{-1/2}\mathbf{A}_0(\Delta_n - \mathbf{A}_0^{-1}\mathbb{E}(\mathcal{E}_n))$ is asymptotically normally distributed, where $\mathbf{A}_0^{-1}\mathbb{E}(\mathcal{E}_n)$ represents the bias of $\Delta_n$. Corollary \ref{cor:normal} generalizes this result to a broad class of models.

\subsection{Finite Sample Approximations}\label{sec:asymp:finite}
In this section, we discuss how to simulate the asymptotic distribution of $\boldsymbol{\psi}_n =  \mathbf{A}_0^{-1}\mathbf{V}_0^{1/2}\boldsymbol{\zeta} + \mathbf{A}_0^{-1}\mathcal{E}_n$ in finite samples. Since $\boldsymbol{\psi}_n$ and $\Delta_n$ share the same asymptotic distribution (Theorem \ref{theo:dist}), we can use the simulated distribution to infer $\boldsymbol{\theta}_0$.

\subsubsection{Simulating the Asymptotic Distribution}\label{sec:asymp:finite:simu}
We construct the sample $\mathcal{S}_{\kappa}=\{\boldsymbol{\hat \psi}_{n,s}: s = 1, ~\dots, ~ \kappa\}$ for some integer $\kappa \geq 1$, where $\boldsymbol{\hat \psi}_{n,1}$, \dots, $\boldsymbol{\hat \psi}_{n,\kappa}$ are independent variables with the same asymptotic distribution as $\boldsymbol{\psi}_n$. To obtain $\boldsymbol{\hat \psi}_{n,s}$, we replace the unknown nonstochastic quantities $\mathbf{A}_0$ and $\mathbf{V}_0$ in the expression of $\boldsymbol{\psi}_n$  with their estimators, and the random variables $\boldsymbol{\zeta}$ and $\mathcal{E}_n$ with independent draws from their (approximated) distributions. Specifically, $\boldsymbol{\hat \psi}_{n,s}$ is given by:
$$\boldsymbol{\hat \psi}_{n,s} =  \mathbf{\hat A}_n^{-1}\mathbf{\hat V}_n^{1/2}\boldsymbol{\zeta}_s + \mathbf{\hat A}_n^{-1}\hat{\mathcal{E}}_{n,s},$$
where $\mathbf{\hat A}_n$ and $\mathbf{\hat V}_n$ are respectively consistent estimators of $\mathbf A_0$ and $\mathbf{V}_0$, $\boldsymbol{\zeta}_1,~ \dots~\boldsymbol{\zeta}_{\kappa}$ are independent draws from $N(0, ~\boldsymbol{I}_{K_{\theta}})$, and $\hat{\mathcal{E}}_{n,1}$, \dots, $\hat{\mathcal{E}}_{n,\kappa}$ are independent draws from the (approximated) distribution of $\mathcal{E}_0$.  By Assumption \ref{ass:Anosing}, a consistent estimator of $\mathbf{A}_0$ is simply $\mathbf{\hat A}_n = -\frac{1}{n}\sum_{i = 1}^n \partial_{\boldsymbol{\theta}}\partial_{\boldsymbol{\theta}^\prime}q\big(\boldsymbol{\hat{\theta}}_n, ~y_{i}, ~\boldsymbol{x}_{i}, ~\boldsymbol{\hat{\beta}}_{n,i}\big)$. We will later discuss how to obtain $\mathbf{\hat V}_n$ and $\hat{\mathcal{E}}_{n,s}$.

The sample $\mathcal{S}_{\kappa}$ plays a crucial role. It can be used to construct confidence intervals for $\boldsymbol{\theta}_0$. Without loss of generality, assume that $\boldsymbol{\theta}_0$ is a scalar. Let $T_{\alpha}$ be the $\alpha$-quantile of the sample $\{\boldsymbol{\hat\theta}_n - \boldsymbol{\hat \psi}_{n,s}/\sqrt{n}: s = 1, ~\dots,~ \kappa\}$. Then, $[T_{\frac{\alpha}{2}}, ~ T_{1 - \frac{\alpha}{2}}]$ is a consistent estimator of the $(1 - \alpha)$ CI of $\boldsymbol{\theta}_0$, in the sense that $\lim_{\kappa\to\infty}\plim \mathbb{P}(\boldsymbol{\theta}_0 \in [T_{\frac{\alpha}{2}}, ~ T_{1 - \frac{\alpha}{2}}]) = 1 - \alpha$.\footnote{In practice, the integer $\kappa$ must be sufficiently large to ensure that the approximation error due to the number of simulations is negligible. Unlike resampling methods, increasing $\kappa$ does not introduce numerical issues, as we do not require many estimators of $\boldsymbol{\theta}_0$ from various samples.}

Furthermore, following Theorem \ref{theo:variance}, we can obtain a consistent estimator of the asymptotic variance of $\Delta_n$, by replacing the variance of $\mathcal{E}_0$ with the sample variance of $\hat{\mathcal{E}}_{n,s}$. This leads to the following estimator for the asymptotic variance of $\boldsymbol{\hat{\theta}}_n$:
\begin{equation}
    \mathbb{\hat V}_{n}(\boldsymbol{\hat{\theta}}_n) = \frac{\mathbf{\hat A}_n^{-1}\textstyle\boldsymbol{\hat \Sigma}_n^{\kappa}\mathbf{\hat A}_n^{-1}}{n},\label{eq:Vasy}
\end{equation}

where $\boldsymbol{\hat \Sigma}_n^{\kappa} = \hat{\mathbf{V}}_{n} + \frac{1}{\kappa - 1} \sum_{s = 1}^{\kappa} (\hat{\mathcal{E}}_{n,s} - \boldsymbol{\hat \Omega}_{n}^{\kappa}) (\hat{\mathcal{E}}_{n,s} - \boldsymbol{\hat \Omega}_{n}^{\kappa})^{\prime}$ and $\boldsymbol{\hat \Omega}_{n}^{\kappa} =  \frac{1}{\kappa}\sum_{s = 1}^{\kappa} \hat{\mathcal{E}}_{n,s}$. 

To obtain $\mathbf{\hat V}_n$ and $\hat{\mathcal{E}}_{n,s}$, we first need to compute $\mathbf{V}_n$ and $\mathcal{E}_{n}$, which are the conditional variance and conditional expectation of the IF, given $\mathbf{\hat{B}}_n$. In general, computing $\mathcal{E}_{n}$ is straightforward by replacing $\mathbf{y}_n$ in $\boldsymbol{\dot q}_{n}(\mathbf{y}_n,~\mathbf{\hat{B}}_n)$ with its specification (see Example \ref{eg:latent}). In this exercise, exogenous variables, such as $\mathbf{X}_n$, can be treated as nonstochastic, as is often done in practice. This simplification is innocuous and analogous to inferring $\boldsymbol{\theta}_0$ conditional on $\mathbf X_n$. Even if $\mathcal{E}_n$ does not have a closed-form expression, we can employ a large-sample approximation. For instance, $\mathcal{E}_n$ can be approximated by $(1/\sqrt{n})\sum_{i = 1}^n \boldsymbol{\dot q}_{n,i}(y_i,~\boldsymbol{\hat{\beta}}_{n,i})$.

Two scenarios may arise regarding the conditional variance $\mathbf{V}_n$. First, $\mathbf{\hat{B}}_n$ and $\mathbf{y}_n$ may be independent, conditional on the exogenous variables in the model. This occurs when the error terms in both stages are independent. An example is when the first stage involves predicting unobserved exogenous variables using auxiliary models that are independent of the second stage's error terms \cite[see][]{chernozhukov2018double, breza2020using, lubold2023identifying, boucher2020estimating}. In such cases, $\mathbf{V}_n$ can also be computed by substituting $\mathbf{y}_n$ with its specification. The calculations here are similar to those in a single-step approach. As for the $\mathcal{E}_{n}$, we can use a large-sample approximation if $\mathbf{V}_n$ does not have a closed form. For example, we can estimate $\mathbf{V}_n$ using a heteroskedasticity and autocorrelation consistent (HAC) covariance matrix \citep{andrews1991heteroskedasticity}.

The second scenario, which is more challenging, arises when the error terms in both stages are not independent. This renders intricate the dependence between $\boldsymbol{\dot q}_{n,i}(y_i,~\boldsymbol{\hat{\beta}}_{n,i})$'s conditional on $\mathbf{\hat{B}}_n$, making it difficult to estimate $\mathbf V_n$. This situation is typical in IV approaches and estimation methods that use $\mathbf{y}_n$ in the first stage \citep[][]{dufaysselective}.

One important context where we can address the problem includes linear IV methods. The SS of this approach consists of regressing $\mathbf y_n$ on $\mathbf{\hat X}_n$, where $\mathbf{\hat X}_n$ is the prediction of $\mathbf{X}_n$ using an OLS regression of $\mathbf{X}_n$ on an instrument set. To avoid the complication of treating the correlation between $\mathbf{\hat X}_n$ and the error term of the SS, we instead regress $\mathbf{\hat y}_n$ on $\mathbf{\hat X}_n$ in the SS, where $\mathbf{\hat y}_n$ is the prediction of $\mathbf y_n$ using an OLS regression of $\mathbf y_n$ on the same instrument set. The IF becomes $\boldsymbol{\dot q}_{n}(\mathbf{\hat y}_n,~\mathbf{\hat{B}}_n)$ and the FS estimator includes both $\mathbf{\hat y}_n$ and $\mathbf{\hat X}_n$. Conditional on $\mathbf{\hat y}_n$ and $\mathbf{\hat{B}}$, the IF is nonstochastic; i.e., $\mathbf{V}_n = 0$ and $\mathcal{E}_{n} = \boldsymbol{\dot q}_{n}(\mathbf{y}_n,~\mathbf{\hat{B}}_n)$. We conduct a simulation study with linear IV models (see DGP A and B) and related technical details are provided in OA \ref{sm:simu}.

Once we have $\mathbf{V}_n$ and $\mathcal{E}_n$, we can obtain $\mathbf{\hat V}_n$ and $\hat{\mathcal{E}}_{n,s}$. We recall that $\mathbf{\hat V}_n$ is an estimator of $\mathbf{V}_0 = \plim \mathbf{V}_n$, whereas $\hat{\mathcal{E}}_{n,s}$ are independent variables that share the asymptotic distribution of $\mathcal{E}_n$.  Since $\mathcal{E}_{n}$ and $\mathbf V_n$ are conditional moments, given $\mathbf{\hat B}_{n}$, they are functions of $\mathbf{\hat B}_{n}$. They can also depend on $\boldsymbol{\theta}_0$ and $\mathbf{B}_0 = (\boldsymbol{\beta}_{0,1},~\dots,~\boldsymbol{\beta}_{0,n})^{\prime}$ because the specification of $\mathbf y_n$ depends on $\boldsymbol{\theta}_0$ and $\mathbf B_0$. In the rest of this section, we thus use the notations $\mathbf V_n(\mathbf{\hat B}_{n}, ~\boldsymbol{\theta}_0, ~ \mathbf{B}_{0}) \equiv  \mathbf V_n$ and $\mathcal{E}_n(\mathbf{\hat B}_{n}, ~\boldsymbol{\theta}_0, ~ \mathbf{B}_{0}) \equiv \mathcal{E}_{n}$ to indicate that $\mathcal{E}_{n}$ and $\mathbf V_n$ are functions of $\mathbf{\hat B}_{n}$, $\boldsymbol{\theta}_0$, and $\mathbf{B}_0$. 

The estimator of $\mathbf V_0$ can be obtained by replacing $\boldsymbol{\theta}_0$ and $\mathbf{B}_{0}$ in $\mathbf V_n(\mathbf{\hat B}_{n}, ~\boldsymbol{\theta}_0, ~ \mathbf{B}_{0})$ with consistent estimators; i.e., $\mathbf{\hat V}_n = \mathbf V_n(\mathbf{\hat B}_{n}, ~\boldsymbol{\hat \theta}_n, ~ \mathbf{\hat B}_{n})$.
In contrast, we cannot obtain $\hat{\mathcal{E}}_{n,s}$ simply by replacing $\boldsymbol{\theta}_0$ and $\mathbf{B}_{0}$ in $\mathcal E_n(\mathbf{\hat B}_{n}, ~\boldsymbol{\theta}_0, ~ \mathbf{B}_{0})$ with consistent estimators. Indeed, what makes $\mathcal E_n(\mathbf{\hat B}_{n}, ~\boldsymbol{\theta}_0, ~ \mathbf{B}_{0})$ random is $\mathbf{\hat B}_{n}$. Consequently, an independent variable with the same asymptotic distribution as $\mathcal E_n(\mathbf{\hat B}_{n}, ~\boldsymbol{\theta}_0, ~ \mathbf{B}_{0})$ must be $\mathcal{E}_n(\mathbf{\bar B}_{n,s}, ~\boldsymbol{\theta}_0, ~ \mathbf{B}_{0})$, where $\mathbf{\bar B}_{n,s}$ and $\mathbf{\hat B}_{n}$ have the same asymptotic distribution. By replacing  $\boldsymbol{\theta}_0$ and $\mathbf{B}_{0}$ in $\mathcal{E}_n(\mathbf{\bar B}_{n,s}, ~\boldsymbol{\theta}_0, ~ \mathbf{B}_{0})$ with their estimators, we obtain $\hat{\mathcal{E}}_{n,s} = \mathcal{E}_n(\mathbf{\bar B}_{n,s}, ~\boldsymbol{\hat \theta}_n, ~ \mathbf{\hat B}_{n})$. In practice, we simulate $\mathbf{\bar B}_{n,s}$ from the estimator of the distribution of $\mathbf{\hat B}_{n}$.

Our approach requires the practitioner to possess a consistent estimator of the joint distribution of $\boldsymbol{\hat \beta}_{n,1}$, ~\dots,~ $\boldsymbol{\hat \beta}_{n,n}$. This estimator is obtained in the first stage for a broad class of models. For estimators of type $\boldsymbol{\hat\beta}_{n,i} = f(\boldsymbol{z}_i, \boldsymbol{\hat\gamma}_n)$, which encompass semiparametric methods, a simulation from an estimator of the distribution of $\mathbf{\hat{B}}_n$ is $\mathbf{\bar{B}}_{n,s} = (\boldsymbol{\bar\beta}_{n,1}^{(s)},~\dots,~\boldsymbol{\bar\beta}_{n,n}^{(s)})^{\prime}$, where  $\boldsymbol{\bar\beta}_{n,i}^{(s)} = f(\boldsymbol{z}_i, \boldsymbol{\bar\gamma}_{n}^{(s)})$ and $\boldsymbol{\bar\gamma}_{n}^{(s)}$ is simulated from an estimator of the asymptotic distribution of $\boldsymbol{\hat\gamma}_n$. From a frequentist perspective, the estimator of the distribution of $\boldsymbol{\hat \gamma}_n$ is typically derived through an asymptotic analysis (e.g., a normal distribution centered at $\boldsymbol{\hat\gamma}_n$ with some covariance matrix). In the Bayesian paradigm, we use the posterior distribution and obtain simulations by employing a Gibbs sampler or Metropolis-Hastings.

As pointed out above, $\mathbf V_n$ may not be tractable in some complex models when $\mathbf{\hat{B}}_n$ and the error term in the second stage are not independent. In these cases, we cannot construct the sample $\mathcal{S}_{\kappa}$. Instead, we apply Corollary \ref{cor:normal}; i.e., we infer $\Delta_n$ only when its asymptotic distribution is normal. However, since we can generally compute (or approximate) $\mathcal E_n$, then we can estimate the asymptotic mean of $\Delta_n$, thereby debasing $\boldsymbol{\hat \theta}_n$ (see DGP D in our simulation study). We present our debiasing technique in the next section.

\subsubsection{Bias Reduction}\label{sec:asymp:finite:bias}
Plug-in estimators may exhibit significant bias when the first-stage sampling error is substantial. This issue can arise when many covariates are included in the first-stage estimation (e.g., IV approach with many instruments) or when the number of observations in the first stage is low compared to $n$. Although $\boldsymbol{\hat\theta}_n$ can still be consistent in these cases, the limiting distribution of $\Delta_n$ may not have a zero mean \citep[e.g., see][]{belloni2014inference, belloni2017program, cattaneo2019two}.  In this section, we discuss how our method can be applied to address this issue.

Theorem \ref{theo:dist} implies that the asymptotic mean of $\Delta_n$ is $\mathbb{E}(\Delta_0)= \mathbf{A}_0^{-1}\mathbb{E}(\mathcal{E}_0)$. Our approach accommodates scenarios where $\mathbb{E}(\mathcal{E}_0)$ is not zero. Condition (\ref{ass:IF:dist}) of Assumption \ref{ass:IF} only states that $\mathcal{E}_n$ has a limiting distribution. The condition $\mathbb{E}(\mathcal{E}_0)\neq 0$ suggests that the plug-in estimator may exhibit significant finite sample bias. The good news is that we can estimate $\mathbf{A}_0$ and $\mathbb{E}(\mathcal{E}_0)$. Using these estimates, we propose a debiased estimator and establish its asymptotic distribution. We consider the estimator that is given by:
\begin{equation}
    \boldsymbol{\theta}_{n,\kappa}^{\ast} = \boldsymbol{\hat\theta}_n - \mathbf{\hat A}_n^{-1}\boldsymbol{\hat \Omega}_{n}^{\kappa}/\sqrt{n},\label{eq:thetast}
\end{equation}
where $\boldsymbol{\hat\Omega}_{n}^{\kappa} =  \frac{1}{\kappa}\sum_{s = 1}^{\kappa} \hat{\mathcal{E}}_{n,s}$ is an estimator of $\mathbb{E}(\mathcal{E}_0)$ as defined in Equation \eqref{eq:Vasy}.  The following theorem establishes the consistency of $\boldsymbol{\theta}_{n,\kappa}^{\ast}$ and its limiting distribution.

\begin{theorem}[Debiased Estimator]\label{theo:debias}
    Assume that Assumptions  \ref{ass:converge:beta}--\ref{ass:CLT} hold. Assume also that $\frac{1}{\kappa} \sum_{s = 1}^{\kappa} \hat{\mathcal{E}}_{n,s}$ converges in probability to $ \mathbb{E}(\mathcal{E}_{0})$ as $n$ and $\kappa$ grow to infinity. \\
    \begin{inparaenum}[(i)] \item  $\boldsymbol{\theta}_{n,\kappa}^{\ast}$ is a $\sqrt{n}$-consistent estimator of $\boldsymbol{\theta}_0$.\label{theo:debias:consistency}\\
     \item Let $\boldsymbol{\psi}_n^{\ast} =  \mathbf{A}_0^{-1}\mathbf{V}_n^{1/2}\boldsymbol{\zeta} + \mathbf{A}_0^{-1}(\mathcal{E}_n - \mathbb{E}(\mathcal{E}_{0}))$ and let $F^{\ast}$ be the limiting distribution function of $\boldsymbol{\psi}_n^{\ast}$; that is, $ F^{\ast}(\boldsymbol{t}) = \lim\mathbb{P}(\boldsymbol{\psi}_n^{\ast} \preceq \boldsymbol{t})$ for all $\boldsymbol{t}\in\mathbb{R}^{K_{\theta}}$, then $\lim_{\kappa \to \infty}\lim \mathbb{P}(\sqrt{n}(\boldsymbol{\theta}_{n,\kappa}^{\ast} - \boldsymbol{\theta}_0) \preceq \boldsymbol{t}) = F^{\ast}(\boldsymbol{t})$.\label{theo:debias:distribution}\\
     \item The limiting distribution of $\sqrt{n}(\boldsymbol{\theta}_{n,\kappa}^{\ast} - \boldsymbol{\theta}_0)$ has a zero mean and a variance given by $\mathbf{A}_0^{-1}(\mathbf{V}_0 + \mathbb{V}(\mathcal{E}_0))\mathbf{A}_0^{-1}$.\label{theo:debias:moments}
    \end{inparaenum}
\end{theorem}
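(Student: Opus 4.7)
The plan is to exploit the algebraic identity
\begin{equation*}
\sqrt{n}(\boldsymbol{\theta}_{n,\kappa}^{\ast} - \boldsymbol{\theta}_0) \;=\; \Delta_n \;-\; \mathbf{\hat A}_n^{-1}\boldsymbol{\hat \Omega}_{n}^{\kappa},
\end{equation*}
which follows immediately from the definition in \eqref{eq:thetast}, and then apply Theorem \ref{theo:dist} together with Slutsky's theorem. Under Assumption \ref{ass:Anosing} we have $\mathbf{\hat A}_n \xrightarrow{p} \mathbf{A}_0$, and by the hypothesis of the theorem $\boldsymbol{\hat \Omega}_{n}^{\kappa} = \frac{1}{\kappa}\sum_{s=1}^{\kappa}\hat{\mathcal{E}}_{n,s}\xrightarrow{p}\mathbb{E}(\mathcal{E}_0)$ as $n,\kappa\to\infty$, so by the continuous mapping theorem $\mathbf{\hat A}_n^{-1}\boldsymbol{\hat \Omega}_{n}^{\kappa}\xrightarrow{p}\mathbf{A}_0^{-1}\mathbb{E}(\mathcal{E}_0)$. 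Since Theorem \ref{theo:dist} guarantees that $\Delta_n = O_p(1)$, part (\ref{theo:debias:consistency}) is immediate: the right-hand side is a sum of an $O_p(1)$ term and a $O_p(1)$ term, yielding $\sqrt{n}$-consistency of $\boldsymbol{\theta}_{n,\kappa}^{\ast}$.

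For part (\ref{theo:debias:distribution}), I would argue by Slutsky's theorem. Theorem \ref{theo:dist} states $\Delta_n \Rightarrow \boldsymbol{\psi}_0 := \mathbf{A}_0^{-1}\mathbf{V}_0^{1/2}\boldsymbol{\zeta} + \mathbf{A}_0^{-1}\mathcal{E}_0$. Combining this with the probability limit of $\mathbf{\hat A}_n^{-1}\boldsymbol{\hat \Omega}_{n}^{\kappa}$ established above gives
\begin{equation*}
\sqrt{n}(\boldsymbol{\theta}_{n,\kappa}^{\ast} - \boldsymbol{\theta}_0) \;\Rightarrow\; \mathbf{A}_0^{-1}\mathbf{V}_0^{1/2}\boldsymbol{\zeta} + \mathbf{A}_0^{-1}\bigl(\mathcal{E}_0 - \mathbb{E}(\mathcal{E}_0)\bigr),
\end{equation*}
as $n,\kappa\to\infty$. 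The stated limiting law is the CDF of $\boldsymbol{\psi}_n^{\ast}$; this coincides with the displayed limit once we note that $\mathbf{V}_n\xrightarrow{p}\mathbf{V}_0$ by Condition (\ref{ass:IF:dist}) of Assumption \ref{ass:IF}, so another appeal to Slutsky allows us to freely replace $\mathbf{V}_0^{1/2}$ by $\mathbf{V}_n^{1/2}$ in the limit. This establishes $F^{\ast}(\boldsymbol{t}) = \lim_{\kappa\to\infty}\lim_{n\to\infty}\mathbb{P}(\sqrt{n}(\boldsymbol{\theta}_{n,\kappa}^{\ast} - \boldsymbol{\theta}_0) \preceq \boldsymbol{t})$.

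Finally, part (\ref{theo:debias:moments}) is a direct moment computation on $\boldsymbol{\psi}_0^{\ast} := \mathbf{A}_0^{-1}\mathbf{V}_0^{1/2}\boldsymbol{\zeta} + \mathbf{A}_0^{-1}(\mathcal{E}_0 - \mathbb{E}(\mathcal{E}_0))$. The mean is zero because $\mathbb{E}(\boldsymbol{\zeta})=0$ and the second summand has been centered by construction. The variance computation requires the asymptotic independence of $\boldsymbol{\zeta}$ and $\mathcal{E}_0$, which is precisely what is delivered by Lemma \ref{lem:CLT} invoked in the proof of Theorem \ref{theo:dist} (the standardized influence function $\boldsymbol{u}_n$, whose limit is $\boldsymbol{\zeta}$, is asymptotically independent of $\mathcal{E}_n$). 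Given this independence, the covariance cross-term vanishes and
\begin{equation*}
\mathbb{V}(\boldsymbol{\psi}_0^{\ast}) \;=\; \mathbf{A}_0^{-1}\mathbf{V}_0\mathbf{A}_0^{-1} + \mathbf{A}_0^{-1}\mathbb{V}(\mathcal{E}_0)\mathbf{A}_0^{-1} \;=\; \mathbf{A}_0^{-1}\bigl(\mathbf{V}_0 + \mathbb{V}(\mathcal{E}_0)\bigr)\mathbf{A}_0^{-1}.
\end{equation*}

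The principal subtlety, and the only nontrivial step, is managing the iterated limit in $n$ and $\kappa$: we need $\boldsymbol{\hat \Omega}_n^{\kappa}$ to converge to $\mathbb{E}(\mathcal{E}_0)$ jointly (not merely $\kappa$ first then $n$), which is explicitly imposed in the theorem's hypothesis. Everything else is a clean application of results already in place: Theorem \ref{theo:dist} for the baseline weak convergence, Assumption \ref{ass:Anosing} for the Hessian, and the asymptotic independence embedded in Lemma \ref{lem:CLT} for the variance decomposition.
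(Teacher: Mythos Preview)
Your proof is correct and follows essentially the same route as the paper's own argument: the algebraic identity $\sqrt{n}(\boldsymbol{\theta}_{n,\kappa}^{\ast}-\boldsymbol{\theta}_0)=\Delta_n-\mathbf{\hat A}_n^{-1}\boldsymbol{\hat\Omega}_n^{\kappa}$, the Slutsky step combining Theorem~\ref{theo:dist} with $\mathbf{\hat A}_n^{-1}\boldsymbol{\hat\Omega}_n^{\kappa}\xrightarrow{p}\mathbf{A}_0^{-1}\mathbb{E}(\mathcal{E}_0)$, and the direct moment computation on $\boldsymbol{\psi}_0^{\ast}$. One small remark: for part~(\ref{theo:debias:moments}) you need not appeal to Lemma~\ref{lem:CLT}, since in the definition of $\boldsymbol{\psi}_n^{\ast}$ the variable $\boldsymbol{\zeta}\sim N(0,\boldsymbol{I}_{K_\theta})$ is taken independent of $\mathcal{E}_n$ by construction, which already gives the vanishing cross-term.
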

\noindent As mentioned earlier, $\mathcal{E}_n(\mathbf{\bar B}_{n,s}, ~\boldsymbol{\theta}_0, ~ \mathbf{B}_{0})$ and $\mathcal{E}_n(\mathbf{\hat B}_{n}, ~\boldsymbol{\theta}_0, ~ \mathbf{B}_{0})$ have the same asymptotic distribution. By the LLN, $\frac{1}{\kappa}\sum_{s = 1} ^{\kappa}\mathcal{E}_n(\mathbf{\bar B}_{n,s}, ~\boldsymbol{\theta}_0, ~ \mathbf{B}_{0})$ converges in probability to $\mathbb{E}(\mathcal{E}_{0})$ as $n$ and $\kappa$ grow to infinity.\footnote{This requires that $\mathbb{E}(\lvert \mathcal{E}_n(\mathbf{\bar B}_{n,s}, ~\boldsymbol{\theta}_0, ~ \mathbf{B}_{0})\rvert^{\nu}) < \infty$ for some $\nu > 2$. Under this condition, the LLN implies that  $\frac{1}{\kappa}\sum_{s = 1} ^{\kappa}\mathcal{E}_n(\mathbf{\bar B}_{n,s}, ~\boldsymbol{\theta}_0, ~ \mathbf{B}_{0})$ converges in probability to $\mathbb{E}(\mathcal{E}_n(\mathbf{\bar B}_{n,s}, ~\boldsymbol{\theta}_0, ~ \mathbf{B}_{0}))$ as $\kappa \to \infty$. Since $\mathcal{E}_n(\mathbf{\bar B}_{n,s}, ~\boldsymbol{\theta}_0, ~ \mathbf{B}_{0})$ converges in distribution to $\mathcal{E}_0$ (Assumption \ref{ass:IF}, Condition (\ref{ass:IF:dist})), it follows that $\mathbb{E}(\mathcal{E}_n(\mathbf{\bar B}_{n,s}, ~\boldsymbol{\theta}_0, ~ \mathbf{B}_{0}))$ converges to $\mathbb{E}(\mathcal{E}_{0})$ as $n\to\infty$. This holds because $\mathbb{E}(\lvert \mathcal{E}_n(\mathbf{\bar B}_{n,s}, ~\boldsymbol{\theta}_0, ~ \mathbf{B}_{0})\rvert^{\nu}) < \infty$ \citep[see][Theorem 4.5.2]{chung2001course}.} 
Thus, by assuming that $\frac{1}{\kappa} \sum_{s = 1}^{\kappa} \hat{\mathcal{E}}_{n,s}$ converges in probability to $\mathbb{E}(\mathcal{E}_{0})$ as $n$ and $\kappa$ grow to infinity in Theorem \ref{theo:debias}, we indirectly impose that $\mathcal{E}_n(\mathbf{\bar B}_{n,s}, ~\boldsymbol{\theta}_0, ~ \mathbf{B}_{0})$ is smooth in $\boldsymbol{\theta}_0$ and $\mathbf{B}_{0}$, ensuring that both $\frac{1}{\kappa}\sum_{s = 1} ^{\kappa}\mathcal{E}_n(\mathbf{\bar B}_{n,s}, ~\boldsymbol{\theta}_0, ~ \mathbf{B}_{0})$ and $\hat{\mathcal{E}}_{n,s}$ to have the same limit. This condition is similar to Assumption \ref{ass:Anosing} when $\boldsymbol{\theta}_0$ and $\mathbf{B}_{0}$ in $\frac{1}{n}\sum_{i = 1}^n \partial_{\boldsymbol{\theta}}\partial_{\boldsymbol{\theta}^\prime}q(\boldsymbol{\theta}_0, ~y_{i}, ~\boldsymbol{x}_{i}, ~\boldsymbol{\beta}_{0,i})$ are replaced with consistent estimators (see OA \ref{sm:hessian}). 

As in the case of the standard plug-in estimator, we can construct the sample $\mathcal{S}^{\ast}_{\kappa} =\{\boldsymbol{\hat \psi}_{n,s}^{\ast}: s = 1, ~\dots, ~ \kappa\}$ analogous to $\mathcal{S}_{\kappa}$  to estimate the CDF $F^{\ast}$. The variables $\boldsymbol{\hat \psi}_{n,s}^{\ast}$ are given by $\boldsymbol{\hat\psi}_{n,s}^{\ast} = (\mathbf{\hat A}_n^{\ast})^{-1}(\mathbf{\hat V}_{n}^{\ast})^{1/2}\boldsymbol{\zeta}_{s} + (\mathbf{\hat A}_n^{\ast})^{-1}(\hat{\mathcal{E}}_{n,s}^{\ast} -\boldsymbol{\hat \Omega}_{n}^{\ast\kappa})$, where $\boldsymbol{\hat \Omega}_{n}^{\ast\kappa} =  \frac{1}{\kappa}\sum_{s = 1}^{\kappa} \hat{\mathcal{E}}_{n,s}^{\ast}$. The variables $\mathbf{\hat A}_n^{\ast}$, $\hat{\mathcal{E}}_{n,s}^{\ast}$, and $\mathbf{\hat V}_{n}^{\ast}$ are respectively defined as $\mathbf{\hat A}_n$, $\hat{\mathcal{E}}_{n,s}$, and $\mathbf{\hat V}_{n,}$ using $\boldsymbol{\theta}_{n,\kappa}^{\ast}$ and not $\boldsymbol{\hat\theta}_n$.

The asymptotic variance given in Statement (\ref{theo:debias:moments}) of Theorem \ref{theo:debias} is the same as that of Theorem \ref{theo:variance}. This implies that we can use the debiased estimator even though $\mathbf V_n$ cannot be estimated but the asymptotic distribution of $\sqrt{n}(\boldsymbol{\theta}_{n,\kappa}^{\ast} - \boldsymbol{\theta}_0)$ is normal (i.e., if $\mathcal{E}_0$ is normally distributed). In this case, it is not necessary to construct the sample $\mathcal{S}^{\ast}_{\kappa}$ since the asymptotic variance $\mathbf{A}_0^{-1}(\mathbf{V}_0 + \mathbb{V}(\mathcal{E}_0))\mathbf{A}_0^{-1}$ can be directly estimated, as is done in standard inference approaches \citep[e.g., see][]{newey1984method, newey1994asymptotic, ackerberg2012practical, chen2015sieve}.

One issue regarding the debiased estimator is that the estimate of the finite sample bias of $\boldsymbol{\hat\theta}_{n}$ may itself be biased if the estimator of the FS distribution is biased. This situation can arise in small samples when the FS parameters are bounded, and their estimators exhibit large variances. In such cases, draws from the FS distribution may hit their bounds, leading to a biased estimate of $\mathbb{E}(\mathcal{E}_{0})$. We illustrate this issue in our simulation study through a Copula-GARCH model. To mitigate the bias in the estimate of $\mathbb{E}(\mathcal{E}_{0})$, we replace the sample mean $\boldsymbol{\hat\Omega}_{n}^{\kappa}$ in Equation \eqref{eq:thetast} with the sample median of $\{\hat{\mathcal{E}}_{n,s}, ~s= 1, ~\dots, ~\kappa\}$. The median correction yields better performance by avoiding the problem of outliers in $\hat{\mathcal{E}}_{n,s}$. Theorem \ref{theo:debias} remains valid under the median correction if the distribution of $\mathcal{E}_{0}$ is symmetric.

\section{Monte Carlo Simulation}\label{sec:simu}
In this section, we conduct a Monte Carlo study to assess the finite sample performance of the proposed CDF estimator and the debiased estimator. 

\subsection{Data-Generating Processes (DGPs)}
We study four DGPs, denoted as DGP A--D. The sample size $n$ takes values in $\{250, ~500, ~1000, \break 2000\}$. DGP A is a treatment effect model with endogeneity. The model is defined as follows:
$$y_i = \theta_0d_i + \varepsilon_i, \quad d_i = \mathbbm{1}\{z_i > 0.5(\varepsilon_i + 1.2)\}, \quad z_i \sim \text{Uniform}[0, ~1], \quad \varepsilon_i \sim \text{Uniform}[-1, 1],$$

\noindent where $d_i$ is a treatment status indicator ($d_i = 1$ if $i$ is treated), $z_i$ is an instrument for the treatment, and $\theta_0 = 1$. The treatment is endogenous given that it is correlated with the error term $\varepsilon_i$. The practitioner observes an i.i.d sample of $(y_i, ~d_i, ~z_i)$. We estimate $\theta_0$ by the IV method. In the first stage, we predict $\mathbb{E}(y_i|z_i)$ using an OLS regression of $y_i$ on $z_i$. We also predict $\mathbb{E}(d_i|z_i)$ using an OLS regression of $d_i$ on $z_i$. In the second stage, we regress the prediction of  $\mathbb{E}(y_i|z_i)$ on the prediction of $\mathbb{E}(d_i|z_i)$.

DGP B is similar to DGP A with the difference that many regressors are involved in the first stage. The practitioner has access to $k_n = O(\sqrt{n})$ instruments, $z_{1,i}, ~\dots, ~z_{k_n,i}$.  We maintain the specification of $y_i$ from DGP A but change the treatment status as follows: 
$$d_i = \mathbbm{1}\{0.2 + z_{1,i} + z_{2,i} + z_{3,i} + z_{4,i} > 0.5(\varepsilon_i + 1.2)\}, \quad z_{1,i}, ~\dots, ~z_{k_n,i} \overset{i.i.d}{\sim} \text{Uniform}[0, ~0.2].$$
Only four instruments, $z_{1,i}, ~\dots, ~z_{4,i}$, are relevant for $d_i$, while the others are superfluous variables that are independent of $d_i$. Yet, we include all $k_n$ instruments in the FS regressions, creating a scenario with many weak instruments \citep{cattaneo2019two}. We consider the cases where $k_n = \big\lfloor 2\sqrt{n} \big\rceil$ and $k_n = \big\lfloor 4\sqrt{n} \big\rceil$, where $\lfloor . \rceil$ is the rounding to the nearest integer. 

DGP C is a Poisson model with a latent covariate that is defined as:
    $$y_i \sim \text{Poisson}(\exp(\theta_{0,1} + \theta_{0,2}p_i)), \quad p_i = \sin^2(\pi z_i), \quad z_i \sim \text{Uniform}[0, ~10], \quad d_i \sim \text{Bernoulli}(p_i),$$

\noindent where $p_i$ is an unobserved probability and $\boldsymbol{\theta}_0 = (\theta_{0,1}, ~\theta_{0,2})^{\prime} = (-0.8, ~2)^{\prime}$. We consider an i.i.d. sample comprising data points $(y_i, ~z_i, ~d_i)$. The practitioner observes the pairs $(y_i, ~z_i)$ for all $i$ but only observes $d_i$ for a representative subsample of size $n^{\ast} = \lfloor n^{\alpha_n} \rceil$. The parameter $\alpha_n$ takes values in $\{1, ~0.985, ~0.945, ~0.91\}$, in the same order as $n$, i.e., $\alpha_n = 1$ if $n = 250$, $\alpha_n = 0.985$ if $n = 500$, and so forth. In the FS, we estimate $p_i = \mathbb{E}(d_i|z_i)$ using a semiparametric regression of $d_i$ on $z_i$ in the subsample of size $n^{\ast}$ where $d_i$ is observed. We rely on a piecewise cubic spline regression \citep[see][]{hastie2017generalized}. The regression results can be used to estimate $p_i$ for all $i$ in the full sample, as we observe $z_i$ for all $i$. The second stage is a standard Poisson regression after replacing $p_i$ with its estimator. 

DGP D is a multivariate time-series model similar to the model used in the simulation study by \cite{gonccalves2022bootstrapping}. We consider $k_n$ returns $y_{1,i}$, ~\dots,~ $y_{k_n, i}$, where $i$ is time and $k_n \geq 2$. Each $y_{p,i}$, for $p = 2, ~\dots,~ k_n$, follows an AR(1)-GARCH(1, 1) defined as: 
$$y_{p,i} = \phi_{p,0} + \phi_{p,1}y_{p,i-1}+  \sigma_{p,i}\varepsilon_{p,i}, \quad \sigma_{p,i}^2 = \beta_{p,0}+\beta_{p,1}\sigma_{p,i-1}^2\varepsilon_{p,i-1}^2+\beta_{p,2}\sigma_{p,i-1}^2,$$
where $\phi_{p,0} = 0$, $\phi_{p,i-1} = 0.4$, $\beta_{p,0} = 0.05$, $\beta_{p,1} = 0.05$, $\beta_{p,2} = 0.9$, and $\varepsilon_{p,i}$ follows a standardized Student distribution of degree-of-freedom $\nu_p = 6$.  The number of returns, $k_n$, increases with the sample size and takes values in $\{2, ~3, ~5, ~8\}$, in the same order as $n$, i.e., $k_n = 2$ if $n = 250$, $k_n = 3$ if $n = 500$, and so forth. We account for the correlation between the returns using the Clayton copula \citep[see][]{nelsen2006introduction}. The joint density function of $y_i = (y_{1,i}, ~\dots,~ y_{p,i})^{\prime}$ conditional on $\mathcal{F}^{i-1}$ (information set at $i-1$) is given by $c_i(G_{1,i}(\boldsymbol{\beta}_{0,1}), ~\dots,~ G_{k_n,i}(\boldsymbol{\beta}_{0,k_n}), ~\theta_0)$, where $\boldsymbol{\beta}_{0,p} = (\phi_{p,0}, ~\phi_{p,1}, ~\beta_{p,0}, ~\beta_{p,1}, ~\beta_{p,2}, ~\nu_p)^{\prime}$, $G_{p,i}(\boldsymbol{\beta}_{0,p})$ is the CDF of $y_{p,i}$ conditional on $\mathcal{F}^{i-1}$, and $c_i$ is the PDF of $k_n$-dimensional Clayton copula of parameter $\theta_0 = 4$. The practitioner observes the sample $y_1,~\dots,~y_n$. We rely on a multiple-stage estimation strategy to estimate $\theta_0$. In the first $k_n$ stages, we separately estimate each $\boldsymbol{\beta}_{0,p}$  by applying an AR(1)-GARCH(1, 1) model to the sample $y_{p,1}, ~\dots, y_{p,n}$. In the last stage, we estimate $\theta_0$ by maximum likelihood (ML) after replacing $\boldsymbol{\beta}_{0,p}$ in the density function of $y_i$ with its estimator.

\subsection{Simulation Results}
This section presents the simulation results. We perform 10,000 simulations and set $\kappa$ to 1,000. We begin by examining the estimates of the CDF of $\Delta_n$ (see Figures \ref{fig:simuAB} and \ref{fig:simuCD}). We use the sample distribution of $\Delta_n$ resulting from the simulations as the benchmark for comparing our estimates.  The corresponding CDF is represented by the curve $F_0$.

For DGPs A, B, and C, the conditional variance $\mathbf V_n$ is tractable. We thus use the simulation method described in Section \ref{sec:asymp:finite:simu} to estimate the asymptotic CDF of $\Delta_n$. The average of the estimated CDFs (for the 10,000 simulations) is represented by the curve $\hat F_n$. The curve $\hat H_n$ displays the average CDF obtained using the assumption that the asymptotic distribution of $\Delta_n$ is normal with a zero mean. The variance of this normal distribution is estimated by our simulation method (Equation \eqref{eq:Vasy}). In contrast, for DGP D, the conditional variance $\mathbf V_n$ is not tractable. We thus rely on Corollary \ref{cor:normal} and directly estimate the asymptotic variance of $\Delta_n$ using a standard inference method for sequential extremum estimators \citep[see][Section 6.3]{newey1994large}. Both curves $\hat F_n$ and $\hat H_n$ are normal CDFs. The mean of the normal distribution for the curve $\hat F_n$ is estimated by our simulation method (see Section \ref{sec:asymp:finite:bias}), while $\hat H_n$ corresponds to a normal CDF with a zero mean. For each estimated CDF, we present in parentheses the $L_1$-Wasserstein distance to the sample CDF $F_0$.\footnote{The $L_1$-Wasserstein distance between a CDF $\hat R_n$ and $F_0$ is $\lVert \hat R_n - F_0 \rVert_w = {\displaystyle\varint}_{\mathbb R}\lvert \hat R_n(t) - F_0(t) \rvert dt$.} 

For DGP A, both the $\hat H_n$ and $\hat F_n$ approximations yield strong performance, with the $\hat H_n$ providing a better fit of the actual CDF according to the Wasserstein distance. The result is not surprising since the first- and second-stage estimators are finite-dimensional estimators of type M. Thus $\Delta_n$ is asymptotically normally distributed with a zero mean \citep[see][]{murphy2002estimation}. Importantly, even when the asymptotic normality is verified, accurately computing the variance of a plug-in estimator can be intricate. Equation \eqref{eq:Vasy}, which is used for $\hat H_n$, accurately approximates this variance. 

Including many superfluous variables in the first stage of an IV approach can lead to biased estimates \citep{cattaneo2019two}. We can observe this result with DGP B given that the true distribution of $\Delta_n$ is not centered at zero. Yet, our inference method captures this bias, whereas $\hat H_n$, which corresponds to standard inference approaches, does not. The bias of our estimated CDF is larger when $k_n = \big\lfloor 4\sqrt{n} \big\rceil$, but vanishes as $n$ grows.\footnote{We construct 95\% CIs for $\boldsymbol{\theta}_0$ using  $\hat F_n$ and $\hat H_n$, and evaluate their coverage rates. The results, which are presented in OA \ref{sm:crate}, show $\hat H_n$ leads to over-rejection of the null hypothesis that $\boldsymbol{\theta}_0$ is equal to its actual value.}

In the case of DGP C, the size of the FS sample does not grow at the same rate as $n$.  This makes most classical inference approaches inapplicable. Our simulation approach performs well for both $\sqrt{n}(\hat{\theta}_{n,1} - \theta_{0,1})$ and $\sqrt{n}(\hat{\theta}_{n,2} - \theta_{0,2})$, where $\hat{\theta}_{n,1}$ and $\hat{\theta}_{n,2}$ represent the respective estimators for $\theta_{0,1}$ and $\theta_{0,2}$. Because of the slow convergence rate in the first stage, the CDFs of $\sqrt{n}(\hat{\theta}_{n,1} - \theta_{0,1})$ and $\sqrt{n}(\hat{\theta}_{n,2} - \theta_{0,2})$ are not centered at zero and our approach captures this feature. Conversely, the normal approximations perform poorly. 

Simulation results for DGP D show that our approach performs well but is somewhat less accurate in small samples. This discrepancy arises because inference for GARCH models requires a large number of observations. Moreover, as the number of returns increases with $n$, $\sqrt{n}(\log(\hat{\theta}_n) - \log(\theta_0))$ exhibits bias. Our approach captures this bias, whereas the normal approximations perform poorly.

\begin{figure}[!htbp]
    \begin{center}
        DGP A
    \end{center}
    \hspace{-1.9cm}\includegraphics[scale = 0.75]{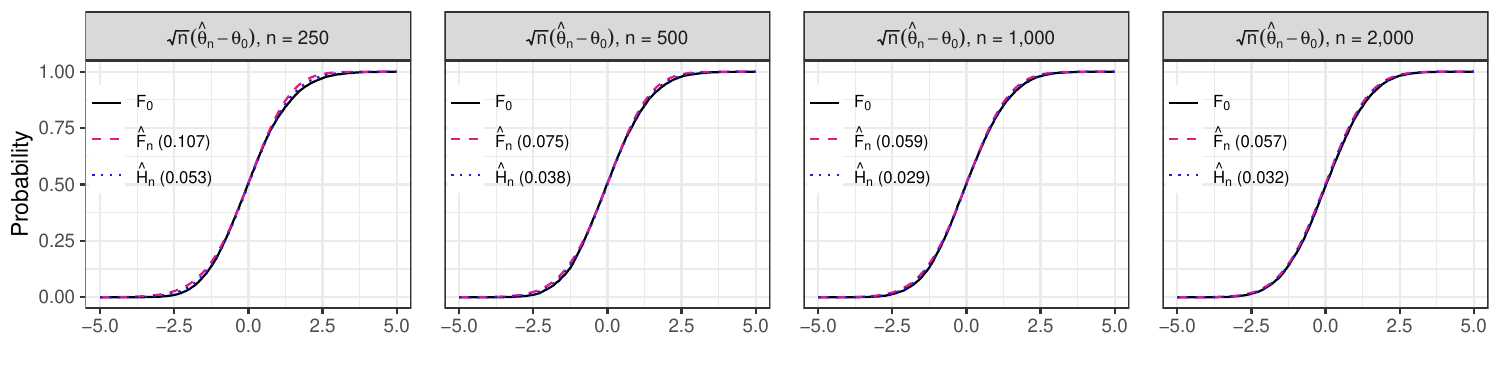}

    \vspace{-0.5cm}
    \begin{center}
        DGP B
    \end{center}
    \hspace{-1.9cm}\includegraphics[scale = 0.75]{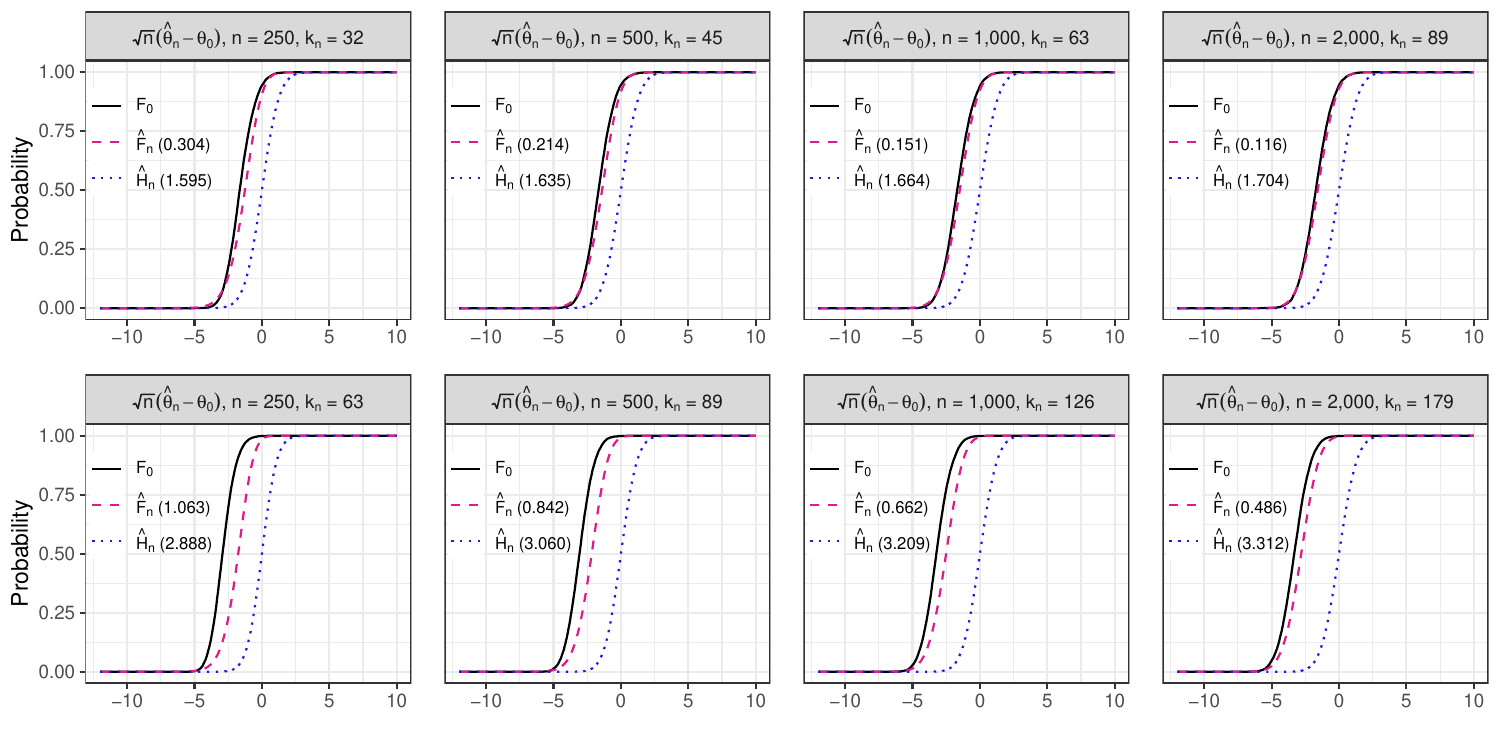}
    \caption{Monte Carlo Simulations: Estimates of asymptotic CDFs (DGPs A and B)}
    \label{fig:simuAB}

    \vspace{-0.2cm}
    \justify{\footnotesize This figure illustrates the average estimates of the CDF of $\sqrt{n}(\hat\theta_n - \theta_0)$ for DGPs A and B. $F_0$ is the true sample CDF. $\hat H_n$ represents the estimate based on asymptotic normality with a zero mean. $\hat F_n$ corresponds to the CDF estimate obtained through our simulation approach. The $L_1$-Wasserstein distance between each estimated CDF and $F_0$ is enclosed in parentheses.}
\end{figure}

\begin{figure}[!htbp]
    \begin{center}
        DGP C
    \end{center}
    \hspace{-1.9cm}\includegraphics[scale = 0.75]{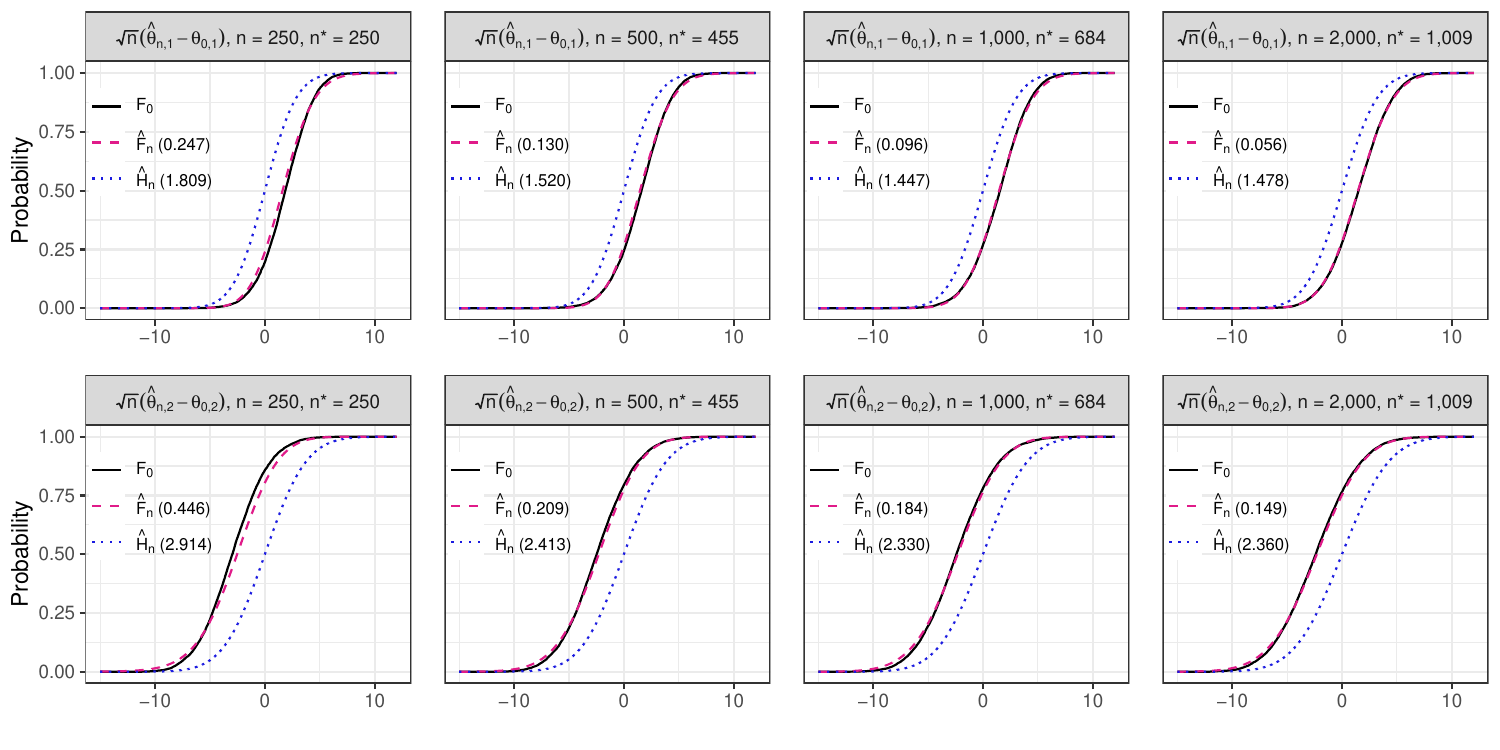}

    \vspace{-0.5cm}
    \begin{center}
      DGP D
    \end{center}
    \hspace{-1.9cm}\includegraphics[scale = 0.75]{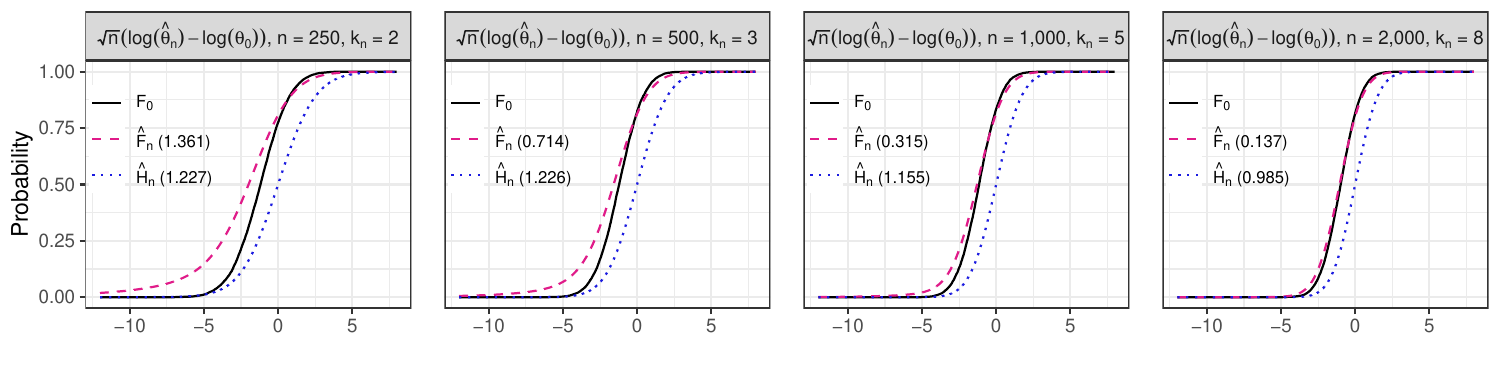}
    \caption{Monte Carlo Simulations: Estimates of asymptotic CDFs (DGPs C and D)}
    \label{fig:simuCD}

    \vspace{-0.2cm}
    \justify{\footnotesize This figure illustrates the average estimates of the CDF of $\sqrt{n}(\hat\theta_n - \theta_0)$ for DGPs C and D. $F_0$ is the true sample CDF. $\hat H_n$ represents the estimate based on asymptotic normality with a zero mean. $\hat F_n$ corresponds to the CDF estimate obtained through our simulation approach. The $L_1$-Wasserstein distance between each estimated CDF and $F_0$ is enclosed in parentheses.}
\end{figure}

We now turn to the finite sample performance of the debiased estimator (see Table \ref{tab:simubias}). For DGP A, where the classical plug-in estimator does not exhibit finite sample bias (because both stages result in a finite-dimensional extremum estimator), the debiased estimator closely aligns with the classical estimator. For DGP B, our debiased estimator significantly reduces the bias of the classical estimator. For example, with $2\sqrt{n}$ instrumental variables involved in the first stage, the bias of the classical plug-in estimator is substantial at $-0.101$ for $n = 250$. In contrast, the debiased estimator reduces this bias to $-0.017$. Not surprisingly, the bias reduction performs less well with $4\sqrt{n}$ instrumental variables but decreases the bias by more than 70\%.

The results are similar for DGP C. The biases of the estimates for $\theta_{0,1}$ and $\theta_{0,2}$ are respectively $0.114$ and $-0.184$ when $n = 250$ and $0.033$ and $-0.053$ when $n = 2000$. In contrast, these biases are negligible for the debiased estimator.

The bias correction performs less well for DGP D.  When $n = 2,000$, the debiased estimator's bias is ten times lower in absolute value (0.008 for the debiased estimator and $-$0.086 for the classical estimator). However, the correction is less effective in smaller samples. This result aligns with the discussion in Section \ref{sec:asymp:finite:bias}, which notes that the estimate of finite sample bias may be substantially biased when inequality constraints are imposed on the parameters in the first stage, especially if their estimates exhibit high variance. Constraints on GARCH model parameters include $\phi_{0,1}^2 < 1$ and $0< \beta_{p,1} + \beta_{p,3} < 1$. If the distribution from which we simulate in the first stage has high variance, the constraints may tighten for many draws, leading to an incorrect bias approximation. For instance, in the second stage, the standard deviation of the debiased estimator is 2.516, approximately 7 times higher than that of the classical estimator. 
In such situations, median correction is preferable. For $n = 250$, the debiased estimator using the median correction shows a lower bias of 0.088, which is 3 times smaller than the bias of the classical estimator.\footnote{In OA \ref{sm:simuR}, we present estimates of the asymptotic CDF of $\Delta_{n,\kappa}^{\ast}:=\sqrt{n}(\boldsymbol{\theta}_{n,\kappa}^{\ast} - \theta_0)$. Unlike the case of the classical plug-in estimator, the true sample CDFs are asymptotically centered at zero.}  

\renewcommand{\arraystretch}{1}
\begin{table}[!ht]
\centering
\caption{Parameter Estimates Summary}
\label{tab:simubias}
\begin{adjustbox}{max width=\linewidth}
\begin{threeparttable}
\small
\begin{tabular}{ld{3}d{3}d{3}d{3}d{3}cd{3}d{3}d{3}d{3}d{3}}
\toprule
                       & \multicolumn{5}{c}{Classical plug-in estimator}  &  & \multicolumn{5}{c}{Debiased plug-in estimator} \\
                   & \multicolumn{1}{c}{Mean}  & \multicolumn{1}{c}{Bias}   & \multicolumn{1}{c}{Sd}    & \multicolumn{1}{c}{RMSE}  & \multicolumn{1}{c}{MAE}   &  & \multicolumn{1}{c}{Mean}    & \multicolumn{1}{c}{Bias}     & \multicolumn{1}{c}{Sd}      & \multicolumn{1}{c}{RMSE}   & \multicolumn{1}{c}{MAE}    \\ \midrule
\multicolumn{6}{l}{DGP A: $\theta_0 = 1$}                                                       &  &         &         &         &         &        \\
n = 250          & 1.002          & 0.002          & 0.076        & 0.076        & 0.060        &  & 1.007    & 0.007    & 0.077  & 0.077  & 0.061  \\
n = 500          & 1.001          & 0.001          & 0.053        & 0.053        & 0.042        &  & 1.003    & 0.003    & 0.053  & 0.053  & 0.042  \\
n = 1000         & 1.001          & 0.001          & 0.037        & 0.037        & 0.030        &  & 1.002    & 0.002    & 0.037  & 0.037  & 0.030  \\
n = 2000         & 1.001          & 0.001          & 0.026        & 0.026        & 0.021        &  & 1.001    & 0.001    & 0.026  & 0.026  & 0.021  \\[0.2cm]
\multicolumn{6}{l}{DGP B: $k_n = \big\lfloor 2\sqrt{n} \big\rceil$, $\theta_0 = 1$}             &  &         &         &         &         &        \\
n = 250          & 0.899          & -0.101         & 0.061        & 0.118        & 0.104        &  & 0.983    & -0.017   & 0.076  & 0.077  & 0.062  \\
n = 500          & 0.927          & -0.073         & 0.045        & 0.086        & 0.076        &  & 0.992    & -0.008   & 0.053  & 0.054  & 0.043  \\
n = 1000         & 0.947          & -0.053         & 0.033        & 0.062        & 0.054        &  & 0.996    & -0.004   & 0.037  & 0.037  & 0.030  \\
n = 2000         & 0.962          & -0.038         & 0.024        & 0.045        & 0.039        &  & 0.998    & -0.002   & 0.026  & 0.026  & 0.021  \\[0.2cm]
\multicolumn{6}{l}{DGP B: $k_n = \big\lfloor 4\sqrt{n} \big\rceil$, $\theta_0 = 1$}             &  &         &         &         &         &        \\
n = 250          & 0.817          & -0.183         & 0.053        & 0.190        & 0.183        &  & 0.933    & -0.067   & 0.072  & 0.098  & 0.083  \\
n = 500          & 0.863          & -0.137         & 0.041        & 0.143        & 0.137        &  & 0.962    & -0.038   & 0.052  & 0.064  & 0.053  \\
n = 1000         & 0.899          & -0.101         & 0.031        & 0.106        & 0.101        &  & 0.979    & -0.021   & 0.037  & 0.043  & 0.035  \\
n = 2000         & 0.926          & -0.074         & 0.022        & 0.077        & 0.074        &  & 0.989    & -0.011   & 0.026  & 0.028  & 0.023  \\[0.2cm]
\multicolumn{6}{l}{DGP C: $\theta_{0,1} = -0.8$}                                                &  &         &         &         &         &        \\
n = 250          & -0.686         & 0.114          & 0.137        & 0.179        & 0.147        &  & -0.791   & 0.009    & 0.158  & 0.158  & 0.126  \\
n = 500          & -0.732         & 0.068          & 0.103        & 0.124        & 0.101        &  & -0.800   & 0.000    & 0.141  & 0.141  & 0.091  \\
n = 1000         & -0.754         & 0.046          & 0.076        & 0.089        & 0.072        &  & -0.801   & -0.001   & 0.082  & 0.082  & 0.065  \\
n = 2000         & -0.767         & 0.033          & 0.056        & 0.065        & 0.053        &  & -0.800   & 0.000    & 0.059  & 0.059  & 0.047  \\[0.2cm]
\multicolumn{6}{l}{DGP C: $\theta_{0,2} = 2$}                                                   &  &         &         &         &         &        \\
n = 250          & 1.816          & -0.184         & 0.174        & 0.253        & 0.212        &  & 1.983    & -0.017   & 0.210  & 0.211  & 0.168  \\
n = 500          & 1.892          & -0.108         & 0.131        & 0.170        & 0.140        &  & 1.999    & -0.001   & 0.188  & 0.188  & 0.120  \\
n = 1000         & 1.926          & -0.074         & 0.098        & 0.122        & 0.100        &  & 2.001    & 0.001    & 0.108  & 0.108  & 0.085  \\
n = 2000         & 1.947          & -0.053         & 0.073        & 0.090        & 0.073        &  & 2.000    & 0.000    & 0.079  & 0.079  & 0.063  \\[0.2cm]
\multicolumn{6}{l}{DGP D: $\theta_0 = 4$}                                                       &  &         &         &         &         &        \\
n = 250          & 3.724          & -0.276         & 0.384        & 0.473        & 0.389        &  & 4.506    & 0.506    & 2.516  & 2.566  & 0.735  \\
n = 500          & 3.794          & -0.206         & 0.226        & 0.306        & 0.253        &  & 4.140    & 0.140    & 0.937  & 0.947  & 0.298  \\
n = 1000         & 3.859          & -0.141         & 0.146        & 0.202        & 0.168        &  & 4.033    & 0.033    & 0.200  & 0.202  & 0.144  \\
n = 2000         & 3.914          & -0.086         & 0.096        & 0.129        & 0.106        &  & 4.008    & 0.008    & 0.105  & 0.106  & 0.083  \\[0.2cm]
\multicolumn{12}{l}{DGP D: $\theta_0 = 4$ (with a median correction for the debiased estimator)}        \\
n = 250          & 3.724          & -0.276         & 0.384        & 0.473        & 0.389        &  & 4.088    & 0.088    & 0.616  & 0.622  & 0.414  \\
n = 500          & 3.794          & -0.206         & 0.226        & 0.306        & 0.253        &  & 4.024    & 0.024    & 0.304  & 0.305  & 0.224  \\
n = 1000         & 3.859          & -0.141         & 0.146        & 0.202        & 0.168        &  & 4.003    & 0.003    & 0.168  & 0.168  & 0.131  \\
n = 2000         & 3.914          & -0.086         & 0.096        & 0.129        & 0.106        &  & 4.002    & 0.002    & 0.102  & 0.102  & 0.082   \\\bottomrule
\end{tabular}
\begin{tablenotes}[para,flushleft]
\footnotesize
This table displays the summary of the parameter estimates, including their mean, bias, standard deviations (sd), root mean square error (RMSE), and mean absolute error (MAE).
\end{tablenotes}
\end{threeparttable}
\end{adjustbox}
\end{table}
\renewcommand{\arraystretch}{1}

\section{Peer Effects in Adolescent Fast-Food Consumption Habits}\label{sec:appli} 
In this section, we revisit the empirical analysis conducted by \cite{fortin2015peer} on peer effects in adolescent fast-food consumption habits. Given the potential externalities that are associated with fast-food consumption and its link to overweight issues among adolescents, there may be justification for introducing a consumption tax on fast food. The optimal tax level hinges on the social multiplier of eating habits, emphasizing the need for an accurate measure of peer effects. We improve the popular IV estimator of peer effects by expanding the set of instruments, including \textit{many weak instruments}. We reduce the bias of the estimate using our approach.

\subsection{Estimation of Linear-in-Means Peer Effect Models}
This section presents the model used in this application. We consider a set of $R$ schools, where the number of students in the $r$-th school is denoted by $n_r$. Students within the same school interact. The network in the $r$-th school is represented by an adjacency matrix $\mathbf{G}_r = [g_{r,ij}]_{\substack{i = 1, ~\dots,~ n_r \\j = 1, ~\dots,~ n_r}}$, where $g_{r,ij} = 1$ if student $j$ is a friend of student $i$ and $g_{r,ij} = 0$ otherwise. We restrict friendships to the same school; students from different schools cannot be friends. Moreover, self-friendships are not allowed, in the sense that $g_{r,ii} = 0$ for all $r$ and $i$.
We consider the following linear-in-means peer effect model:
\begin{equation}
    \textstyle y_{r,i} = \alpha_{0,r} + \theta_{0,1}\sum_{j = 1}^{n_r}\frac{g_{r,ij}}{n_{r,i}} y_{r,j} + \boldsymbol{x}_{r,i}^{\prime}\boldsymbol{\theta}_{0,2} + \sum_{j = 1}^{n_r}\frac{g_{r,ij}}{n_{r,i}} \boldsymbol{x}_{r,j}^{\prime} \boldsymbol{\theta}_{0,3}  + \varepsilon_{r,i}, \label{eq:app:net}
\end{equation}
where $y_{r,i}$ is the weekly fast-food consumption frequency of student $i$ (reported frequency in days of fast-food restaurant visits in the past week), $\boldsymbol{x}_{r,i}$ is a vector of student $i$'s observable characteristics, $\varepsilon_{r,i}$ is an error term assumed to be independent of $\boldsymbol{x}_{r,i}$ and $\mathbf{G}_r$, and $n_{r,i} = \sum_{j = 1}^{n_r}g_{r,ij}$ is the number of friends of student $i$. The parameter $\theta_{0,1}$ captures peer effects, which measure the influence of an increase in the average friend's fast-food consumption frequency on one's fast-food consumption frequency.\footnote{The uniqueness of equilibrium in this model requires $\lvert \theta_{0,1} \rvert < 1$. That is, students do not increase their consumption frequency greater than the increase in
their average friends’ consumption frequency \citep[see][]{bramoulle2009identification}.} The parameter $\boldsymbol{\theta}_{0,2}$ reflects the effect of student characteristics, whereas $\boldsymbol{\theta}_{0,3}$ captures contextual effects, i.e., the influence of the average observable characteristics among friends. The parameter $\alpha_{0,r}$ accounts for unobserved effects of school characteristics, such as school location, regional taxes, and pricing policies.

The average friend's fast-food consumption frequency, which is measured by $\bar y_{r,i} = \frac{\sum_{j = 1}^{n_r}g_{r,ij} y_{r,j}}{n_{r,i}}$, is endogenous. Consequently, the classical OLS estimates of the parameters in \eqref{eq:app:net} are likely to be inconsistent.
Fortunately, one does not need to seek instruments for $\bar y_{r,i}$ elsewhere; they can be generated from the model. For the sake of clarity, we rewrite Equation \eqref{eq:app:net} in a matrix form for school $r$. Let $\mathbf{y}_r = (y_{r,1}, ~\dots, ~y_{r,n_s})^{\prime}$, $\mathbf{X}_r = (\boldsymbol{x}_{r, 1}, ~\dots, ~ \boldsymbol{x}_{r, n_r})^{\prime}$, and $\boldsymbol{\varepsilon}_r = (\varepsilon_{r,1}, ~\dots, ~\varepsilon_{r,n_s})^{\prime}$.\footnote{The notations $\mathbf{y}_r$ and $\mathbf{X}_r$ are only used in this section and must not be confused with $\mathbf{y}_n$ and $\mathbf{X}_n$ used elsewhere.} Let also the row-normalized adjacency matrix $\mathbf{\tilde G}_r = [\tilde g_{r,ij}]_{\substack{i = 1, ~\dots,~ n_r \\j = 1, ~\dots,~ n_r}}$, where $\tilde g_{r,ij} = 1/\sum_{j = 1}^{n_r} g_{r,ij}$ if $j$ is an $i$'s friend and $\tilde g_{r,ij} = 0$ otherwise. The linear-in-means peer effect model at the school level is:
\begin{equation}
   \mathbf{y}_r = \alpha_{0,r}\mathbf{1}_{n_r}  + \theta_{0,1}\mathbf{\tilde G}_r\mathbf{y}_r + \mathbf{X}_r\boldsymbol{\theta}_{0,2} + \mathbf{\tilde G}_r \mathbf{X}_r \boldsymbol{\theta}_{0,3}  + \boldsymbol{\varepsilon}_r, \label{eq:app:netmat}
\end{equation}

\noindent where $\mathbf{1}_{n_r}$ is an $n_r$-dimensional vector of ones. By premultiplying the terms of Equation \eqref{eq:app:netmat} by $\mathbf{\tilde G}_r$, we can observe that $\mathbf{\tilde G}_r^2 \mathbf{X}_r$ is correlated with the endogenous variable $\mathbf{\tilde G}_r\mathbf{y}_r$ if $\boldsymbol{\theta}_{0,3}\ne 0$. Since $\mathbf{\tilde G}_r^2 \mathbf{X}_r$ is not an explanatory variable in Equation \eqref{eq:app:netmat}, it can thus serve as an excluded instrument \citep[see][]{kelejian1998generalized, bramoulle2009identification}. This instrument is interpreted as the average friends of their average friends of the characteristics $\boldsymbol{x}_{r,i}$.

However, the instrument $\mathbf{\tilde G}_r^2 \mathbf{X}_r$ might suffer from weakness if $\boldsymbol{\theta}_{0,3} \approx 0$. To address this concern, we can show from Equation \eqref{eq:app:netmat} that:
\begin{equation}
\mathbb{E}(\mathbf{\tilde G}_r\mathbf{y}_r|\mathbf{X}_r, \mathbf{\tilde G}_r) = 
\mathbf{\tilde G}_r(\mathbf{I}_{n_r} - \theta_{0,1}\mathbf{\tilde G}_r)^{-1}(\alpha_{0,r}\mathbf{1}_{n_r} + \mathbf{X}_r\boldsymbol{\theta}_{0,2} + \mathbf{\tilde G}_r \mathbf{X}_r \boldsymbol{\theta}_{0,3}).\label{eq:app:EGy}
\end{equation}
Therefore, $\mathbb{E}(\mathbf{\tilde G}_r\mathbf{y}_r|\mathbf{X}_r, \mathbf{\tilde G}_r)$ can be used as an instrument. This approach is optimal since $\mathbb{E}(\mathbf{\tilde G}_r\mathbf{y}_r|\mathbf{X}_r, \mathbf{\tilde G}_r)$ fully captures exogenous variations of the endogenous variable $\mathbf{\tilde G}_r\mathbf{y}_r$.  In practice, employing this instrument entails a TS IV approach. A first IV method with $\mathbf{\tilde G}_r^2 \mathbf{X}_r$ as an instrument is used to estimate $\boldsymbol{\theta}_0 = (\alpha_{0,r}, ~\theta_{0,1}, ~\boldsymbol{\theta}_{0,2}^{\prime}, ~\boldsymbol{\theta}_{0,3}^{\prime})^{\prime}$. This estimate can also be employed to approximate $\mathbb{E}(\mathbf{\tilde G}_r\mathbf{y}_r|\mathbf{X}_r, \mathbf{\tilde G}_r)$ by replacing $\boldsymbol{\theta}_0$ in Equation \eqref{eq:app:netmat} with its estimate. A second IV method is performed with the estimate of $\mathbb{E}(\mathbf{\tilde G}_r\mathbf{y}_r|\mathbf{X}_r, \mathbf{\tilde G}_r)$ as an instrument to estimate $\boldsymbol{\theta}_0$.

While the optimal IV approach has been thoroughly considered in the literature, it may not entirely resolve the issue of weak instruments. Specifically, if the instrument $\mathbf{\tilde G}_r^2 \mathbf{X}_r$ that is used in the first IV approach is weak, the estimation of $\boldsymbol{\theta}_0$ can be biased, leading to a biased estimate for $\mathbb{E}(\mathbf{\tilde G}_r\mathbf{y}_r|\mathbf{X}_r, \mathbf{\tilde G}_r)$.
To circumvent this problem, we propose a new approach that consists of expanding the set of instruments for $\mathbf{\tilde G}_r\mathbf{y}_r$. As $(\mathbf{I}_{n_r} - \theta_{0,1}\mathbf{\tilde G}_r)^{-1} = \sum_{p = 0}^{\infty}\theta_{0,1}^p\mathbf{\tilde G}_r^p$, it can be shown from Equation \eqref{eq:app:EGy} that: 
$$\textstyle\mathbb{E}(\mathbf{\tilde G}_r\mathbf{y}_r|\mathbf{X}_r, \mathbf{\tilde G}_r) = 
\alpha_{0,r}\mathbf{\tilde G}_r(\mathbf{I}_{n_r} - \theta_{0,1}\mathbf{\tilde G}_r)^{-1}\mathbf{1}_{n_r} + \mathbf{\tilde G}_r\mathbf{X}_r\boldsymbol{\theta}_{0,2} +  \sum_{p = 0}^{\infty}\theta_{0,1}^p\mathbf{\tilde G}_r^{2+p}\mathbf{X}_r(\theta_{0,1}\boldsymbol{\theta}_{0,2} + \boldsymbol{\theta}_{0,3}).$$
This suggests the use of $\mathbf{\tilde G}_r^{2+p}\mathbf{X}_r$, for $p=0,~1,~2, ~\dots, k_{\max}$ as instruments, where $k_{\max}$ can be as large as possible for the matrix of instruments to be full rank. This set of instruments can be interpreted as averages of $\boldsymbol{x}_{r,i}$ among close- and long-distance friends. We control for 25 student characteristics in $\boldsymbol{x}_{r,i}$ (see below) and set $k_{\max} = 9$. This leads to 250 excluded instruments for $\mathbf{\tilde G}_r\mathbf{y}_r$. Despite this large number of instruments, our inference method can be used to construct CIs for the parameters of the model. Moreover, following Theorem \ref{theo:debias}, we correct for the finite sample bias of the resulting IV estimator.

\subsection{Add Health Data}
We use data from the National Longitudinal Study of Adolescent to Adult Health (Add Health) survey. The purpose of this survey was to investigate how various social contexts (families, friends, peers, schools, neighborhoods, and communities) affect adolescents' health and risk behaviors. The survey provides nationally representative and detailed information on adolescents in grades 7--12 from 144 schools during the 1994--1995 school year in the United States (US).  Approximately 90,000 students were asked to complete a brief questionnaire covering demographics, family backgrounds, academic performance, and health-related behaviors, as well as friendship links (best friends within the same school, up to 5 females and up to 5 males). From this main sample, an in-home sample (core sample) of about 20,000 students was randomly selected. These students participated in a more extensive questionnaire featuring detailed questions. This subsample has been followed in subsequent waves of the survey.

We use the Wave II dataset, which encompasses most of the variables that are relevant to this study. This wave targets the subsample of 20,000 students tracked over time. However, only 16 schools, comprising approximately 3,000 students, were completely surveyed in Wave II. To avoid the issue of sampled networks, we focus on the sample that was derived from these 16 schools, where we can observe the entire list of nominated best friends, and all nominated friends are also surveyed. In addition to the Wave II dataset, we gather information on each student's race and their mother's background from Wave I.

Our dependent variable is the weekly fast-food consumption frequency, measured by the reported frequency (in days) of fast-food restaurant visits in the past week. The final sample consists of 2,735 students, with an equal distribution between boys and girls. We control for 25 observable characteristics in $\mathbf{X}_r$ such as students' gender, grade, race, weekly allowance, and parents' education and occupation. On average, students report consuming fast food 2.35 days per week. The average age of students at the time of Wave II data collection is 16.62 years. Additional details on the data summary can be found in Table \ref{tab:data} in OA \ref{sm:app:data}.

\vspace{-0.25cm}
\subsection{Estimation and Inference}
Figure \ref{fig:app} displays estimates of peer effects using our approach and alternative methods, including the OLS estimator, the classical IV (CIV) estimator, the optimal IV (OIV) estimator, our IV estimator with many instruments (IV-MI), and the corresponding debiased IV estimator with many instruments (DIV-MI).\footnote{See full results, including coefficients of control variables in OA \ref{sm:app:results}.} The OLS approach overlooks the endogeneity issue, whereas the CIV method uses  $\mathbf{\tilde G}_r^2 \mathbf{X}_r$ as an instrument. The OIV estimator employs the estimate of $\mathbb{E}(\mathbf{\tilde G}_r\mathbf{y}_r|\mathbf{X}_r, \mathbf{\tilde G}_r)$ as an instrument, replacing unknown parameters in Equation \eqref{eq:app:EGy} with their CIV estimates. 

The OLS estimate indicates that the peer effect parameter is significant. The estimate decreases from 0.192 to 0.150 when we control for school-fixed effects. In contrast, the CIV estimator has a large variance, indicating that the coefficient is not statistically significant. This imprecision is a consequence of the weakness of the instruments \citep[][]{mikusheva2022inference}, leading to a biased estimator toward the OLS one. For instance, although it is known that the model suffers from an endogeneity problem, the Hausman-Wu endogeneity test (not reported here) indicates that the OLS and CIV estimators are not significantly different. 

As discussed earlier, this issue also invalidates the OIV approach since biased CIV estimates are used to estimate the optimal instrument. Notably, we observe that the 95\% confidence interval of the OIV is even larger than that of CIV. While the estimator becomes more precise when we control for school-fixed effects, the results still indicate that peer effects are not significant.

These findings align with the results of \cite{fortin2015peer}. Their OIV estimate of the peer effect parameter is 0.110 with a standard error approximated at 0.395, indicating non-significance. Additionally, they implemented a quasi-maximum likelihood (QML) method, estimating peer effects at 0.129, but the coefficient is significant only at the 10\% level.

After expanding the pool of instruments, the IV estimator estimate with many instruments reveals significant peer effects. The estimate decreases from 0.276 to 0.208 after accounting for school-fixed effects. Moreover, the results highlight evidence of finite sample bias, as the confidence intervals are not centered on the estimates. This bias is a consequence of the numerous instrumental variables in the first stage. After correcting for this bias, the estimates slightly increase to 0.300 and 0.218, respectively. The increase is smaller when controlling for school-fixed effects.

In summary, our results highlight the presence of peer effects in adolescent fast-food consumption habits. These results have two important implications. First, key players in the network can play a crucial role as channels for influencing adolescent habits. The more key players are influenced, the greater the potential spread of this influence in the network \citep[see][]{ballester2006s, zenou2016key}. Second, the social multiplier becomes crucial in determining the impact of a policy on adolescent fast-food consumption frequency. The social multiplier coefficient, given by $1/(1 - \theta_{0,1})$, is estimated at 1.279 for the model with fixed effects, considering the finite sample bias. This implies that the effect of a tax increase on fast-food consumption frequency, in an environment where adolescents do not interact with each other, must be multiplied by 1.279 when they interact \citep[see][]{agarwal2021thy}.

\begin{figure}[!htbp]
    \centering
    \includegraphics[scale = 0.8]{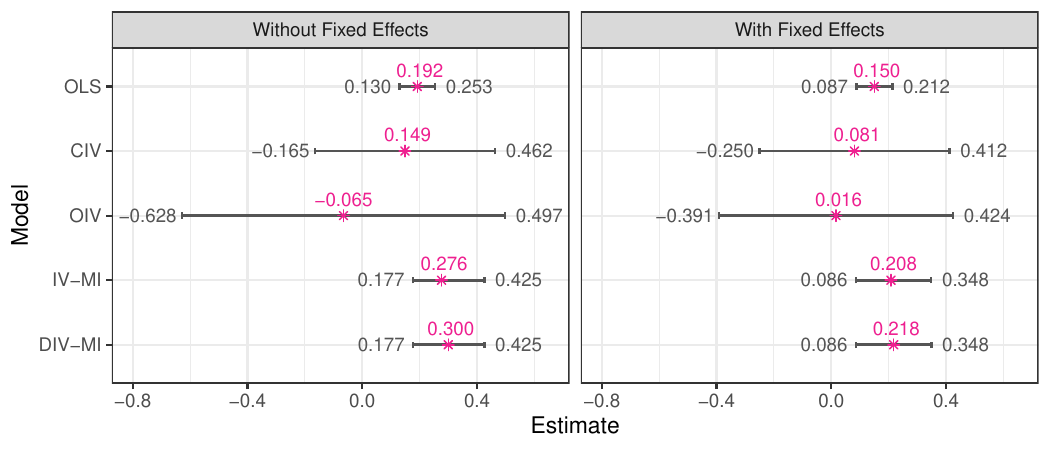}
    \caption{Peer Effect Estimates and Confidence Intervals}
    \label{fig:app}

    \vspace{-0.2cm}
    \justify{\footnotesize This figure presents peer effect estimates and confidence intervals. The asterisk symbols "$\ast$" denote the peer effect estimates, whereas the lines are the range of the 95\% CIs of the peer effects.}
\end{figure} 

\section{Conclusion}\label{sec:conc}
This paper proposes a simulation-based approach to estimate the asymptotic CDF of two-stage estimators. We consider a broad class of estimators in the first stage and an extremum estimator in the second stage. 

A key challenge in inference using two-stage estimation methods is that the asymptotic distribution of the plug-in estimator is influenced by first-stage sampling error. We tackle this issue by disentangling the first-stage sampling error from the model error in the second stage. Notably, we consider the possibility that the limiting distribution of $\sqrt{n}(\boldsymbol{\hat\theta}_n - \boldsymbol{\theta}_0)$ may not be normal and not centered at zero. This flexibility enables bias reduction for the plug-in estimator, particularly in cases where first-stage sampling error induces significant bias in the second stage. We introduce a debiased plug-in estimator and demonstrate that its limiting distribution has a zero mean. We conduct an extensive simulation study confirming the finite sample performance of our debiased estimator.

We leverage the proposed approach to study peer effects on adolescents' fast-food consumption habits. We employ an instrumental variable (IV) approach and address the issue of biased estimates that can arise when instruments are weak. By incorporating the characteristics of both close and distant friends, we expand the IV set to include a larger number of weak instruments. We correct the finite sample bias of the IV estimator and provide valid inference.

This paper contributes to the emerging literature on inference methods when standard regularity conditions are violated. The proposed approach is straightforward to implement and does not impose restrictions on the class of first-stage estimators. Our method is also suitable for complex models. Unlike resampling methods, it avoids the need for multiple computations of the estimators, except when resampling is required to obtain the asymptotic distribution in the first stage.

\newpage
\appendix
\renewcommand\thefigure{\thesection.\arabic{figure}}  
\renewcommand\thetable{\thesection.\arabic{table}}  
\setcounter{figure}{0} 
\setcounter{table}{0} 
\section{Appendix -- Proofs}\label{Append:proof}

\subsection{Proof of Theorem \ref{theo:dist}}\label{append:dist}
Let $F_{u,n}(\mathbf{t}) = \mathbb{P}(\boldsymbol{u}_{n}(\mathbf{y}_n,~\mathbf{\hat{B}}_n) \preceq \boldsymbol{t})$ for all $\boldsymbol{t} \in \mathbb{R}^{K_{\theta}}$. We first state and show the following lemma.
\begin{lemma}[Unconditional Asymptotic Normality]\label{lem:CLT}
    Under Assumption \ref{ass:CLT}, the unconditional distribution of $\boldsymbol{u}_{n}(\mathbf{y}_n,~\mathbf{\hat{B}}_n)$ is asymptotically normal, in the sense that $\lim F_{u,n}(\mathbf{t}) = \Phi(\boldsymbol{t})$ for each $\boldsymbol{t}$.
\end{lemma}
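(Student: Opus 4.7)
The plan is to pass from the conditional statement in Assumption~\ref{ass:CLT} to the unconditional one by the law of total probability plus a bounded-convergence argument. Specifically, I would write
\begin{equation*}
F_{u,n}(\boldsymbol{t}) \;=\; \mathbb{P}(\boldsymbol{u}_{n}(\mathbf{y}_n,\mathbf{\hat{B}}_n)\preceq \boldsymbol{t}) \;=\; \mathbb{E}\bigl[\,\mathbb{P}(\boldsymbol{u}_{n}(\mathbf{y}_n,\mathbf{\hat{B}}_n)\preceq \boldsymbol{t}\mid \mathbf{\hat{B}}_n)\bigr],
\end{equation*}
and set $X_n(\boldsymbol{t}) := \mathbb{P}(\boldsymbol{u}_{n}(\mathbf{y}_n,\mathbf{\hat{B}}_n)\preceq \boldsymbol{t}\mid \mathbf{\hat{B}}_n)$. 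The target then reduces to showing $\mathbb{E}[X_n(\boldsymbol{t})] \to \Phi(\boldsymbol{t})$ for each fixed $\boldsymbol{t}\in\mathbb{R}^{K_\theta}$.

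The key step is to note that Assumption~\ref{ass:CLT} is precisely the statement that $X_n(\boldsymbol{t}) \xrightarrow{p} \Phi(\boldsymbol{t})$, i.e.\ the conditional CDF converges in probability to the (deterministic) value of the standard normal CDF at $\boldsymbol{t}$. Since $X_n(\boldsymbol{t})$ is a probability, it is uniformly bounded in $[0,1]$, hence uniformly integrable. The bounded convergence theorem therefore upgrades convergence in probability of $X_n(\boldsymbol{t})$ to convergence of its expectation, yielding $\mathbb{E}[X_n(\boldsymbol{t})]\to \Phi(\boldsymbol{t})$. Combined with the display above, this gives $\lim F_{u,n}(\boldsymbol{t}) = \Phi(\boldsymbol{t})$, which is the claim.

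There is essentially no hard step: the whole argument is a clean application of the tower property together with bounded convergence, and the fact that the limit in Assumption~\ref{ass:CLT} is a nonrandom constant (rather than a random CDF) is what makes the passage trivial. The only thing worth flagging in the write-up is that convergence in probability of a bounded sequence to a constant implies $L^1$ convergence, which is the precise statement being invoked; this can either be cited directly or proved in one line by writing $\mathbb{E}|X_n(\boldsymbol{t})-\Phi(\boldsymbol{t})| \le \varepsilon + \mathbb{P}(|X_n(\boldsymbol{t})-\Phi(\boldsymbol{t})|>\varepsilon)$ for arbitrary $\varepsilon>0$ and letting $n\to\infty$ then $\varepsilon\to 0$. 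No further machinery (characteristic functions, tightness, Slutsky) is needed.
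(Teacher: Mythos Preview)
Your proposal is correct and is essentially the same argument as the paper's: write $F_{u,n}(\boldsymbol{t})=\mathbb{E}[\mathbb{P}(\boldsymbol{u}_n\preceq\boldsymbol{t}\mid\mathbf{\hat{B}}_n)]$ by the tower property, then use boundedness of the conditional probability together with Assumption~\ref{ass:CLT} to pass the limit through the expectation. The paper cites the Lebesgue--Vitali theorem for this interchange, which is exactly the uniform-integrability/bounded-convergence step you invoke.
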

\begin{proof}
    We have $\displaystyle F_{u,n}(\mathbf{t}) = \mathbb{E}[\mathbb{P}(\boldsymbol{u}_{n}(\mathbf{y}_n,~\mathbf{\hat{B}}_n) \preceq \boldsymbol{t}|\mathbf{\hat{B}}_n)]$, where the expectation is taken w.r.t $\mathbf{\hat{B}}_n$. Thus, $\displaystyle \lim F_{u,n}(\mathbf{t}) = \lim \mathbb{E}\{\mathbb{P}(\boldsymbol{u}_{n}(\mathbf{y}_n,~\mathbf{\hat{B}}_n) \preceq \boldsymbol{t}|\mathbf{\hat{B}}_n)\}$. Given that $\lvert \mathbb{P}(\boldsymbol{u}_{n}(\mathbf{y}_n,~\mathbf{\hat{B}}_n) \preceq \boldsymbol{t}|\mathbf{\hat{B}}_n)\rvert \leq 1$, we can interchange the expectation and the limit in the previous equation  \citep[see Lebesgue–Vitali convergence theorem in][Theorem 4.5.4]{bogachev2007measure}. It follows that $\displaystyle \lim F_{u,n}(\mathbf{t}) = \mathbb{E}\{\plim\mathbb{P}(\boldsymbol{u}_{n}(\mathbf{y}_n,~\mathbf{\hat{B}}_n) \preceq \boldsymbol{t}|\mathbf{\hat{B}}_n)\}$. By Assumption \ref{ass:CLT}, $\plim\mathbb{P}(\boldsymbol{u}_{n}(\mathbf{y}_n,~\mathbf{\hat{B}}_n) \preceq \boldsymbol{t}|\mathbf{\hat{B}}_n) = \Phi(\boldsymbol{t})$. Hence, $\lim F_{u,n}(\mathbf{t}) = \Phi(\boldsymbol{t})$.
\end{proof}
We now show Theorem \ref{theo:dist}. By substituting $\boldsymbol{\dot q}_{n}(\mathbf{y}_n,~\mathbf{\hat{B}}_n) = \mathbf{V}_n^{1/2}\boldsymbol{u}_{n}(\mathbf{y}_n,~\mathbf{\hat{B}}_n) + \mathcal{E}_n$ in Equation \eqref{eq:sqrtntheta} we obtain
$\Delta_n = \mathbf{A}_n^{-1}\mathbf{V}_n^{1/2}\boldsymbol{u}_{n}(\mathbf{y}_n,~\mathbf{\hat{B}}_n) + \mathbf{A}_n^{-1}\mathcal{E}_n$. By Lemma \ref{lem:CLT}, $\boldsymbol{u}_{n}(\mathbf{y}_n, \mathbf{\hat{B}}_n)$ converges in distribution to a standard normal distribution. 
Thus, by the Slutsky theorem, $\mathbf{A}_n^{-1}\mathbf{V}_n^{1/2}\boldsymbol{u}_{n}(\mathbf{y}_n,~\mathbf{\hat{B}}_n)$ converges in distribution to $\mathbf{A}_0^{-1}\mathbf{V}_0^{1/2}\boldsymbol{\zeta}$, where $\boldsymbol{\zeta} \sim N(0, \boldsymbol{I}_{K_{\theta}})$.
Importantly, this convergence is regardless of the distribution $\mathbf{\hat{B}}_n$, and thus, regardless of the distribution of $\mathcal{E}_n$, since the latter is random only because of $\mathbf{\hat{B}}_n$.  Therefore, the conditional distribution of $
\Delta_n = \mathbf{A}_n^{-1}\mathbf{V}_n^{1/2}\boldsymbol{u}_{n}(\mathbf{y}_n,~\mathbf{\hat{B}}_n) + \mathbf{A}_n^{-1}\mathcal{E}_n$, given $\mathcal{E}_n$, and the conditional distribution of $\boldsymbol{\psi}_n = \mathbf{A}_0^{-1}\mathbf{V}_0^{1/2}\boldsymbol{\zeta} + \mathbf{A}_0^{-1}\mathcal{E}_n$, given $\mathcal{E}_n$, have the same limit. Put differently, $\plim\mathbb{P}(\Delta_n \preceq \boldsymbol{t} |\mathcal{E}_n) = \plim\mathbb{P}(\boldsymbol{\psi}_n \preceq \boldsymbol{t}|\mathcal{E}_n)$, which implies that $\mathbb{E}\big(\plim\mathbb{P}(\Delta_n \preceq \boldsymbol{t}  |\mathcal{E}_n)\big) = \mathbb{E}\big(\plim\mathbb{P}(\boldsymbol{\psi}_n \preceq \boldsymbol{t}  |\mathcal{E}_n)\big)$. As is the case in the proof of Lemma \ref{lem:CLT}, we can interchange the expectation and the limit because $\lvert \mathbb{P}(\Delta_n \preceq \boldsymbol{t}  |\mathcal{E}_n)\rvert \leq 1$ and $\lvert \mathbb{P}(\boldsymbol{\psi}_n \preceq \boldsymbol{t}  |\mathcal{E}_n)\rvert \leq 1$. Thus, $\lim\mathbb{E}\big(\mathbb{P}(\Delta_n \preceq \boldsymbol{t}  |\mathcal{E}_n)\big) = \lim\mathbb{E}\big(\mathbb{P}(\boldsymbol{\psi}_n \preceq \boldsymbol{t}  |\mathcal{E}_n)\big)$. As a result, $\lim\mathbb{P}(\Delta_n \preceq \boldsymbol{t}) = \lim\mathbb{P}(\boldsymbol{\psi}_n \preceq \boldsymbol{t})$. This completes the proof of Theorem \ref{theo:dist}.

\subsection{Proof of Theorem \ref{theo:variance}}\label{append:variance}
Equation \eqref{eq:sqrtntheta} is given by \begin{equation*}
        \textstyle\Delta_n=\sqrt{n}(\boldsymbol{\hat{\theta}}_n - \boldsymbol{\theta}_0)  = \mathbf{A}_n^{-1}\boldsymbol{\dot q}_{n}(\mathbf{y}_n,~\mathbf{\hat{B}}_n)\,.
\end{equation*} Let $\boldsymbol{\Sigma}_n = \mathbb{V}(\boldsymbol{\dot q}_{n}(\mathbf{y}_n,~\mathbf{\hat{B}}_n))$ and $\boldsymbol{\Sigma}_0 = \lim\boldsymbol{\Sigma}_n$. The existence of $\boldsymbol{\Sigma}_n$ and $\boldsymbol{\Sigma}_0$ is guaranteed by Assumption \ref{ass:IF}. As $\mathbb{E}(\lVert \boldsymbol{\dot q}_{n}(\mathbf{y}_n,~\mathbf{\hat{B}}_n) \rVert^{\nu})<\infty$ for some $\nu > 2$, it follows that the limiting distribution of $\boldsymbol{\dot q}_{n}(\mathbf{y}_n,~\mathbf{\hat{B}}_n)$ has variance $\boldsymbol{\Sigma}_0$ \citep[see][Theorem 4.5.2]{chung2001course}. Moreover, Assumption \ref{ass:Anosing} implies that $\plim \mathbf{A}_n = \mathbf{A}_0$. Therefore, $\mathbb{V}(\textstyle\Delta_0) = \mathbf{A}_0^{-1}\boldsymbol{\Sigma}_0\mathbf{A}_0^{-1}$. By the law of iterated variances, we have: 
\begin{equation}
    \textstyle\boldsymbol{\Sigma}_n = \mathbb{E}(\mathbf{V}_n) +  \mathbb{V}(\mathcal{E}_n). \label{eq:Sigma}
\end{equation}
As $\mathbf{V}_n$ is a semi-positive definite matrix with bounded expectation (Assumptions  \ref{ass:IF}), it is also uniformly integrable. Therefore, $\mathbb{E}(\mathbf{V}_n)$ converges to $\mathbf{V}_0$ \citep[see Lebesgue–Vitali theorem in][Theorem 4.5.4]{bogachev2007measure}. Moreover, since $\mathcal{E}_n$ converges in distribution to $\mathcal{E}_0$ and $\mathbb{E}(\lVert\mathcal{E}_n\rVert^{\nu})<\infty $  for some $\nu>2$, it follows that $\mathbb{V}(\mathcal{E}_n)$ converges to $\mathbb{V}(\mathcal{E}_0)$ \citep[see][Theorem 4.5.2]{chung2001course}. As a result,
$
   \textstyle  \boldsymbol{\Sigma}_0 = \mathbf{V}_0 +  \mathbb{V}(\mathcal{E}_0)$ and $\mathbb{V}(\textstyle\Delta_0) = \mathbf{A}_0^{-1}(\mathbf{V}_0 +  \mathbb{V}(\mathcal{E}_0))\mathbf{A}_0^{-1}$.

\subsection{Proof of Corollary \ref{cor:normal}}\label{append:cor:normal}
By Condition (\ref{ass:IF:dist}) of Assumption \ref{ass:IF}, $\{\mathbf{V}_n, ~\mathcal{E}_n\}$ converges in distribution to $\{\mathbf{V}_0, ~\mathcal{E}_0\}$. As $\{\mathbf{V}_n, ~\mathcal{E}_n\}$ is asymptotically independent of $\boldsymbol{\zeta}$, Theorem \ref{theo:dist} implies that $\sqrt{n}(\boldsymbol{\hat\theta}_n - \boldsymbol{\theta}_0)$ converges in distribution to $\boldsymbol{\psi}_0 :=  \mathbf{A}_0^{-1}\mathbf{V}_0^{1/2}\boldsymbol{\zeta} + \mathbf{A}_0^{-1}\mathcal{E}_0$. As $\mathbf{A}_0^{-1}\mathbf{V}_0^{1/2}\boldsymbol{\zeta}$ is normally distributed, it follows that $\boldsymbol{\psi}_0$ is also normally distributed if $\mathbf{A}_0^{-1}\mathcal{E}_0$  is normally distributed. As a result, $\Delta_n$ is normally distributed with mean $\mathbb{E}(\boldsymbol{\psi}_0) = \mathbf{A}_0^{-1}\mathbb{E}(\mathcal{E}_0)$ and variance $\mathbb{V}(\boldsymbol{\psi}_0) = \mathbf{A}_0^{-1}(\mathbf{V}_0 + \mathbb{V}(\mathcal{E}_0))\mathbf{A}_0^{-1}$.

\subsection{Proof of Theorem \ref{theo:debias}} \label{append:debias}
We have $\sqrt{n}(\boldsymbol{\theta}_{n,\kappa}^{\ast} - \boldsymbol{\theta}_0) = \Delta_n - \mathbf{\hat A}_n^{-1}\boldsymbol{\hat \Omega}_{n}^{\kappa}$. As $\Delta_n = O_p(1)$, $\plim \mathbf{\hat A}_n = \mathbf{A}_0$, and $\plim_{\kappa, n} \boldsymbol{\hat \Omega}_{n}^{\kappa} = \mathbb{E}(\mathcal{E}_0)$, it follows that $\sqrt{n}(\boldsymbol{\theta}_{n,\kappa}^{\ast} - \boldsymbol{\theta}_0) = O_p(1)$, that is, $\boldsymbol{\theta}_{n,\kappa}^{\ast}$ is a $\sqrt{n}$-consistent estimator of $\boldsymbol{\theta}_0$. This completes the proof of Statement (\ref{theo:debias:consistency}).

$\Delta_n$ has the same asymptotic distribution as $\boldsymbol{\psi}_n =  \mathbf{A}_0^{-1}\mathbf{V}_n^{1/2}\boldsymbol{\zeta} + \mathbf{A}_0^{-1}\mathcal{E}_n$ and $\mathbf{\hat A}_n^{-1}\boldsymbol{\hat \Omega}_{n}^{\kappa}$ converges in probability to the constant $ \mathbf{A}_0^{-1}\mathbb{E}(\mathcal{E}_0)$, as $\kappa$ and $n$ grow to infinity. Thus, by the Slutsky theorem,  $\sqrt{n}(\boldsymbol{\theta}_{n,\kappa}^{\ast} - \boldsymbol{\theta}_0) = \Delta_n - \mathbf{\hat A}_n^{-1}\boldsymbol{\hat \Omega}_{n}^{\kappa}$ has the same asymptotic distribution as $\boldsymbol{\psi}_n - \mathbf{A}_0^{-1} \mathbb{E}(\mathcal{E}_0)=  \mathbf{A}_0^{-1}\mathbf{V}_n^{1/2}\boldsymbol{\zeta} + \mathbf{A}_0^{-1}(\mathcal{E}_n - \mathbb{E}(\mathcal{E}_0))$. As a result, $\boldsymbol{\psi}_n^{\ast}$ and $\sqrt{n}(\boldsymbol{\theta}_{n,\kappa}^{\ast} - \boldsymbol{\theta}_0)$ share the same distribution as $\kappa$ and $n$ approach infinity.  This completes the proof of Statement (\ref{theo:debias:distribution}).

Statement (\ref{theo:debias:moments}) is a direct implication of Statement (\ref{theo:debias:distribution}). The variable $\boldsymbol{\psi}_n^{\ast}$ converges in distribution to $\boldsymbol{\psi}_0^{\ast} :=  \mathbf{A}_0^{-1}\mathbf{V}_0^{1/2}\boldsymbol{\zeta} + \mathbf{A}_0^{-1}(\mathcal{E}_0 - \mathbb{E}(\mathcal{E}_{0}))$, with $\mathbb{E}(\boldsymbol{\psi}_0^{\ast}) = 0$ and $\mathbb{V}(\boldsymbol{\psi}_0^{\ast}) = \mathbf{A}_0^{-1}(\mathbf{V}_0 + \mathbb{V}(\mathcal{E}_{0}))\mathbf{A}_0^{-1}$. As $\boldsymbol{\psi}_n^{\ast}$ and $\sqrt{n}(\boldsymbol{\theta}_{n,\kappa}^{\ast} - \boldsymbol{\theta}_0)$ have the same distribution, the result follows.

{\linespread{0.8}
\fontsize{9}{10}\selectfont
\bibliography{References}
\bibliographystyle{ecta}}

\newpage
\setcounter{page}{1}
\renewcommand{\theequation}{\thesection.\arabic{equation}}
\setcounter{equation}{0}
\setcounter{footnote}{0} 
\setcounter{section}{1} 
\setcounter{table}{0} 

\begin{mytitlepage}
\title{Supplemental Appendix for "Inference for Two-Stage Extremum Estimators"}
\maketitle

\begin{abstract}
{\linespread{1.2}\selectfont
\noindent This supplemental appendix includes additional results and technical details omitted from the main text.  Section \ref{sm:proof} discusses primitive conditions of some high-level assumptions introduced in the paper. Section \ref{sm:simu} provides technical details regarding our simulation study, including a description of how we approximate the asymptotic distributions. Section \ref{sm:app} presents detailed results of our empirical application with network data. }
\end{abstract}
\end{mytitlepage}

\bigskip
\section{Online Appendix--Proofs}\label{sm:proof}
\subsection{Consistency of Plug-in Estimators}\label{sm:cons}
We impose lower-level assumptions that result in a consistent plug-in estimator. For notational ease, we omit $\mathbf{y}_n$ and $\mathbf{X}_n$ in $Q_n(\boldsymbol{\theta}, ~\mathbf{y}_n, ~\mathbf{X}_n, ~\mathbf{B})$ and we simply write $Q_n(\boldsymbol{\theta}, ~\mathbf{B})$. We also write $q_i(\boldsymbol{\theta}, ~\boldsymbol{\beta}_{i})$ instead of $q(\boldsymbol{\theta},~ y_{i}, ~\boldsymbol{x}_{i}, ~\boldsymbol{\beta}_{i})$. For any $\boldsymbol{\theta}\in\boldsymbol{\Theta}$ and $a$, $\tau ~ > 0$, we define $$\textstyle\mathcal{H}_i(a, \tau) =\sup_{\lVert \boldsymbol{\beta}_{i} - \boldsymbol{\beta}_{0,i} \rVert < \tau} \dfrac{\lvert q_i(\boldsymbol{\theta},~ \boldsymbol{\beta}_{i}) - q_i(\boldsymbol{\theta},~ \boldsymbol{\beta}_{0,i})\rvert}{\lVert \boldsymbol{\beta}_{i} - \boldsymbol{\beta}_{0,i} \rVert^a},$$

\noindent where $\lVert . \rVert$ is the $\ell^2$-norm.
\begin{assumption}[Primitive Conditions for Assumption \ref{ass:consistent}] \label{ass:converge}\hfill\\
\begin{inparaenum}[(i)]
\item $\boldsymbol{\Theta}$ is a compact subset and $\boldsymbol{\theta}_0$ is an interior point of $\boldsymbol{\Theta}$. \label{ass:compact}\\
\item The function $Q_n(\boldsymbol{\theta}, ~\mathbf{B}_{0})$  converges uniformly in probability (across $\boldsymbol{\theta}\in\boldsymbol{\Theta}$) to a nonstochastic function $Q_0(\boldsymbol{\theta})$ that is maximized only at $\boldsymbol{\theta}_0$.\label{ass:converge:Q} \\
\item For any $\boldsymbol{\theta}\in\boldsymbol{\Theta}$, there exists constants $a(\boldsymbol{\theta})$, $\tau(\boldsymbol{\theta}) ~ > 0$, such that, $\max_i \mathcal{H}_i(a(\boldsymbol{\theta}), \tau(\boldsymbol{\theta})) = O_p(1)$.\label{ass:converge:holder}\\
\item There exists a sequence of neighborhoods $\mathcal{O}(\boldsymbol{\beta}_{0,1}), ~\dots, ~\mathcal{O}(\boldsymbol{\beta}_{0,n})$, such that $\partial_{\boldsymbol{\theta}}q_i(\boldsymbol{\theta}, ~\boldsymbol{\beta}_{i})$ is $O_p(1)$, uniformly across $\boldsymbol{\theta} \in \boldsymbol{\Theta}$, $\boldsymbol{\beta}_{i} \in \mathcal{O}(\boldsymbol{\beta}_{0,i})$ and $i$, that is,  $\max_{\boldsymbol{\theta} \in \boldsymbol{\Theta}, ~ \boldsymbol{\beta}_{i} \in \mathcal{O}(\boldsymbol{\beta}_{0,i}), ~i\leq n} \textstyle \lVert \partial_{\boldsymbol{\theta}}q_i(\boldsymbol{\theta},~ \boldsymbol{\beta}_{i})\rVert = O_p(1)$.\label{ass:converge:bounded}\end{inparaenum}
\end{assumption}

\noindent  
The compactness restriction in Condition (\ref{ass:compact}) allows for the plug-in estimator to converge in its support. Condition (\ref{ass:converge:Q}) is a classical identification condition also required for a standard M-estimator. Importantly, this condition does not involve the estimator $\mathbf{\hat{B}}_n$. It is an identification condition that is set at the true $\mathbf{B}_0$. Condition (\ref{ass:converge:holder}) implies that $\lvert q_i(\boldsymbol{\theta}, ~\boldsymbol{\beta}_{i}) - q_i(\boldsymbol{\theta}, ~\boldsymbol{\beta}_{0,i})\rvert \leq  \lVert \boldsymbol{\beta}_{i} - \boldsymbol{\beta}_{0,i} \rVert^{a(\boldsymbol{\theta})}O_p(1)$ for all $i$ and $\lVert \boldsymbol{\beta}_{i} - \boldsymbol{\beta}_{0,i} \rVert < \tau(\boldsymbol{\theta})$. A similar assumption is also imposed by \citeoa{cattaneo2019twooa} and requires $q_i(\boldsymbol{\theta}, ~\boldsymbol{\beta}_{i})$ to be smooth in $\boldsymbol{\beta}_{i}$. We use this condition and the uniform convergence of $\boldsymbol{\hat{\beta}}_{n,i}$ (Assumption \ref{ass:converge:beta}) to show that $Q_n(\boldsymbol{\theta}, ~\mathbf{\hat{B}}_{n}) - Q_n(\boldsymbol{\theta}, ~\mathbf{B}_{0})$ converges in probability to $0$ for each $\boldsymbol{\theta}$. Condition (\ref{ass:converge:bounded}) allows us to generalize this point-wise convergence to a uniform convergence.


\begin{proposition}\label{prop:consistent}
Under Assumptions \ref{ass:converge:beta}, \ref{ass:Qfonction}, and \ref{ass:converge}, the estimator $\boldsymbol{\hat{\theta}}_n$ converges in probability to $\boldsymbol{\theta}_0$.
\end{proposition}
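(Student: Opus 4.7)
The plan is to adapt the classical consistency argument for extremum estimators (essentially the argument behind Theorem 4.1.1 of Amemiya, 1985, already invoked in the main text) to the two-stage setting. The key intermediate step will be to establish the uniform negligibility of replacing $\mathbf{B}_0$ by $\mathbf{\hat B}_n$, namely
$$\sup_{\boldsymbol{\theta}\in\boldsymbol{\Theta}}\bigl\lvert Q_n(\boldsymbol{\theta},\mathbf{\hat B}_n)-Q_n(\boldsymbol{\theta},\mathbf{B}_0)\bigr\rvert = o_p(1).$$
Combined with Condition (\ref{ass:converge:Q}) of Assumption \ref{ass:converge} -- which gives uniform convergence of $Q_n(\boldsymbol{\theta},\mathbf{B}_0)$ to a deterministic $Q_0(\boldsymbol{\theta})$ uniquely maximized at $\boldsymbol{\theta}_0$ -- this yields $\sup_{\boldsymbol{\theta}\in\boldsymbol{\Theta}}\lvert Q_n(\boldsymbol{\theta},\mathbf{\hat B}_n) - Q_0(\boldsymbol{\theta})\rvert = o_p(1)$. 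A standard argmax continuous mapping argument on the compact $\boldsymbol{\Theta}$ then delivers $\plim\boldsymbol{\hat\theta}_n = \boldsymbol{\theta}_0$.

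\textbf{Pointwise step.} Fix any $\boldsymbol{\theta}\in\boldsymbol{\Theta}$. By Assumption \ref{ass:converge:beta}, with probability approaching one, $\max_i\lVert\boldsymbol{\hat\beta}_{n,i}-\boldsymbol{\beta}_{0,i}\rVert < \tau(\boldsymbol{\theta})$, so that the H\"older-type bound of Condition (\ref{ass:converge:holder}) applies uniformly across $i$:
$$\bigl\lvert q_i(\boldsymbol{\theta},\boldsymbol{\hat\beta}_{n,i}) - q_i(\boldsymbol{\theta},\boldsymbol{\beta}_{0,i})\bigr\rvert \leq \mathcal{H}_i(a(\boldsymbol{\theta}),\tau(\boldsymbol{\theta}))\,\lVert\boldsymbol{\hat\beta}_{n,i}-\boldsymbol{\beta}_{0,i}\rVert^{a(\boldsymbol{\theta})}.$$
Averaging over $i$, combined with $\max_i\mathcal{H}_i(a(\boldsymbol{\theta}),\tau(\boldsymbol{\theta})) = O_p(1)$ and Assumption \ref{ass:converge:beta}, gives $\lvert Q_n(\boldsymbol{\theta},\mathbf{\hat B}_n)-Q_n(\boldsymbol{\theta},\mathbf{B}_0)\rvert \leq O_p(1)\cdot(\max_i\lVert\boldsymbol{\hat\beta}_{n,i}-\boldsymbol{\beta}_{0,i}\rVert)^{a(\boldsymbol{\theta})} = o_p(1)$.

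\textbf{Uniform step -- the main obstacle.} Upgrading this pointwise convergence to uniform convergence in $\boldsymbol{\theta}$ is the delicate step, because $a(\boldsymbol{\theta})$, $\tau(\boldsymbol{\theta})$, and the implicit $O_p(1)$ constants all depend on $\boldsymbol{\theta}$. I would proceed via a finite-cover / stochastic equicontinuity argument. By compactness of $\boldsymbol{\Theta}$ (Condition (\ref{ass:compact})), for any $\delta>0$ one can cover $\boldsymbol{\Theta}$ by finitely many balls $B(\boldsymbol{\theta}^{(k)},\delta)$, $k=1,\dots,K_\delta$. Assumption \ref{ass:converge:beta} ensures that $\boldsymbol{\hat\beta}_{n,i}\in\mathcal{O}(\boldsymbol{\beta}_{0,i})$ uniformly in $i$ with probability approaching one; then Condition (\ref{ass:converge:bounded}) together with the mean value theorem imply that both $\boldsymbol{\theta}\mapsto Q_n(\boldsymbol{\theta},\mathbf{\hat B}_n)$ and $\boldsymbol{\theta}\mapsto Q_n(\boldsymbol{\theta},\mathbf{B}_0)$ are Lipschitz on $\boldsymbol{\Theta}$ with a common $O_p(1)$ Lipschitz constant $M_n$. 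The supremum of the difference is therefore bounded by $2M_n\delta + \max_k\lvert Q_n(\boldsymbol{\theta}^{(k)},\mathbf{\hat B}_n)-Q_n(\boldsymbol{\theta}^{(k)},\mathbf{B}_0)\rvert$, whose second term is $o_p(1)$ by the pointwise step applied to each of the finitely many centers, and whose first term can be made arbitrarily small in probability by choosing $\delta$ small (using $M_n = O_p(1)$). Letting $\delta\to 0$ after $n\to\infty$ delivers the desired uniform convergence.

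\textbf{Conclusion.} Combining the above with Condition (\ref{ass:converge:Q}) yields $\sup_{\boldsymbol{\theta}\in\boldsymbol{\Theta}}\lvert Q_n(\boldsymbol{\theta},\mathbf{\hat B}_n)-Q_0(\boldsymbol{\theta})\rvert = o_p(1)$. Since $Q_0$ is continuous on the compact $\boldsymbol{\Theta}$ (inherited from Assumption \ref{ass:Qfonction} and stability under uniform limits) and uniquely maximized at the interior point $\boldsymbol{\theta}_0$, while $\boldsymbol{\hat\theta}_n$ maximizes $Q_n(\cdot,\mathbf{\hat B}_n)$, the standard argmax argument concludes that $\plim\boldsymbol{\hat\theta}_n=\boldsymbol{\theta}_0$.
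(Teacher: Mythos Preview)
Your proof is correct and follows essentially the same route as the paper's: a pointwise bound via the H\"older condition (\ref{ass:converge:holder}), upgraded to uniform convergence through a Lipschitz/stochastic-equicontinuity argument based on Condition (\ref{ass:converge:bounded}) and the mean value theorem, followed by the standard argmax argument on the compact $\boldsymbol{\Theta}$. The only cosmetic difference is that the paper applies the mean value theorem directly to the difference $Q_n(\boldsymbol{\theta},\mathbf{\hat B}_n)-Q_n(\boldsymbol{\theta},\mathbf{B}_0)$ and then invokes Lemma 2.9 of Newey and McFadden (1994), whereas you spell out the finite-cover argument that underlies that lemma.
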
 

\begin{proof}
The proof is performed in two steps.

\medskip
\noindent \textbf{Step 1}: We show that $Q_n(\boldsymbol{\theta}, ~\mathbf{\hat{B}}_{n})$  converges uniformly in probability to  $Q_0(\boldsymbol{\theta})$.\\
For any $\boldsymbol{\theta}$, we have  $\lvert Q_n(\boldsymbol{\theta}, ~\mathbf{\hat{B}}_{n}) - Q_0(\boldsymbol{\theta})\rvert 
\leq \lvert Q_n(\boldsymbol{\theta}, ~\mathbf{\hat{B}}_{n}) - Q_n(\boldsymbol{\theta}, ~\mathbf{B}_{0})\rvert + \lvert Q_n(\boldsymbol{\theta}, ~\mathbf{B}_{0}) - Q_0(\boldsymbol{\theta}) \rvert$. Since $Q_n(\boldsymbol{\theta}, ~\mathbf{B}_{0}) - Q_0(\boldsymbol{\theta})$ converges uniformly in probability in $\boldsymbol{\theta}$ to $0$ (Condition (\ref{ass:converge:Q}) of Assumption \ref{ass:converge}), it is sufficient to show that $Q_n(\boldsymbol{\theta}, ~\mathbf{\hat{B}}_{n}) - Q_n(\boldsymbol{\theta}, ~\mathbf{B}_{0})$ also converges uniformly in probability in $\boldsymbol{\theta}$ to $0$. 

By Assumption \ref{ass:converge:beta}, for $n$ large enough, $\displaystyle\max_i  \textstyle\lVert \boldsymbol{\hat\beta}_{n,i} - \boldsymbol{\beta}_{0,i} \rVert < \tau(\boldsymbol{\theta})$ with probability approaching one. Thus, by Condition (\ref{ass:converge:holder}) of Assumption \ref{ass:converge}, 
$$\lvert q_i(\boldsymbol{\theta}, ~\boldsymbol{\hat\beta}_{n,i}) - q_i(\boldsymbol{\theta}, ~\boldsymbol{\beta}_{0,i})\rvert \leq \lVert \boldsymbol{\hat\beta}_{n,i} - \boldsymbol{\beta}_{0,i} \rVert^{a(\boldsymbol{\theta})}\max_i \mathcal{H}_i(a(\boldsymbol{\theta}), \tau(\boldsymbol{\theta}))\quad \text{for all $i$}$$ with probability approaching one. As  $\lvert Q_n(\boldsymbol{\theta}, ~\mathbf{\hat{B}}_{n}) - Q_n(\boldsymbol{\theta}, ~\mathbf{B}_{0}) \rvert \leq \max_i\lvert q_i(\boldsymbol{\theta}, ~\boldsymbol{\hat\beta}_{n,i}) - q_i(\boldsymbol{\theta}, ~\boldsymbol{\beta}_{0,i})\rvert$, this implies that 
$$\lvert Q_n(\boldsymbol{\theta}, ~\mathbf{\hat{B}}_{n}) - Q_n(\boldsymbol{\theta}, ~\mathbf{B}_{0}) \rvert \leq \max_i\lVert \boldsymbol{\beta}_{i} - \boldsymbol{\beta}_{0,i} \rVert^{a(\boldsymbol{\theta})} \max_i \mathcal{H}_i(a(\boldsymbol{\theta}), \tau(\boldsymbol{\theta}))$$ 
with probability approaching one. As a result $Q_n(\boldsymbol{\theta}, ~\mathbf{\hat{B}}_{n}) - Q_n(\boldsymbol{\theta}, ~\mathbf{B}_{0})$ converges in probability to zero because $\max_i\lVert \boldsymbol{\beta}_{i} - \boldsymbol{\beta}_{0,i} \rVert^{a(\boldsymbol{\theta})} = o_p(1)$ and  $\max_i \mathcal{H}_i(a(\boldsymbol{\theta}), \tau(\boldsymbol{\theta})) = O_p(1)$.

To show that the convergence is uniform, we apply the mean value theorem to $Q_n(\boldsymbol{\theta}, ~\mathbf{\hat{B}}_{n}) - Q_n(\boldsymbol{\theta}, ~\mathbf{B}_{0})$ with respect to $\boldsymbol{\theta}$. For any $\boldsymbol{\tilde\theta} \in \boldsymbol{\Theta}$, we have 
$$Q_n(\boldsymbol{\theta}, ~\mathbf{\hat{B}}_{n}) - Q_n(\boldsymbol{\theta}, ~\mathbf{B}_{0}) - \big(Q_n(\boldsymbol{\tilde\theta}, ~\mathbf{\hat{B}}_{n}) - Q_n(\boldsymbol{\tilde\theta}, ~\mathbf{B}_{0})\big)  = (\boldsymbol{\theta}-\boldsymbol{\tilde\theta})^{\prime}\boldsymbol{\hat{Q}}_n,$$
where $\boldsymbol{\hat{Q}}_n = \frac{1}{n}\sum_{i = 1}^n \partial_{\boldsymbol{\theta}}\big(q_i(\boldsymbol{\theta}^+_n, ~\boldsymbol{\hat\beta}_{n,i}) - q_i(\boldsymbol{\theta}^+_n, ~\boldsymbol{\beta}_{0,i})\big)$, for some $\boldsymbol{\theta}^+_n$ that lies between $\boldsymbol{\theta}$ and $\boldsymbol{\tilde \theta}$. Thus,
$$\lvert Q_n(\boldsymbol{\theta}, ~\mathbf{\hat{B}}_{n}) - Q_n(\boldsymbol{\theta}, ~\mathbf{B}_{0}) - \big(Q_n(\boldsymbol{\tilde\theta}, ~\mathbf{\hat{B}}_{n}) - Q_n(\boldsymbol{\tilde\theta}, ~\mathbf{B}_{0})\big)\rvert \leq \lVert\boldsymbol{\hat{Q}}_n\rVert \lVert \boldsymbol{\theta}-\boldsymbol{\tilde\theta}\rVert.$$
As $\boldsymbol{\hat{Q}}_n = O_p(1)$ (Condition (\ref{ass:converge:bounded}) of Assumption \ref{ass:converge}) and $\boldsymbol{\Theta}$ is compact, it follows from Lemma 2.9 of \citeoa{newey1994largeoa} that ${Q_n(\boldsymbol{\theta}, ~\mathbf{\hat{B}}_{n}) - Q_n(\boldsymbol{\theta}, ~\mathbf{B}_{0})}$ converges uniformly in probability to $0$. 

\medskip
\noindent \textbf{Step 2:} We establish the consistency of the estimator $\boldsymbol{\hat{\theta}}_n$. 

\noindent Let $\mathcal{O}(\boldsymbol{\theta}_0)^c$ be the complement of $\mathcal{O}(\boldsymbol{\theta}_0)$ in $\boldsymbol{\Theta}$. Note that $\mathcal{O}(\boldsymbol{\theta}_0)^c$ is nonempty and compact and $\max_{\boldsymbol{\theta} \in \mathcal{O}(\boldsymbol{\theta}_0)^c}Q_0(\boldsymbol{\theta})$ exists. 
Let $\delta = Q_0(\boldsymbol{\theta}_0) - \max_{\boldsymbol{\theta} \in \mathcal{O}(\boldsymbol{\theta}_0)^c}Q_0(\boldsymbol{\theta})$ and $J_n = \big\{|Q_n(\boldsymbol{\theta}, ~\mathbf{\hat{B}}_{n})-Q_0(\boldsymbol{\theta})|<\delta/2, \,\text{ for all } \boldsymbol{\theta}\big\}$. We know that  $Q_0(\boldsymbol{\theta})$ is uniquely maximized at $\boldsymbol{\theta}_0$ and that $\boldsymbol{\theta}_0\notin \mathcal{O}(\boldsymbol{\theta}_0)^c$. Thus $\delta >0$. Moreover, since $Q_n(\boldsymbol{\theta}, ~\mathbf{\hat{B}}_{n})$ converges uniformly in probability to $Q_0(\boldsymbol{\theta})$, we have ${\lim\mathbb{P}(J_n) = 1}$.
\begin{equation}
    J_n \implies \big\{Q_0(\boldsymbol{\hat \theta}_n)> Q_n(\boldsymbol{\hat \theta}_n,  \mathbf{\hat{B}}_{n})-\delta/2\big\} \cap \big\{Q_n(\boldsymbol{\theta}_0, ~\mathbf{\hat{B}}_{n})> Q_0(\boldsymbol{\theta}_0)-\delta/2\big\}\label{eq:Jn1}
\end{equation}
As $\boldsymbol{\hat{\theta}}_n = \argmax_{\boldsymbol{\theta}} Q_n(\boldsymbol{\theta}, ~\mathbf{\hat{B}}_{n})$, we also have $Q_n(\boldsymbol{\hat{\theta}}_n, ~\mathbf{\hat{B}}_{n}) \geq Q_n(\boldsymbol{\theta}_0, ~\mathbf{\hat{B}}_{n})$. Thus, $\big\{Q_0(\boldsymbol{\hat \theta}_n)> Q_n(\boldsymbol{\hat \theta}_n,  \mathbf{\hat{B}}_{n})-\delta/2\big\}$ implies $\big\{Q_0(\boldsymbol{\hat \theta}_n)> Q_n(\boldsymbol{ \theta}_0,  \mathbf{\hat{B}}_{n})-\delta/2\big\}$. It follows from \eqref{eq:Jn1} that 
\begingroup\allowdisplaybreaks\begin{align}
    J_n &\implies \big\{Q_0(\boldsymbol{\hat \theta}_n)> Q_n(\boldsymbol{\theta}_0,  \mathbf{\hat{B}}_{n})-\delta/2\big\} \cap \big\{Q_n(\boldsymbol{\theta}_0, ~\mathbf{\hat{B}}_{n})> Q_0(\boldsymbol{\theta}_0)-\delta/2\big\},\nonumber\\
    J_n &\implies \big\{Q_0(\boldsymbol{\hat{\theta}}_n)> Q_0(\boldsymbol{\theta}_0)-\delta\big\}.\label{eq:Jn2}
\end{align}\endgroup
\noindent As $\delta=Q_0(\boldsymbol{\theta}_0) - \max_{\boldsymbol{\theta} \in \mathcal{O}(\boldsymbol{\theta}_0)^c}Q_0(\boldsymbol{\theta})$, it turns out from \eqref{eq:Jn2} that
\begingroup\allowdisplaybreaks\begin{align}
    J_n &\implies \big\{Q_0(\boldsymbol{\hat{\theta}}_n)> \max_{\boldsymbol{\theta} \in \mathcal{O}(\boldsymbol{\theta}_0)^c}Q_0(\boldsymbol{\theta})\big\},\nonumber\\
    J_n &\implies \boldsymbol{\hat{\theta}}_n \in \mathcal{O}(\boldsymbol{\theta}_0).\label{eq:Jn3}
\end{align}\endgroup
As ${\lim\mathbb{P}(J_n) = 1}$, then \eqref{eq:Jn3} implies that ${\lim\mathbb{P}(\boldsymbol{\hat{\theta}}_n \in \mathcal{O}(\boldsymbol{\theta}_0)) = 1}$. This is true for any open subset $\mathcal{O}(\boldsymbol{\theta}_0)$ that contains $\boldsymbol{\theta}_0$. As a result, $\boldsymbol{\hat{\theta}}_n$ converges in probability to $\boldsymbol{\theta}_0$. \end{proof}
     
\subsection{Primitive Conditions for Assumption \ref{ass:Anosing}} \label{sm:hessian}
For notational ease, let $\ddot q_i(\boldsymbol{\theta}, ~\boldsymbol{\beta}_{i}) = \partial_{\boldsymbol{\theta}}\partial_{\boldsymbol{\theta}^\prime}q(\boldsymbol{\theta},~ y_{i}, ~\boldsymbol{x}_{i}, ~\boldsymbol{\beta}_{i})$. For any $\boldsymbol{\theta}\in\boldsymbol{\Theta}$ and $a$, $\tau ~ > 0$, we define
$$\textstyle\ddot{\mathcal{H}}_i(a, \tau) =\sup_{(\lVert \boldsymbol{\theta} - \boldsymbol{\theta}_0 \rVert + \lVert \boldsymbol{\beta}_{i} - \boldsymbol{\beta}_{0,i} \rVert)^{a} < \tau} \dfrac{\lVert \ddot q_i(\boldsymbol{\theta}, ~\boldsymbol{\beta}_{i}) - \ddot q_i(\boldsymbol{\theta}_0, ~\boldsymbol{\beta}_{0,i})\rVert}{(\lVert \boldsymbol{\theta} - \boldsymbol{\theta}_0 \rVert + \lVert \boldsymbol{\beta}_{i} - \boldsymbol{\beta}_{0,i} \rVert)^a}\,.$$ We impose lower-level conditions that imply Assumption \ref{ass:Anosing}.
\begin{assumption}[Primitive Conditions for Assumption \ref{ass:Anosing}]\label{sm:pm:ass:Anosing}
    \begin{inparaenum}[(i)]\hfill\\
        \item The matrix $\frac{1}{n}\sum_{i = 1}^n \ddot q_i(\boldsymbol{\theta}_0, ~\boldsymbol{\beta}_{0,i})$ converges in probability to a finite nonsingular matrix $\mathbf{A}_0$ defined by $\lim \mathbb{E}\big(\frac{1}{n}\sum_{i = 1}^n \ddot q_i(\boldsymbol{\theta}_0, ~\boldsymbol{\beta}_{0,i})\big)$.\label{sm:pm:ass:Anosing:limite}\\
        \item There exists constants $a^\ast, ~\tau^\ast > 0$ such that $\max_i\ddot{\mathcal{H}}_i(a^\ast, \tau^\ast) = O_p(1)$.\label{sm:pm:ass:Anosing:holder}
    \end{inparaenum}
\end{assumption}
\noindent Condition (\ref{sm:pm:ass:Anosing:limite}) imposes that $\frac{1}{n}\sum_{i}^n\ddot q_i(\boldsymbol{\theta}_0, ~\boldsymbol{\beta}_{0,i})$ converges in to $\lim \mathbb{E}\big(\frac{1}{n}\sum_{i = 1}^n \ddot q_i(\boldsymbol{\theta}_0, ~\boldsymbol{\beta}_{0,i})\big)$. This condition is classical as in the case of a single-step estimator. It does not involve any estimator and can be implied by the weak law of large numbers (WLLN). For $\ddot q_i(\boldsymbol{\theta}_0, ~\boldsymbol{\beta}_{0,i})$'s dependent across $i$, WLLN for dependent processes can be used. Condition (\ref{sm:pm:ass:Anosing:holder}) is similar to Condition (\ref{ass:converge:holder}) of Assumption \ref{ass:converge}. It requires $\ddot q_i(\boldsymbol{\theta}, ~\boldsymbol{\beta}_{i})$ to be smooth in both $\boldsymbol{\beta}_i$ and $\boldsymbol{\theta}$, uniformly in $i$. 

\begin{proposition}
Under Assumptions \ref{ass:converge:beta}--\ref{ass:consistent} and \ref{ass:converge}--\ref{sm:pm:ass:Anosing}, the Hessian of the objective function evaluated at any consistent estimator $\boldsymbol{\theta}_n^+$, given by $\frac{1}{n}\sum_{i = 1}^n \ddot q_i(\boldsymbol{\theta}_n^+, ~\boldsymbol{\hat{\beta}}_{n,i})$, converges in probability to a finite nonsingular matrix $\mathbf{A}_0 = \lim \mathbb{E}\big(\frac{1}{n}\sum_{i = 1}^n \ddot q_i(\boldsymbol{\theta}_0, ~\boldsymbol{\beta}_{0,i})\big)$.
\end{proposition}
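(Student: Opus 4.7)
The plan is to decompose the quantity of interest as
\begin{equation*}
\frac{1}{n}\sum_{i=1}^n \ddot q_i(\boldsymbol{\theta}_n^+, \boldsymbol{\hat\beta}_{n,i}) - \mathbf{A}_0 = R_{1,n} + R_{2,n},
\end{equation*}
where $R_{1,n} = \frac{1}{n}\sum_{i=1}^n \big[\ddot q_i(\boldsymbol{\theta}_n^+, \boldsymbol{\hat\beta}_{n,i}) - \ddot q_i(\boldsymbol{\theta}_0, \boldsymbol{\beta}_{0,i})\big]$ captures the perturbation induced by using consistent estimators in place of the true parameters, and $R_{2,n} = \frac{1}{n}\sum_{i=1}^n \ddot q_i(\boldsymbol{\theta}_0, \boldsymbol{\beta}_{0,i}) - \mathbf{A}_0$ is the classical averaging-at-truth term. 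The goal is to show each piece is $o_p(1)$.

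Term $R_{2,n}$ is immediate from Condition (\ref{sm:pm:ass:Anosing:limite}) of Assumption \ref{sm:pm:ass:Anosing}, which gives $\frac{1}{n}\sum_{i=1}^n \ddot q_i(\boldsymbol{\theta}_0, \boldsymbol{\beta}_{0,i}) \xrightarrow{p} \mathbf{A}_0$ by construction. This step requires no further argument.

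For $R_{1,n}$, the key step is to apply the uniform Hölder-type bound in Condition (\ref{sm:pm:ass:Anosing:holder}) of Assumption \ref{sm:pm:ass:Anosing}. By Assumption \ref{ass:converge:beta}, $\max_i\lVert \boldsymbol{\hat\beta}_{n,i} - \boldsymbol{\beta}_{0,i}\rVert = o_p(1)$, and by Assumption \ref{ass:consistent} (together with $\plim\boldsymbol{\theta}_n^+ = \boldsymbol{\theta}_0$, which is the stated hypothesis on $\boldsymbol{\theta}_n^+$), $\lVert \boldsymbol{\theta}_n^+ - \boldsymbol{\theta}_0\rVert = o_p(1)$. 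Hence, for $n$ large enough, the event $\max_i(\lVert \boldsymbol{\theta}_n^+ - \boldsymbol{\theta}_0\rVert + \lVert \boldsymbol{\hat\beta}_{n,i} - \boldsymbol{\beta}_{0,i}\rVert)^{a^\ast} < \tau^\ast$ holds with probability approaching one, so that uniformly in $i$,
\begin{equation*}
\lVert \ddot q_i(\boldsymbol{\theta}_n^+, \boldsymbol{\hat\beta}_{n,i}) - \ddot q_i(\boldsymbol{\theta}_0, \boldsymbol{\beta}_{0,i})\rVert \leq \max_i \ddot{\mathcal{H}}_i(a^\ast, \tau^\ast) \cdot \big(\lVert \boldsymbol{\theta}_n^+ - \boldsymbol{\theta}_0\rVert + \lVert \boldsymbol{\hat\beta}_{n,i} - \boldsymbol{\beta}_{0,i}\rVert\big)^{a^\ast}
\end{equation*}
with probability approaching one. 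Taking the average over $i$ and using the triangle inequality, $\lVert R_{1,n}\rVert$ is bounded above (with probability approaching one) by $\max_i \ddot{\mathcal{H}}_i(a^\ast, \tau^\ast)$ times $\max_i\big(\lVert \boldsymbol{\theta}_n^+ - \boldsymbol{\theta}_0\rVert + \lVert \boldsymbol{\hat\beta}_{n,i} - \boldsymbol{\beta}_{0,i}\rVert\big)^{a^\ast} = o_p(1)$, while the first factor is $O_p(1)$ by Condition (\ref{sm:pm:ass:Anosing:holder}). Therefore $R_{1,n} = O_p(1) \cdot o_p(1) = o_p(1)$, and combining with $R_{2,n} = o_p(1)$ yields the claim.

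The main obstacle is ensuring that the Hölder-type bound is applicable uniformly over $i$. This is nontrivial because $\boldsymbol{\hat\beta}_{n,i}$ varies with $i$ and the envelope condition on $\ddot{\mathcal{H}}_i$ is local (requires the perturbation to lie inside a ball of radius $\tau^{\ast}/a^{\ast}$); it is precisely the uniform-in-$i$ consistency $\max_i\lVert\boldsymbol{\hat\beta}_{n,i} - \boldsymbol{\beta}_{0,i}\rVert = o_p(1)$ of Assumption \ref{ass:converge:beta} (rather than point-wise consistency) that justifies using the uniform Hölder constant $\max_i \ddot{\mathcal{H}}_i(a^\ast, \tau^\ast)$ simultaneously for all $i$. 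This mirrors the treatment of the analogous step in the proof of Proposition \ref{prop:consistent}.
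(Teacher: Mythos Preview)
Your proof is correct and follows essentially the same route as the paper: decompose into the perturbation term $R_{1,n}$ (controlled via the uniform H\"older bound in Condition~(\ref{sm:pm:ass:Anosing:holder}) together with Assumption~\ref{ass:converge:beta} and the consistency of $\boldsymbol{\theta}_n^+$) and the at-truth term $R_{2,n}$ (handled directly by Condition~(\ref{sm:pm:ass:Anosing:limite})). Your statement of the localization event $(\lVert\boldsymbol{\theta}_n^+-\boldsymbol{\theta}_0\rVert+\lVert\boldsymbol{\hat\beta}_{n,i}-\boldsymbol{\beta}_{0,i}\rVert)^{a^\ast}<\tau^\ast$ is in fact slightly more faithful to the definition of $\ddot{\mathcal{H}}_i$ than the paper's own phrasing.
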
 
\begin{proof}
    By Assumption \ref{ass:converge:beta}, for $n$ large enough, $\lVert \boldsymbol{\theta}_n^+ - \boldsymbol{\theta}_0 \rVert+ \displaystyle\max_i  \textstyle\lVert \boldsymbol{\hat\beta}_{n,i} - \boldsymbol{\beta}_{0,i} \rVert < \tau^{\ast}$ with probability approaching one. Thus, $\lVert \ddot q_i(\boldsymbol{\theta}_n^+, ~\boldsymbol{\hat\beta}_{n,i}) - \ddot q_i(\boldsymbol{\theta}_0, ~\boldsymbol{\beta}_{0,i})\rVert \leq (\lVert \boldsymbol{\theta}_n^+ - \boldsymbol{\theta}_0 \rVert+ \lVert \boldsymbol{\hat\beta}_{n,i} - \boldsymbol{\beta}_{0,i} \rVert)^{a^{\ast}}\max_i\ddot{\mathcal{H}}_i(a^\ast, \tau^\ast)$, for all $i$, with probability approaching one. \\
    As  $\lVert \frac{1}{n}\sum_{i = 1}^n \ddot q_i(\boldsymbol{\theta}_n^+, ~\boldsymbol{\hat{\beta}}_{n,i}) - \frac{1}{n}\sum_{i = 1}^n \ddot q_i(\boldsymbol{\theta}_0, ~\boldsymbol{\beta}_{0,i}) \rVert \leq \max_i \lVert \ddot q_i(\boldsymbol{\theta}_n^+, ~\boldsymbol{\hat{\beta}}_{n,i})  - \ddot q_i(\boldsymbol{\theta}_0, ~\boldsymbol{\beta}_{0,i})\rVert$, thus $\lVert \frac{1}{n}\sum_{i = 1}^n \ddot q_i(\boldsymbol{\theta}_n^+, ~\boldsymbol{\hat{\beta}}_{n,i}) - \frac{1}{n}\sum_{i = 1}^n \ddot q_i(\boldsymbol{\theta}_0, ~\boldsymbol{\beta}_{0,i}) \rVert \leq (\lVert \boldsymbol{\theta}_n^+ - \boldsymbol{\theta}_0 \rVert+ \max_i\lVert \boldsymbol{\hat\beta}_{n,i} - \boldsymbol{\beta}_{0,i} \rVert)^{a^{\ast}}\max_i\ddot{\mathcal{H}}_i(a^\ast, \tau^\ast)$ with probability approaching one, where $\lVert \boldsymbol{\theta}_n^+ - \boldsymbol{\theta}_0 \rVert + \max_i\lVert \boldsymbol{\beta}_{i} - \boldsymbol{\beta}_{0,i} \rVert^{a(\boldsymbol{\theta})} = o_p(1)$ and  $\max_i\ddot{\mathcal{H}}_i(a^\ast, \tau^\ast) = O_p(1)$. As a result, $\frac{1}{n}\sum_{i = 1}^n \ddot q_i(\boldsymbol{\theta}_n^+, ~\boldsymbol{\hat{\beta}}_{n,i}) - \frac{1}{n}\sum_{i = 1}^n \ddot q_i(\boldsymbol{\theta}_0, ~\boldsymbol{\beta}_{0,i}) = o_p(1)$. Given that $\frac{1}{n}\sum_{i = 1}^n \ddot q_i(\boldsymbol{\theta}_0, ~\boldsymbol{\beta}_{0,i})$ converges in probability to a finite nonsingular matrix $\mathbf{A}_0$, the result follows.
\end{proof}

\subsection{Uniform Convergence of the Distribution of the Plug-in Estimator}\label{append:dist:p}
We establish a more general result than Theorem \ref{theo:dist} in terms of uniform convergence. To achieve this result, we consider a stronger version of Assumption \ref{ass:CLT} using uniform convergence.
\begin{assumption}[Conditional Asymptotic Normality]\label{ass:CLT:p}
The conditional distribution of the standardized influence function $\boldsymbol{u}_{n}(\mathbf{y}_n,~\mathbf{\hat{B}}_n)$, given $\mathbf{\hat{B}}_n$, uniformly  converges in distribution to $N(0, ~\boldsymbol{I}_{K_{\theta}})$, almost surely; that is, $\sup_{\boldsymbol{t}\in\mathbb{R}^{K_{\theta}}}\lvert \mathbb{P}(\boldsymbol{u}_{n}(\mathbf{y}_n,~\mathbf{\hat{B}}_n) \preceq \boldsymbol{t}|\mathbf{\hat{B}}_n) - \Phi(\boldsymbol{t})\rvert = o_p(1)$.
\end{assumption}
\noindent Assumption \ref{ass:CLT:p} can be obtained by applying a strong version of the CLT to $\boldsymbol{u}_{n}(\mathbf{y}_n,~\mathbf{\hat{B}}_n)$, conditional on $\mathbf{\hat{B}}_n$. This variant of the CLT is often referred to as the uniform CLT \citepoa[see][]{jirak2016berryoa, raivc2019multivariateoa}. As in Lemma \ref{lem:CLT}, Assumption \ref{ass:CLT:p} also implies the uniform convergence of the unconditional distribution.

\begin{lemma}[Unconditional Asymptotic Normality]\label{lem:CLT:p}
    Under Assumption \ref{ass:CLT:p}, $\boldsymbol{u}_{n}(\mathbf{y}_n,~\mathbf{\hat{B}}_n)$ uniformly converges to a standard normal distribution, in the sense that $\sup_{\boldsymbol{t}\in\mathbb{R}^{K_{\theta}}}\lvert \mathbb{P}(\boldsymbol{u}_{n}(\mathbf{y}_n,~\mathbf{\hat{B}}_n) \preceq \boldsymbol{t}) - \Phi(\boldsymbol{t})\rvert = o(1)$.
\end{lemma}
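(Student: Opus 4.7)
The plan is to reduce the uniform convergence of the unconditional CDF of $\boldsymbol{u}_{n}(\mathbf{y}_n,~\mathbf{\hat{B}}_n)$ to the uniform conditional statement in Assumption \ref{ass:CLT:p} via a tower-property argument combined with dominated convergence. The key observation is that, by the law of iterated expectations, for every $\boldsymbol{t}\in\mathbb{R}^{K_{\theta}}$,
\begin{equation*}
\mathbb{P}(\boldsymbol{u}_{n}(\mathbf{y}_n,~\mathbf{\hat{B}}_n)\preceq\boldsymbol{t}) - \Phi(\boldsymbol{t}) = \mathbb{E}\bigl[\mathbb{P}(\boldsymbol{u}_{n}(\mathbf{y}_n,~\mathbf{\hat{B}}_n)\preceq\boldsymbol{t}\mid\mathbf{\hat{B}}_n) - \Phi(\boldsymbol{t})\bigr],
\end{equation*}
since $\Phi(\boldsymbol{t})$ is nonstochastic. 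Taking absolute values, pulling them inside the expectation, and then replacing the integrand by its supremum over $\boldsymbol{t}$ (which dominates it pointwise) yields the key bound
\begin{equation*}
\sup_{\boldsymbol{t}\in\mathbb{R}^{K_{\theta}}}\bigl\lvert \mathbb{P}(\boldsymbol{u}_{n}(\mathbf{y}_n,~\mathbf{\hat{B}}_n)\preceq\boldsymbol{t}) - \Phi(\boldsymbol{t})\bigr\rvert \;\leq\; \mathbb{E}[X_n],
\end{equation*}
where $X_n := \sup_{\boldsymbol{t}\in\mathbb{R}^{K_{\theta}}}\bigl\lvert \mathbb{P}(\boldsymbol{u}_{n}(\mathbf{y}_n,~\mathbf{\hat{B}}_n)\preceq\boldsymbol{t}\mid\mathbf{\hat{B}}_n) - \Phi(\boldsymbol{t})\bigr\rvert$.

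The second step is to show that $\mathbb{E}[X_n]\to 0$. Assumption \ref{ass:CLT:p} states precisely that $X_n = o_p(1)$, and by construction $X_n$ is bounded above by $1$ since both $\mathbb{P}(\cdot\mid\mathbf{\hat{B}}_n)$ and $\Phi$ take values in $[0,1]$. Hence $X_n$ is a bounded sequence of random variables converging to zero in probability, so by the Lebesgue--Vitali dominated convergence theorem (as used already in the proof of Lemma \ref{lem:CLT}), $\mathbb{E}[X_n]\to 0$. Combining with the bound above gives the required $o(1)$ rate uniformly in $\boldsymbol{t}$.

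The main obstacle, and the step that requires the strengthening in Assumption \ref{ass:CLT:p} relative to the pointwise Assumption \ref{ass:CLT}, is the interchange of the supremum and the expectation: one needs $\sup_{\boldsymbol{t}}\mathbb{E}|\cdot|$ to be controlled by $\mathbb{E}\sup_{\boldsymbol{t}}|\cdot|$. With only pointwise conditional convergence (as in Assumption \ref{ass:CLT}) one obtains pointwise unconditional convergence (Lemma \ref{lem:CLT}), but to promote it to uniform unconditional convergence one genuinely needs the uniform conditional statement of Assumption \ref{ass:CLT:p}; otherwise the random $\boldsymbol{t}$-dependent sets on which the approximation is poor could have non-negligible mass once averaged. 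Beyond that, the argument is routine: the boundedness of $X_n$ in $[0,1]$ obviates any dominating-function difficulty, and no tightness/measurability concerns arise because $X_n$ is a supremum over $\mathbb{R}^{K_{\theta}}$ of the explicit measurable function $\boldsymbol{t}\mapsto\mathbb{P}(\boldsymbol{u}_{n}\preceq\boldsymbol{t}\mid\mathbf{\hat{B}}_n) - \Phi(\boldsymbol{t})$, which is right-continuous in $\boldsymbol{t}$ and thus the supremum can be taken over a countable dense subset.
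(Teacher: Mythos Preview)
Your proof is correct and follows essentially the same route as the paper's own proof: both use the tower property to write the unconditional difference as $\mathbb{E}[m_{\boldsymbol{t}}(\mathbf{\hat B}_n)]$, bound $\sup_{\boldsymbol{t}}|\mathbb{E}(m_{\boldsymbol{t}})|$ by $\mathbb{E}(\sup_{\boldsymbol{t}}|m_{\boldsymbol{t}}|)$, and then invoke dominated convergence (boundedness by $1$ plus Assumption~\ref{ass:CLT:p}) to conclude that this expectation vanishes. Your additional remarks on why the uniform Assumption~\ref{ass:CLT:p} is needed and on the measurability of the supremum are helpful but not in the paper's proof.
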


\begin{proof}
    Let $m_{\boldsymbol{t}}(\mathbf{\hat{B}}_n) = \mathbb{P}(\boldsymbol{u}_{n}(\mathbf{y}_n,~\mathbf{\hat{B}}_n) \preceq \boldsymbol{t}|\mathbf{\hat{B}}_n) - \Phi(\boldsymbol{t})$. We have $\mathbb{E}(m_{\boldsymbol{t}}(\mathbf{\hat{B}}_n)) = \mathbb{P}(\boldsymbol{u}_{n}(\mathbf{y}_n,~\mathbf{\hat{B}}_n) \preceq \boldsymbol{t}) - \Phi(\boldsymbol{t})$, where the expectation is taken with respect to $\mathbf{\hat{B}}_n$. As $\sup_{\boldsymbol{t}\in\mathbb{R}^{K_{\theta}}}\lvert \mathbb{E}(m_{\boldsymbol{t}}(\mathbf{\hat{B}}_n)) \rvert \leq \mathbb{E}(\sup_{\boldsymbol{t}\in\mathbb{R}^{K_{\theta}}}\lvert m_{\boldsymbol{t}}(\mathbf{\hat{B}}_n)\rvert)$, then $\sup_{\boldsymbol{t}\in\mathbb{R}^{K_{\theta}}}\lvert \mathbb{P}(\boldsymbol{u}_{n}(\mathbf{y}_n,~\mathbf{\hat{B}}_n) \preceq \boldsymbol{t}) - \Phi(\boldsymbol{t}) \rvert \leq \mathbb{E}(\sup_{\boldsymbol{t}\in\mathbb{R}^{K_{\theta}}}\lvert m_{\boldsymbol{t}}(\mathbf{\hat{B}}_n)\rvert)$. Since $\sup_{\boldsymbol{t}\in\mathbb{R}^{K_{\theta}}}\lvert m_{\boldsymbol{t}}(\mathbf{\hat{B}}_n)\rvert$ is bounded by one and is $o_{p}(1)$, then  $\lim\mathbb{E}(\sup_{\boldsymbol{t}\in\mathbb{R}^{K_{\theta}}}\lvert m_{\boldsymbol{t}}(\mathbf{\hat{B}}_n)\rvert) = 0$. This completes the proof.
\end{proof}

The following theorem establishes the uniform convergence of the distribution $\Delta_n$. 
\begin{theorem}[Asymptotic Distribution]
    Assumptions  \ref{ass:converge:beta}--\ref{ass:Anosing}, and \ref{ass:CLT:p} hold. Let
    $\boldsymbol{\chi}_n =  \boldsymbol{\zeta} + \mathbf{V}_n^{-1/2}\mathcal{E}_n$, where $\boldsymbol{\zeta} \sim N(0, ~\boldsymbol{I}_{K_{\theta}})$. Let $G(\boldsymbol{t}) = \lim \mathbb{P}(\boldsymbol{\chi}_n\preceq \boldsymbol{t})$ for $\boldsymbol{t}\in\mathbb{R}^{K_{\theta}}$ be the limiting distribution function of $\boldsymbol{\chi}_n$. We have $ \sup_{\boldsymbol{t}\in\mathbb{R}^{K_{\theta}}}\lvert \mathbb{P}(\sqrt{n}\mathbf{V}_n^{-1/2}\mathbf{A}_0(\boldsymbol{\hat{\theta}}_n - \boldsymbol{\theta}_0) \preceq \boldsymbol{t}) -  G(\boldsymbol{t})\rvert = o(1)$. \label{theo:dist:p}
\end{theorem}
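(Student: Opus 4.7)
The strategy is to mirror the proof of Theorem \ref{theo:dist} while upgrading each pointwise CLT step to a uniform-in-$\boldsymbol{t}$ statement, thereby exploiting the stronger Assumption \ref{ass:CLT:p}. The proof decomposes the target quantity into the sum of the standardized influence function and an $\mathbf{\hat{B}}_n$-measurable shift, then applies the conditional uniform CLT at a data-dependent translation, and finally integrates out $\mathbf{\hat{B}}_n$ via bounded convergence, as in Lemma \ref{lem:CLT}.

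Concretely, I would first substitute $\boldsymbol{\dot q}_{n}(\mathbf{y}_n,~\mathbf{\hat{B}}_n) = \mathbf{V}_n^{1/2}\boldsymbol{u}_{n}(\mathbf{y}_n,~\mathbf{\hat{B}}_n) + \mathcal{E}_n$ into $\Delta_n = \mathbf{A}_n^{-1}\boldsymbol{\dot q}_{n}$ and multiply by $\mathbf{V}_n^{-1/2}\mathbf{A}_0$ to obtain $\mathbf{V}_n^{-1/2}\mathbf{A}_0\Delta_n = \mathbf{V}_n^{-1/2}\mathbf{A}_0\mathbf{A}_n^{-1}\mathbf{V}_n^{1/2}\boldsymbol{u}_{n}(\mathbf{y}_n,~\mathbf{\hat{B}}_n) + \mathbf{V}_n^{-1/2}\mathbf{A}_0\mathbf{A}_n^{-1}\mathcal{E}_n$. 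By Assumption \ref{ass:Anosing}, $\mathbf{A}_0\mathbf{A}_n^{-1} = \boldsymbol{I}_{K_\theta} + o_p(1)$, so this collapses to $\boldsymbol{u}_{n}(\mathbf{y}_n,~\mathbf{\hat{B}}_n) + \mathbf{V}_n^{-1/2}\mathcal{E}_n + \mathbf{R}_n$, where $\mathbf{R}_n = o_p(1)$ since both leading terms are $O_p(1)$. The next step is the crucial one: Assumption \ref{ass:CLT:p} yields $\sup_{\boldsymbol{s}} | \mathbb{P}(\boldsymbol{u}_{n}(\mathbf{y}_n,~\mathbf{\hat{B}}_n) \preceq \boldsymbol{s} | \mathbf{\hat{B}}_n) - \Phi(\boldsymbol{s}) | = o_p(1)$, and because the sup is taken over all $\boldsymbol{s}$, I may substitute the $\mathbf{\hat{B}}_n$-measurable value $\boldsymbol{s} = \boldsymbol{t} - \mathbf{V}_n^{-1/2}\mathcal{E}_n$ to deduce $\sup_{\boldsymbol{t}} | \mathbb{P}(\boldsymbol{u}_{n}(\mathbf{y}_n,~\mathbf{\hat{B}}_n) + \mathbf{V}_n^{-1/2}\mathcal{E}_n \preceq \boldsymbol{t} | \mathbf{\hat{B}}_n) - \Phi(\boldsymbol{t} - \mathbf{V}_n^{-1/2}\mathcal{E}_n) | = o_p(1)$. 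Since the integrand is bounded by $1$, the same Lebesgue--Vitali exchange used in Lemma \ref{lem:CLT} gives $\sup_{\boldsymbol{t}} | \mathbb{P}(\boldsymbol{u}_{n}(\mathbf{y}_n,~\mathbf{\hat{B}}_n) + \mathbf{V}_n^{-1/2}\mathcal{E}_n \preceq \boldsymbol{t}) - \mathbb{E}[\Phi(\boldsymbol{t} - \mathbf{V}_n^{-1/2}\mathcal{E}_n)] | = o(1)$, and the right-hand expectation equals $\mathbb{P}(\boldsymbol{\chi}_n \preceq \boldsymbol{t})$ by independence of $\boldsymbol{\zeta}$ from $\mathbf{\hat{B}}_n$.

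To finish, I must (i) transfer this uniform convergence from $\boldsymbol{u}_{n}(\mathbf{y}_n,~\mathbf{\hat{B}}_n) + \mathbf{V}_n^{-1/2}\mathcal{E}_n$ to $\mathbf{V}_n^{-1/2}\mathbf{A}_0\Delta_n$, absorbing the $\mathbf{R}_n = o_p(1)$ remainder, and (ii) replace $\mathbb{P}(\boldsymbol{\chi}_n \preceq \boldsymbol{t})$ with its pointwise limit $G(\boldsymbol{t})$. For (i), I invoke the standard Slutsky-type fact that if $X_n - Y_n = o_p(1)$ and the CDF of $Y_n$ converges uniformly to a continuous limit, the CDF of $X_n$ converges to the same limit uniformly. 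For (ii), note that each $\mathbb{P}(\boldsymbol{\chi}_n \preceq \boldsymbol{t}) = \mathbb{E}[\Phi(\boldsymbol{t} - \mathbf{V}_n^{-1/2}\mathcal{E}_n)]$ is continuous in $\boldsymbol{t}$ by dominated convergence, and so is its pointwise limit $G$; Pólya's theorem for multivariate CDFs then promotes the pointwise convergence $\mathbb{P}(\boldsymbol{\chi}_n \preceq \boldsymbol{t}) \to G(\boldsymbol{t})$ to uniform convergence. Combining this with the bound from the previous paragraph via the triangle inequality delivers the claim.

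The main obstacle is handling the random shift $\mathbf{V}_n^{-1/2}\mathcal{E}_n$: because $\mathcal{E}_n$ has in general a nondegenerate limit and is correlated with $\boldsymbol{u}_{n}(\mathbf{y}_n,~\mathbf{\hat{B}}_n)$ through $\mathbf{\hat{B}}_n$, I cannot just peel it off via Slutsky. The resolution is exactly that the uniform-in-$\boldsymbol{s}$ form of Assumption \ref{ass:CLT:p} permits plugging in a random $\boldsymbol{s}$ depending on $\mathbf{\hat{B}}_n$, which is not available under the weaker pointwise Assumption \ref{ass:CLT}. Together with Pólya's theorem to upgrade the defining pointwise limit for $G$, this uniformity is what powers the strengthening of Theorem \ref{theo:dist} to Theorem \ref{theo:dist:p}.
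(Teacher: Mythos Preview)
Your proposal is correct and follows essentially the same approach as the paper: decompose $\mathbf{V}_n^{-1/2}\mathbf{A}_0\Delta_n$ into the standardized influence function plus the $\mathbf{\hat B}_n$-measurable shift $\mathbf{V}_n^{-1/2}\mathcal{E}_n$, exploit the uniformity in Assumption~\ref{ass:CLT:p} to evaluate the conditional CDF at this random shift, and integrate out $\mathbf{\hat B}_n$ by bounded convergence (the paper packages this last step as Lemma~\ref{lem:CLT:p}). The only cosmetic differences are that the paper handles the replacement of $\mathbf{A}_n$ by $\mathbf{A}_0$ in one line (``because $\plim\mathbf{A}_n=\mathbf{A}_0$'') rather than via your explicit $o_p(1)$ remainder plus Slutsky, and it compares directly with $G(\boldsymbol t)=\lim\mathbb{E}\{\Phi(\boldsymbol t-\mathbf{V}_n^{-1/2}\mathcal{E}_n)\}$ instead of first passing through $\mathbb{P}(\boldsymbol\chi_n\preceq\boldsymbol t)$ and then invoking P\'olya; your version is slightly more explicit on both points but the substance is identical.
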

\begin{proof}
    As $\{\boldsymbol{\chi}_n\preceq \boldsymbol{t}\}=\{\boldsymbol{\zeta} \preceq \boldsymbol{t} - \mathbf{V}_n^{-1/2}\mathcal{E}_n\}$, we have   
    $G(\boldsymbol{t}) = \lim \mathbb{E}\big\{\mathbb{P}(\boldsymbol{\zeta} \preceq \boldsymbol{t} - \mathbf{V}_n^{-1/2}\mathcal{E}_n|\mathcal{E}_n)\big\}$. Thus, 
    \begin{equation}G(\boldsymbol{t}) = \lim\mathbb{E}\{\Phi(\boldsymbol{t} - \mathbf{V}_n^{-1/2}\mathcal{E}_n)\},\label{append:dist:eqGt}\end{equation} 
    because  $\plim \mathbf{V}_n$ is nonstochastic and $\boldsymbol{\zeta} \sim N(0, ~\boldsymbol{I}_{K_{\theta}})$. 
    
    Since $\Delta_n = \mathbf{A}_n^{-1}\mathbf{V}_n^{1/2}\boldsymbol{u}_{n}(\mathbf{y}_n,~\mathbf{\hat{B}}_n) + \mathbf{A}_n^{-1}\mathcal{E}_n$, then  $\{\mathbf{V}_n^{-1/2}\mathbf{A}_n\Delta_n \preceq \boldsymbol{t}\}=\{\boldsymbol{u}_{n}(\mathbf{y}_n,~\mathbf{\hat{B}}_n) \preceq \boldsymbol{t} - \mathbf{V}_n^{-1/2}\mathcal{E}_n\}$. This translates to $\lim\mathbb{P}(\mathbf{V}_n^{-1/2}\mathbf{A}_0\Delta_n \preceq \boldsymbol{t}) = \lim\mathbb{P}\{\boldsymbol{u}_{n}(\mathbf{y}_n,~\mathbf{\hat{B}}_n) \preceq \boldsymbol{t} - \mathbf{V}_n^{-1/2}\mathcal{E}_n\}$ because $\plim \mathbf{A}_n = \mathbf{A}_0$ is nonstochastic. As $\mathbb{P}\{\boldsymbol{u}_{n}(\mathbf{y}_n,~\mathbf{\hat{B}}_n) \preceq \boldsymbol{t} - \mathbf{V}_n^{-1/2}\mathcal{E}_n\} = \mathbb{E}\{\mathbb{P}(\boldsymbol{u}_{n}(\mathbf{y}_n,~\mathbf{\hat{B}}_n) \preceq \boldsymbol{t} - \mathbf{V}_n^{-1/2}\mathcal{E}_n|\mathcal{E}_n)\}$, we thus have 
    \begin{equation}\lim\mathbb{P}(\mathbf{V}_n^{-1/2}\mathbf{A}_n\Delta_n \preceq \boldsymbol{t}) = \lim\mathbb{E}\big\{\mathbb{P}\{\boldsymbol{u}_{n}(\mathbf{y}_n,~\mathbf{\hat{B}}_n) \preceq \boldsymbol{t} - \mathbf{V}_n^{-1/2}\mathcal{E}_n|\mathcal{E}_n\}\big\}.\label{append:dist:eqPV}\end{equation} 
    From \eqref{append:dist:eqGt} and \eqref{append:dist:eqPV}, we have  $\lim\lvert \mathbb{P}(\mathbf{V}_n^{-1/2}\mathbf{A}_0\Delta_n \preceq \boldsymbol{t}) -  G(\boldsymbol{t})\rvert =\textstyle \lim\lvert\mathbb{E}\big\{\mathbb{P}\{\boldsymbol{u}_{n}(\mathbf{y}_n,~\mathbf{\hat{B}}_n) \preceq \boldsymbol{t} - \mathbf{V}_n^{-1/2}\mathcal{E}_n| \mathcal{E}_n\} - \Phi(\boldsymbol{t} - \mathbf{V}_n^{-1/2}\mathcal{E}_n)\big\}\rvert$. In addition,  for any $\boldsymbol{t}$ and $\mathcal{E}_n$, we have \\
    $\mathbb{P}\{\boldsymbol{u}_{n}(\mathbf{y}_n,~\mathbf{\hat{B}}_n) \preceq \boldsymbol{t} - \mathbf{V}_n^{-1/2}\mathcal{E}_n| \mathcal{E}_n\} - \Phi(\boldsymbol{t} - \mathbf{V}_n^{-1/2}\mathcal{E}_n) \leq \sup_{\boldsymbol{t}\in\mathbb{R}^{K_{\theta}}}\lvert\mathbb{P}\{\boldsymbol{u}_{n}(\mathbf{y}_n,~\mathbf{\hat{B}}_n) \preceq \boldsymbol{t}|\mathcal{E}_n\} - \Phi(\boldsymbol{t})\rvert$. Thus,  $\lim\lvert \mathbb{P}(\mathbf{V}_n^{-1/2}\mathbf{A}_0\Delta_n \preceq \boldsymbol{t}) -  G(\boldsymbol{t})\rvert \textstyle\leq  \lim\mathbb{E}\big\{\sup_{\boldsymbol{t}\in\mathbb{R}^{K_{\theta}}}\lvert\mathbb{P}\{\boldsymbol{u}_{n}(\mathbf{y}_n,~\mathbf{\hat{B}}_n) \preceq \boldsymbol{t}|\mathcal{E}_n\} - \Phi(\boldsymbol{t})\rvert\big\}$. 
    
    By Lemma \ref{lem:CLT:p},
    $\sup_{\boldsymbol{t}\in\mathbb{R}^{K_{\theta}}}\lvert\mathbb{P}\{\boldsymbol{u}_{n}(\mathbf{y}_n,~\mathbf{\hat{B}}_n) \preceq \boldsymbol{t}|\mathcal{E}_n\} - \Phi(\boldsymbol{t})\rvert = o_p(1)$ because conditioning on $\mathcal{E}_n$ involves conditioning on $\mathbf{\hat{B}}_n$. As $\sup_{\boldsymbol{t}\in\mathbb{R}^{K_{\theta}}}\lvert\mathbb{P}\{\boldsymbol{u}_{n}(\mathbf{y}_n,~\mathbf{\hat{B}}_n) \preceq \boldsymbol{t}|\mathcal{E}_n\} - \Phi(\boldsymbol{t})\rvert$ is bounded, this implies that $\lim\mathbb{E}\{\sup_{\boldsymbol{t}\in\mathbb{R}^{K_{\theta}}}\lvert\mathbb{P}(\boldsymbol{u}_{n}(\mathbf{y}_n,~\mathbf{\hat{B}}_n) \preceq \boldsymbol{t}|\mathcal{E}_n) - \Phi(\boldsymbol{t})\rvert\} = 0$. Consequently, we have  $\lim\sup_{\boldsymbol{t}\in\mathbb{R}^{K_{\theta}}}\lvert \mathbb{P}(\mathbf{V}_n^{-1/2}\mathbf{A}_0\Delta_n \preceq \boldsymbol{t}) -  G(\boldsymbol{t})\rvert = 0$. This completes the proof.
\end{proof}

\section{Supplementary Materials on the Simulation Study}\label{sm:simu}
This section provides a detailed explanation of the method used to estimate the asymptotic variance and the asymptotic cumulative distribution function (CDF) of the plug-in estimators in the simulation study. Our replication code available at \url{https://github.com/ahoundetoungan/InferenceTSE} implements this method. We also present the estimates of the asymptotic CDF of the debiased estimators.

\subsection{Asymptotic Variance and Asymptotic CDF}
\paragraph{DGPs A and B}
DGP A is a treatment effect model with endogeneity. The model is defined as follows:
\begin{align*}
    y_i = \theta_0d_i + \varepsilon_i, \quad d_i = \mathbbm{1}\{z_i > 0.5(\varepsilon_i + 1.2)\}, \quad z_i \sim \text{Uniform}[0, ~1], \quad \varepsilon_i \sim \text{Uniform}[-1, 1],
\end{align*}

\noindent where $d_i$ is a treatment status indicator, $z_i$ is an instrument for the treatment and $\theta_0=1$. In the first stage, we perform two OLS regressions: a regression of $y_i$ on $\boldsymbol{z}_i = (1, z_i)^{\prime}$ and another regression of $d_i$ on $\boldsymbol{z}_i$. For DGP B, the vector of regressors in the first stage is $\boldsymbol{z}_i = (1, ~z_{1,i}, ~\dots, z_{k_n,i})^{\prime}$.

Let $\boldsymbol{\hat\gamma}_n^{(y)}$ and $\boldsymbol{\hat\gamma}_n^{(d)}$ be the respective OLS estimators and let $\boldsymbol{\hat\gamma}_n = (\boldsymbol{\hat\gamma}_n^{(y)\prime}, ~\boldsymbol{\hat\gamma}_n^{(d)\prime})^{\prime}$ be the joint first-stage (FS) estimator.
Let also $\hat \nu_i^{(y)}$ and $\hat \nu_i^{(d)}$ be the residuals of the regressions; that is, $\hat \nu_i^{(y)} = y_i - \boldsymbol{z}_i^{\prime}\boldsymbol{\hat\gamma}_n^{(y)}$ and $\hat \nu_i^{(d)} = d_i - \boldsymbol{z}_i^{\prime}\boldsymbol{\hat\gamma}_n^{(d)}$. We define $\boldsymbol{z}_i^{(\nu)} = (\hat \nu_i^{(y)}\boldsymbol{z}_i^{\prime}, ~\hat \nu_i^{(d)}\boldsymbol{z}_i^{\prime})^{\prime}$.  The estimator of the asymptotic distribution of $\boldsymbol{\hat\gamma}_n$ is a normal distribution with mean $\boldsymbol{\hat\gamma}_n$ and covariance matrix $\hat{\mathbb{V}}(\boldsymbol{\hat\gamma}_n) = \frac{1}{n}\mathbf{\hat{H}}_n^{-1}\mathbf{\hat{J}}_n\mathbf{\hat{H}}_n^{-1}$, where $$\textstyle\mathbf{\hat{H}}_n = \frac{1}{n}\sum_{i=1}^{n}\diag(\sum_{i = 1}^n \boldsymbol{z}_i\boldsymbol{z}_i^{\prime}, ~\sum_{i = 1}^n \boldsymbol{z}_i\boldsymbol{z}_i^{\prime})\quad \text{and} \quad \textstyle\mathbf{\hat{J}}_n = \frac{1}{n}\sum_{i = 1}^n \boldsymbol{z}_i^{(\nu)}\boldsymbol{z}_i^{(\nu)\prime}.$$
The notation $\diag$ stands for the bloc diagonal matrix operator.
In the second stage, the objective function to be maximized is
$\textstyle Q_n(\theta, ~\mathbf{y}_n, ~\mathbf{\hat{B}}_n) = -\frac{1}{n}\sum_{i = 1}^n (\boldsymbol{z}_i^\prime \boldsymbol{\hat\gamma}_{n}^{(y)} - \theta\boldsymbol{z}_i^\prime \boldsymbol{\hat\gamma}_{n}^{(d)})^2$, where $\mathbf{\hat{B}}_n = (\boldsymbol{\hat\beta}_{n,1}, ~\dots,~ \boldsymbol{\hat\beta}_{n,n})^{\prime}$ and $\boldsymbol{\hat\beta}_{n,i} = (\boldsymbol{z}_i^{\prime}\boldsymbol{\hat\gamma}_n^{(y)}, ~ \boldsymbol{z}_i^{\prime}\boldsymbol{\hat\gamma}_n^{(d)})^{\prime}$.  This implies that $\boldsymbol{\dot q}_{n}(\textbf{y}_n, ~\mathbf{\hat{B}}_{n}) = \frac{2}{\sqrt{n}}\sum_{i=1}^n\boldsymbol{z}_i^\prime \boldsymbol{\hat\gamma}_{n}^{(d)}(\boldsymbol{z}_i^\prime \boldsymbol{\hat\gamma}_{n}^{(y)} - \theta_0\boldsymbol{z}_i^\prime \boldsymbol{\hat\gamma}_{n}^{(d)})$. 
We define the following expressions: 
$$\textstyle \hat A_n = \frac{1}{n}\sum_{i = 1}^n (\boldsymbol{z}_i^\prime \boldsymbol{\hat\gamma}_{n}^{(d)})^2 \quad \text{and} \quad \hat{\mathcal{E}}_{n,s} = \frac{2}{\sqrt{n}}\sum_{i=1}^n\boldsymbol{z}_i^\prime \boldsymbol{\hat\gamma}_{n}^{(d,s)}(\boldsymbol{z}_i^\prime \boldsymbol{\hat\gamma}_{n}^{(y,s)} - \hat\theta_n\boldsymbol{z}_i^\prime \boldsymbol{\hat\gamma}_{n}^{(d,s)}),$$
where $(\boldsymbol{\hat\gamma}_n^{(y,1)\prime}, ~\boldsymbol{\hat\gamma}_n^{(d,1)\prime})^{\prime}, ~\dots, ~(\boldsymbol{\hat\gamma}_n^{(y,\kappa)\prime}, ~\boldsymbol{\hat\gamma}_n^{(d,\kappa)\prime})^{\prime} \overset{i.i.d}{\sim} N(\boldsymbol{\hat\gamma}_n, ~\hat{\mathbb{V}}(\boldsymbol{\hat\gamma}_n))$. Let $\hat\psi_{n,s} = \frac{\hat{\mathcal{E}}_{n,s}}{\hat A_n}$. The asymptotic CDF of $\sqrt{n}(\hat\theta_n - \theta_0)$ can be estimated by the CDF of the sample: $\{\hat\psi_{n,s}, ~ s = 1, ~\dots,~ \kappa\}$. The estimator of the asymptotic variance of $\hat{\theta}_n$ is 
$$\textstyle \hat{\mathbb{V}}(\hat{\theta}_n) =\dfrac{\sum_{s = 1}^{\kappa}(\hat{\mathcal{E}}_{n,s} - \hat{\mathbb{E}}(\mathcal{E}_n))^2}{n(\kappa - 1)\hat A_n^2},$$ where $\hat{\mathbb{E}}(\mathcal{E}_n) = \frac{1}{\kappa}\sum_{s = 1}^{\kappa}\hat{\mathcal{E}}_{n,s}$.
The debiased estimator is given by $$\theta_{n,\kappa}^{\ast} = \hat\theta_n - \hat{\mathbb{E}}(\mathcal{E}_n)/(\sqrt{n}\hat A_n).$$ Let $\hat\psi_{n,s}^{\ast} = \frac{\hat{\mathcal{E}}_{n,s}^{\ast} - \hat{\mathbb{E}}(\mathcal{E}_n^{\ast})}{\hat A_n^{\ast}}$, where $\hat A_n^{\ast}$, $\hat{\mathcal{E}}_{n,s}^{\ast}$, and $\hat{\mathbb{E}}(\mathcal{E}_n^{\ast})$ are defined as $\hat A_n$, $\hat{\mathcal{E}}_{n,s}$, and $\hat{\mathbb{E}}(\mathcal{E}_n)$, respectively, with the difference that they are computed using $\theta_{n,\kappa}^{\ast}$ and not $\hat\theta_n$. We can estimate the asymptotic CDF of $\sqrt{n}(\theta_{n,\kappa}^{\ast} - \theta_0)$ by the CDF of the sample: $\{\hat\psi_{n,s}^{\ast}, ~ s = 1, ~\dots,~ \kappa\}$.

\medskip
\paragraph{DGP C}
DGP C is a Poisson model with a latent covariate that is defined as:
    $$y_i \sim \text{Poisson}(\exp(\theta_{0,1} + \theta_{0,2}p_i)), \quad p_i = \sin^2(\pi z_i), \quad z_i \sim \text{Uniform}[0, ~10], \quad d_i \sim \text{Bernoulli}(p_i),$$

\noindent where $p_i$ is an unobserved probability and $\boldsymbol{\theta}_0 = (\theta_{0,1}, ~\theta_{0,2})^{\prime} = (0.5,~ 2)^{\prime}$. The practitioner observes the pairs $(y_i, ~z_i)$ for all $i$ but only observes $d_i$ for a representative subsample of size $n^{\ast} \leq n$. In the first stage, $p_i = \mathbb{E}(d_i|z_i)$ is estimated using a generalized additive model (GAM) of $d_i$ on $z_i$ in the subsample of size $n^{\ast}$ where $d_i$ is observed. The GAM involves approximating $p_i$ by piecewise polynomial functions of $z_i$ \citepoa[see][]{hastie2017generalizedoa}. We consider cubic polynomial functions on the intervals $[0, ~0.5], ~\dots,~ [9.5, ~10]$. This approach can thus be regarded as an OLS regression of $d_i$ on numerous explanatory variables called bases, which are computed from $z_i$. We can write $\hat p_i = h(z_i,~ \boldsymbol{\hat \gamma}_n)$, where $h$ is a piecewise cubic polynomial function and $\boldsymbol{\hat \gamma}_n$ is the OLS estimator. The regression results can be used to compute $\hat{p}_i$ for any $i$ in the full sample because we observe $z_i$ of all $i$. The estimator of the asymptotic distribution of $\boldsymbol{\hat \gamma}_n$ is a normal distribution with mean $\boldsymbol{\hat \gamma}_n$ and the standard OLS variance denoted $\hat{\mathbb{V}}(\hat{\theta}_n)$.

In the second stage, we perform a maximum likelihood (ML) estimation by assuming that $y_i$ follows a Poisson distribution with mean $\exp(\boldsymbol{\hat{\beta}}_{n,i}^\prime\boldsymbol{\theta})$, where $\boldsymbol{\hat{\beta}}_{n,i} = (1, ~\hat{p}_i)$. The objective function is thus given by $Q_n(\theta_0, ~\mathbf{y}_n, ~\mathbf{\hat{B}}_n) = \frac{1}{n}\sum_{i = 1}^n \big(y_i \boldsymbol{\hat{\beta}}_{n,i}^\prime\boldsymbol{\theta} - \exp(\boldsymbol{\hat{\beta}}_{n,i}^\prime\boldsymbol{\theta})\big)$ and $\boldsymbol{\dot q}_{n}(\mathbf{y}_n,~\mathbf{\hat{B}}_n) = \frac{1}{\sqrt{n}}\sum_{i = 1}^n \big(y_i  - \exp( \boldsymbol{\hat{\beta}}_{n,i}^\prime \boldsymbol{\theta}_0)\big) \boldsymbol{\hat{\beta}}_{n,i}$. Therefore, 
\begin{align}
    &\textstyle\mathbf{\hat A}_n = \frac{1}{n}\sum_{i = 1}^n\exp\big(\boldsymbol{\hat{\beta}}_{n,i}^{\prime}\boldsymbol{\hat \theta}_n\big)\boldsymbol{\hat{\beta}}_{n,i}\boldsymbol{\hat{\beta}}_{n,i}^{\prime}\,, \quad \hat{\mathbf{V}}_{n} = \frac{1}{n}\sum_{i = 1}^n\exp\big(\boldsymbol{\hat{\beta}}_{n,i}^\prime \boldsymbol{\hat \theta}_n\big) \boldsymbol{\hat{\beta}}_{n,i}\boldsymbol{\hat{\beta}}_{n,i}^{\prime}, \quad \text{and}\nonumber\\
    &\textstyle \hat{\mathcal{E}}_{n,s} = \frac{1}{\sqrt{n}}\sum_{i=1}^n\big(\exp( \boldsymbol{\hat{\beta}}_{n,i}^{(s)^\prime} \boldsymbol{\hat \theta}_n)  - \exp( \boldsymbol{\hat{\beta}}_{n,i}^{(s)\prime} \boldsymbol{\hat \theta}_n)\big) \boldsymbol{\hat{\beta}}_{n,i}^{(s)},\nonumber
\end{align}
where $\boldsymbol{\hat{\beta}}_{n,i}^{(s)} = h(z_i,~\boldsymbol{\hat\gamma}_n^{(s)})$, for $s = 1, ~\dots,~ \kappa$, and $\boldsymbol{\hat \gamma}_n^{(1)}~\dots,~\boldsymbol{\hat \gamma}_n^{(\kappa)}\overset{i.i.d}{\sim} N(\boldsymbol{\hat\gamma}_n, ~\hat{\mathbb{V}}(\boldsymbol{\hat\gamma}_n))$. 

Let $\boldsymbol{\hat\psi}_{n,s} =  \mathbf{\hat A}_n^{-1}\hat{\mathbf{V}}_{n}^{1/2}\boldsymbol{\zeta}_{s} + \mathbf{\hat A}_n^{-1}\hat{\mathcal{E}}_{n,s}$, where $\boldsymbol{\zeta}_1$, ~\dots,~ $\boldsymbol{\zeta}_{\kappa} \overset{i.i.d}{\sim} N(0, ~\boldsymbol{I}_{K_{\theta}})$.
The asymptotic CDF of $\sqrt{n}(\boldsymbol{\hat\theta}_n - \boldsymbol{\theta}_0)$ can be estimated by the CDF of the sample: $\{\boldsymbol{\hat\psi}_{n,s}, ~ s = 1, ~\dots,~ \kappa\}$.    The asymptotic variance of $\boldsymbol{\hat{\theta}}_n$ is estimated by $$\hat{\mathbb{V}}(\boldsymbol{\hat{\theta}}_n)= \displaystyle\frac{\mathbf{\hat A}_n^{-1}\boldsymbol{\hat \Sigma}_n^{\kappa}\mathbf{\hat A}_n^{-1}}{n},$$ where
$\boldsymbol{\hat \Sigma}_n^{\kappa} =\hat{\mathbf{V}}_{n} + \frac{1}{\kappa - 1} \sum_{s = 1}^{\kappa} (\hat{\mathcal{E}}_{n,s} - \boldsymbol{\hat \Omega}_{n}^{\kappa}) (\hat{\mathcal{E}}_{n,s} - \boldsymbol{\hat \Omega}_{n}^{\kappa})^{\prime}$ and $\boldsymbol{\hat \Omega}_{n}^{\kappa} =  \frac{1}{\kappa}\sum_{s = 1}^{\kappa} \hat{\mathcal{E}}_{n,s}$. The debiased estimator is given by $$\boldsymbol{\theta}_{n,\kappa}^{\ast} = \boldsymbol{\hat\theta}_n - \mathbf{\hat A}_n^{-1}\boldsymbol{\hat \Omega}_{n}^{\kappa}/\sqrt{n}.$$ Let $\boldsymbol{\hat\psi}_{n,s}^{\ast} = (\mathbf{\hat A}_n^{\ast})^{-1}(\mathbf{\hat V}_{n}^{\ast})^{1/2}\boldsymbol{\zeta}_{s} + (\mathbf{\hat A}_n^{\ast})^{-1}(\hat{\mathcal{E}}_{n,s}^{\ast} -\boldsymbol{\hat \Omega}_{n}^{\ast\kappa})$, where $\mathbf{\hat A}_n^{\ast}$, $\mathbf{\hat V}_{n}^{\ast}$, $\hat{\mathcal{E}}_{n,s}^{\ast}$, and $\boldsymbol{\hat \Omega}_{n}^{\ast\kappa}$ are respectively defined as $\mathbf{\hat A}_n$, $\mathbf{\hat V}_{n}$, $\hat{\mathcal{E}}_{n,s}$, and $\boldsymbol{\hat \Omega}_{n}^{\kappa}$, with the difference that they are computed using $\boldsymbol{\theta}_{n,\kappa}^{\ast}$ and not $\boldsymbol{\hat\theta}_n$. We can estimate the asymptotic CDF of $\sqrt{n}(\boldsymbol{\hat\theta}_n^{\ast} - \boldsymbol{\theta}_0)$ by the CDF of the sample: $\{\boldsymbol{\hat\psi}_{n,s}^{\ast}, ~s = 1, ~\dots,~ \kappa\}$.

\medskip
\paragraph{DGP D}
DGP D is a copula-based multivariate time-series model. We consider $k_n$ returns $y_{1,i}$, ~\dots,~ $y_{k_n, i}$, where $i$ is time and $k_n \geq 2$. Each $y_{p,i}$, for $p = 2, ~\dots,~ k_n$, follows an AR(1)-GARCH(1, 1) model, such that 
$$y_{p,i} = \phi_{p,0} + \phi_{p,1}y_{p,i-1}+  \sigma_{p,i}\varepsilon_{p,i}, \quad \sigma_{p,i}^2 = \beta_{p,0}+\beta_{p,1}\sigma_{p,i-1}^2\varepsilon_{p,i-1}^2+\beta_{p,2}\sigma_{p,i-1}^2,$$
where $\phi_{p,0} = 0$, $\phi_{p,i-1} = 0.4$, $\beta_{p,0} = 0.05$, $\beta_{p,1} = 0.05$, $\beta_{p,2} = 0.9$, and $\varepsilon_{p,i}$ follows a standardized Student distribution of degree-of-freedom $\nu_p = 6$.  We account for the correlation between the returns using the Clayton copula. The joint density function of $y_i = (y_{1,i}, ~\dots,~ y_{p,i})^{\prime}$ conditional on $\mathcal{F}^{i-1}$ is given by $c_i(G_{1,i}(\boldsymbol{\beta}_{0,1}), ~\dots,~ G_{k_n,i}(\boldsymbol{\beta}_{0,k_n}), ~\theta_0)$, where $\boldsymbol{\beta}_{0,p} = (\phi_{p,0}, ~\phi_{p,1}, ~\beta_{p,0}, ~\beta_{p,1}, ~\beta_{p,2}, ~\nu_p)^{\prime}$, $G_{p,i}(\boldsymbol{\beta}_{0,p})$ is the CDF of $y_{p,i}$ conditional on $\mathcal{F}^{i-1}$, and $c_i$ is the probability density function (PDF) of the $k_n$-dimensional Clayton copula of parameter $\theta_0 = 4$. A multi-stage estimation strategy can be used to estimate $\theta_0$. In the first $k_n$ stages, we separately estimate each $\boldsymbol{\beta}_{0,p}$  by applying an AR(1)-GARCH(1, 1) model to the sample $y_{p,1}, ~\dots,~ y_{p,n}$.
$$\textstyle\boldsymbol{\hat\beta}_{n,p} = \arg\max_{\boldsymbol{\beta}_{p}} \ell_p:= \frac{1}{n}\sum_{i=1}^{n}\underbrace{\log g_{p,i}(\boldsymbol{\beta}_{p})}_{\ell_{p,i}}\,,\,\,\text{ for }\, p=1,\,\dots,\,k_n\,,$$
where $g_{p,i}(\boldsymbol{\beta}_{0,p})$ is the PDF of $y_{p,i}$ conditional on $\mathcal{F}^{i-1}$. Let $\boldsymbol{\hat\beta}_{n}$ be the estimator $\boldsymbol{\beta}_{0}:=(\boldsymbol{\beta}_{0,1}^\prime, ~\dots,~ \boldsymbol{\beta}_{0,k_n}^\prime)^\prime$. The estimator of the asymptotic distribution of $\boldsymbol{\hat\beta}_{n}$  is a normal distribution with mean $\boldsymbol{\hat\beta}_{n}$ and variance given by $\hat{\mathbb{V}}(\boldsymbol{\hat\beta}_{n}) = \frac{1}{n}\mathbf{\hat{H}}_n^{-1}\mathbf{\hat{J}}_n\mathbf{\hat{H}}_n^{-1}$, where $$\textstyle\mathbf{\hat{H}}_n = \frac{1}{n}\sum_{i=1}^{n}\diag(\frac{\partial^2}{\partial\boldsymbol{\beta}^\prime\partial\boldsymbol{\beta}}\ell_{1,i},\, \dots, \,\frac{\partial^2}{\partial\boldsymbol{\beta}^\prime\partial\boldsymbol{\beta}}\ell_{k_n,i})\quad \text{and} \quad \textstyle\mathbf{\hat{J}}_n = \mathbb{V}_{\text{HAC}}(\frac{1}{\sqrt{n}}\sum_{i=1}^{n}(\partial_{\boldsymbol{\beta}^\prime}\ell_{1,i},\,\dots,\,\partial_{\boldsymbol{\beta}^\prime}\ell_{k_n,i})^\prime)\,.$$
The notation $\mathbb{V}_{\text{HAC}}$ is the heteroskedasticity and autocorrelation consistent (HAC) covariance matrix to account for the serial correlation \citepoa[see][]{andrews1991heteroskedasticityoa}. In the HAC approach, we use the quadratic spectral kernel and set the bandwidth to $\frac{3}{4}n^{1/3}$. The gradient and the Hessian of the likelihood $\ell_{p,i}$ do not have a closed form. Fortunately, they can be approximated numerically in most statistical software. 

In the last stage, we estimate $\theta_0$ by ML after replacing $\boldsymbol{\beta}_{0}$ in the density function of $y_i$ with $\boldsymbol{\hat{\beta}}_n$. Let $q_{n,i}(\theta, ~\boldsymbol{\hat\beta}_{n}) = \log\big(c_i(G_{1,i}(\boldsymbol{\hat \beta}_{n,1}),\, \dots,\, G_{k_n,i}(\boldsymbol{\hat \beta}_{n,k_n}), ~\theta)\big)$, where $\log c_i(u_1,~\dots,~u_{k_n},\theta) = \sum_{p=1}^{k_n-1}\log(p\theta+1) -(\theta+1) \sum_{p=1}^{k_n}\log u_p - (k_n+\tfrac{1}{\theta})\log(\sum_{p=1}^{k_n}u_p^{-\theta} - k_n + 1)$ .
The objective function is $Q_n(\theta, ~\boldsymbol{\hat\beta}_{n}) = \frac{1}{n}\sum_{i = 1}^n q_{n, i}(\theta, ~\boldsymbol{\hat\beta}_{n})$. To compute $\partial_{\theta}q_{n, i}(\theta_0, ~\boldsymbol{\hat\beta}_{n})$ and $\frac{\partial^2}{\partial\theta^2}q_{n, i}(\theta_0, ~\boldsymbol{\hat\beta}_{n})$, we need the first and second derivatives of $\log c_i(u_1,~\dots,~u_{k_n},\theta)$ that can be expressed as follows:
\begin{align*}
    &\partial_{ \theta}\log c_i(u_1,~\dots,~u_{k_n},\theta) = \sum_{p=1}^{k_n-1}\frac{p}{p\theta+1}-\sum_{p=1}^{k_n}\log u_p + \frac{\log\left(\sum_{p=1}^{k_n}u_p^{-\theta} - k_n + 1\right)}{\theta^2} \\
    & \quad \quad + \frac{(k_n+\tfrac{1}{\theta})\sum_{p=1}^{k_n}u_p^{-\theta}\log u_p}{\sum_{p=1}^{k_n}u_p^{-\theta} - k_n + 1}\\
    &\frac{\partial^2}{\partial \theta^2}\log c_i(u_1,~\dots,~u_{k_n},\theta) = \frac{(k_n+\tfrac{1}{\theta})\left(\sum_{p=1}^{k_n}u_p^{-\theta}\log u_p\right)^2}{\left(\sum_{p=1}^{k_n}u_p^{-\theta} - k_n + 1\right)^2} - \frac{(k_n+\tfrac{1}{\theta})\sum_{p=1}^{k_n}u_p^{-\theta}\left(\log u_p\right)^2}{\sum_{p=1}^{k_n}u_p^{-\theta} - k_n + 1} \\
    &\quad\quad -\sum_{p=1}^{k_n}\left(\frac{p}{p\theta+1}\right)^2 - \frac{2\sum_{p=1}^{k_n}u_p^{-\theta}\log u_p}{\theta^2\left(\sum_{p=1}^{k_n}u_p^{-\theta} - k_n + 1\right)} - \frac{2\log\left(\sum_{p=1}^{k_n}u_p^{-\theta} - k_n + 1\right)}{\theta^3}\,. 
\end{align*}
We define the following expressions: $\hat A_n = \frac{1}{n}\sum_{i = 1}^n \frac{\partial^2}{\partial\theta}q_{n, i}(\hat{\theta}_n, ~\boldsymbol{\hat\beta}_{n})$, \quad \text{and} \\$\quad \hat{\mathcal{E}}_{n,s} = \frac{1}{\sqrt{n}}\sum_{i = 1}^n \partial_{\theta}q_{n, i}(\hat{\theta}_n,\boldsymbol{\hat{\beta}}_{n}^{(s)})$,  where $\boldsymbol{\hat{\beta}}_{n}^{(1)}, ~\dots, ~\boldsymbol{\hat{\beta}}_{n}^{(\kappa)} \overset{i.i.d}{\sim}  N(\boldsymbol{\hat{\beta}}_{n}, ~\hat{\mathbb{V}}(\boldsymbol{\hat\beta}_{n}))$.

The asymptotic distribution of $\sqrt{n}(\hat\theta_n - \theta_0)$ is normal with a mean that can be estimated by $\hat{\mathbb{E}}(\mathcal{E}_n)/\hat A_n$, where $\hat{\mathbb{E}}(\mathcal{E}_n) = \frac{1}{\kappa}\sum_{s = 1}^{\kappa}\hat{\mathcal{E}}_{n,s}$. The asymptotic variance can be estimated by jointly taking the first- and second-stage estimators as an extremum estimator; i.e., from the asymptotic variance of $(\boldsymbol{\hat\beta}_{n}^{\prime}, ~ \hat\theta_n)^{\prime}$ \citepoa[see][Section 6.3]{newey1994largeoa}. 

The debiased estimator is given by $\theta_{n,\kappa}^{\ast} = \hat\theta_n - \hat{\mathbb{E}}(\mathcal{E}_n)/(\sqrt{n}\hat A_n)$. 
The asymptotic distribution of $\sqrt{n}(\boldsymbol{\theta}_{n,\kappa}^{\ast} - \theta_0)$ is normal with a zero mean and the same variance as $\sqrt{n}(\hat\theta_n - \theta_0)$.

\subsection{Estimates of the Asymptotic Distribution of the Debiased Estimators}\label{sm:simuR}
This section presents the estimates of the asymptotic CDF of $\Delta_{n,\kappa}^{\ast} := \sqrt{n}(\boldsymbol{\theta}_{n,\kappa}^{\ast} - \theta_0)$, where $\boldsymbol{\theta}_{n,\kappa}^{\ast}$ is the debiased estimator. In contrast to the case of the classical plug-in estimator, the true sample CDFs are asymptotically centered at zero because $\mathbb{E}(\Delta_{n,\kappa}^{\ast})$ converges to zero asymptotically. Overall, the results demonstrate that the estimator of the CDF of $\Delta_{n,\kappa}^{\ast}$, as outlined in Theorem \ref{theo:debias}, performs well.

\begin{figure}[!htbp]
    \begin{center}
        DGP A
    \end{center}
    \hspace{-1.9cm}\includegraphics[scale = 0.75]{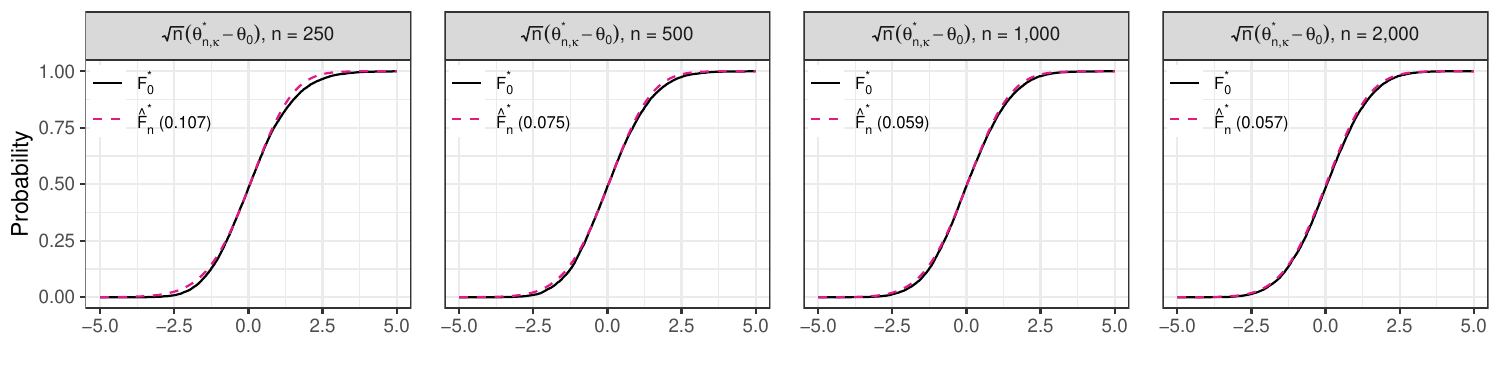}

    \vspace{-0.5cm}
    \begin{center}
        DGP B
    \end{center}
    \hspace{-1.9cm}\includegraphics[scale = 0.75]{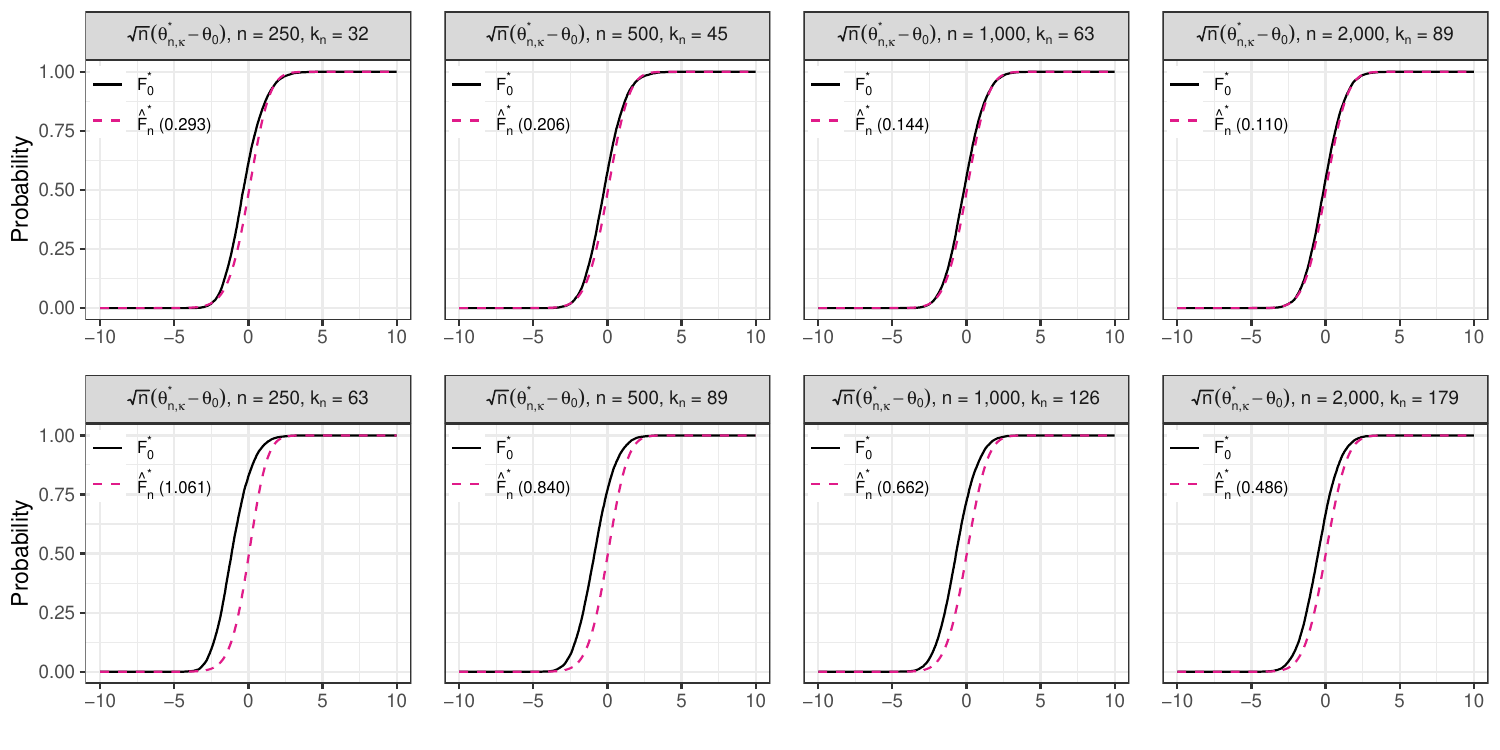}
    \caption{Monte Carlo Simulations: Estimates of asymptotic CDFs (DGPs A and B)}
    \label{fig:simuRAB}

    \vspace{-0.2cm}
    \justify{\footnotesize This figure displays average estimates of the asymptotic CDF of $\sqrt{n}(\boldsymbol{\theta}_{n,\kappa}^{\ast} - \theta_0)$ for DGPs A and B. $F_0^{\ast}$ represents the true sample CDF whereas $\hat F_n^{\ast}$ corresponds to the average estimate of the CDF using our simulation approach. The $L_1$-Wasserstein distance between each estimated CDF and $F_0^{\ast}$ is enclosed in parentheses.}
\end{figure}

\begin{figure}[!htbp]
    \vspace{-1.5cm}
    \begin{center}
        DGP C
    \end{center}
    \hspace{-1.9cm}\includegraphics[scale = 0.75]{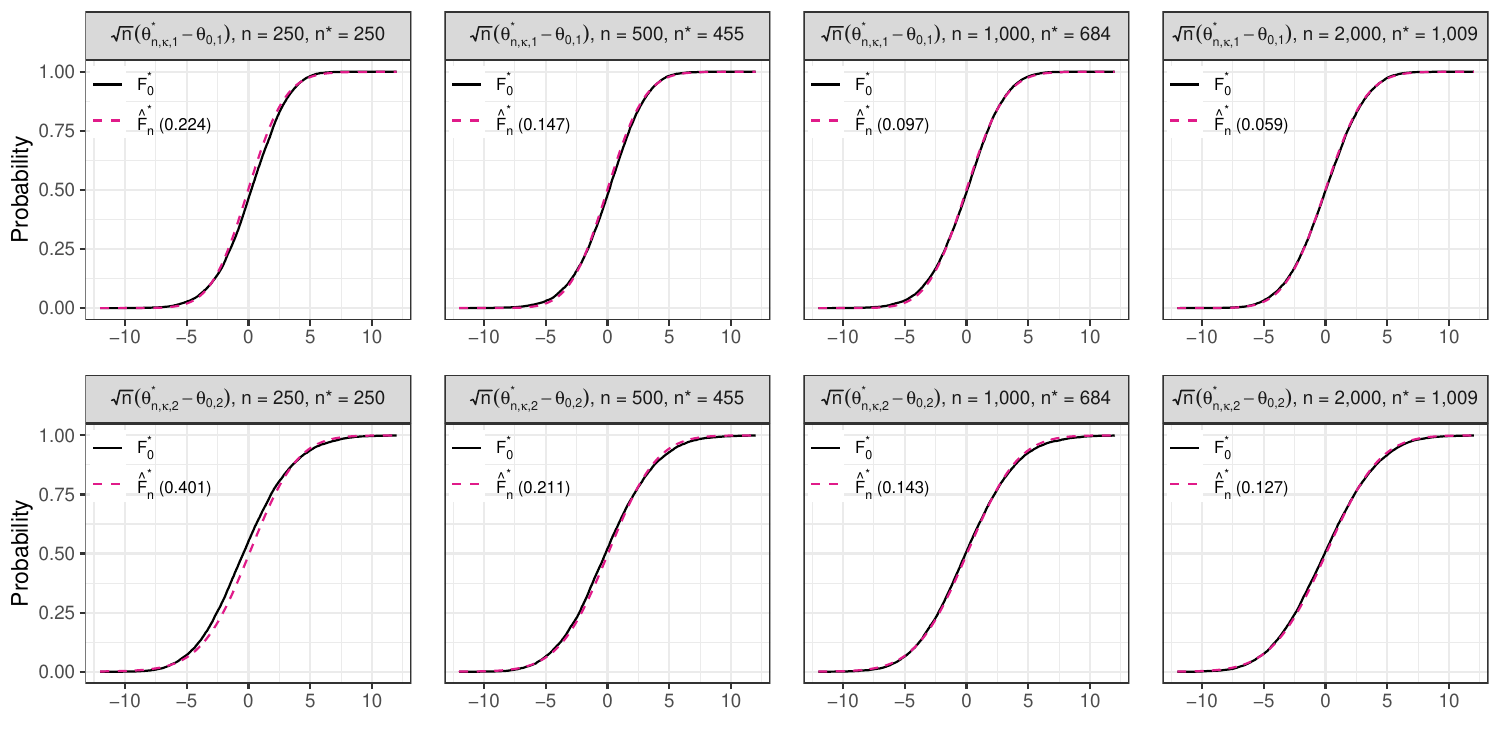}

    \vspace{-0.5cm}
    \begin{center}
      DGP D: Mean Correction
    \end{center}
    \hspace{-1.9cm}\includegraphics[scale = 0.75]{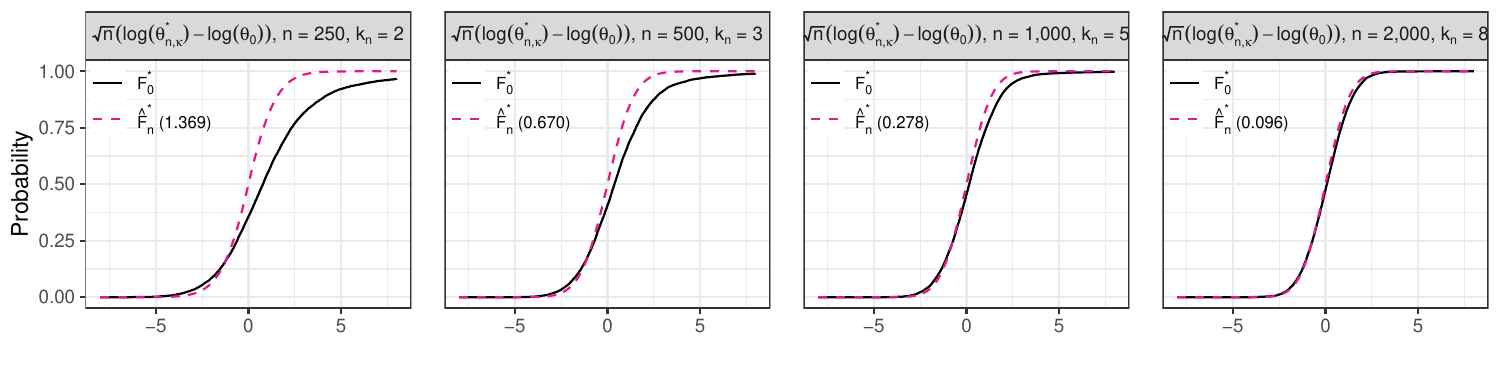}

    \vspace{-0.5cm}
    \begin{center}
      DGP D: Median Correction
    \end{center}
    \hspace{-1.9cm}\includegraphics[scale = 0.75]{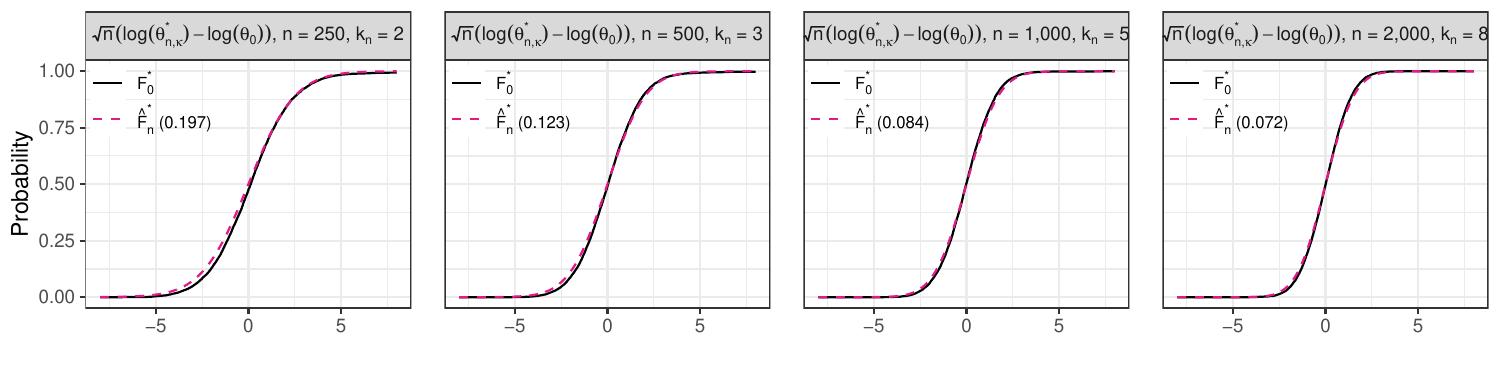}
    \caption{Monte Carlo Simulations: Estimates of asymptotic CDFs (DGPs C and D)}
    \label{fig:simuRCD}

    \vspace{-0.2cm}
    \justify{\footnotesize This figure displays average estimates of the asymptotic CDF of $\sqrt{n}(\boldsymbol{\theta}_{n,\kappa}^{\ast} - \theta_0)$ for DGPs A and B. $F_0^{\ast}$ represents the true sample CDF whereas $\hat F_n^{\ast}$ corresponds to the average estimate of the CDF using our simulation approach. The $L_1$-Wasserstein distance between each estimated CDF and $F_0^{\ast}$ is enclosed in parentheses.}
\end{figure}  

\newpage
\subsection{Confidence Interval (CI) Coverage Rates}\label{sm:crate}
\begin{table}[!ht]
\centering
\caption{Confidence Interval Coverage Rates}
\label{tab:crate}
\begin{adjustbox}{max width=\linewidth}
\begin{threeparttable}
\small
\begin{tabular}{ld{3}d{3}d{3}d{3}ld{3}d{3}d{3}d{3}ld{3}d{3}d{3}d{3}}
\toprule
       & \multicolumn{4}{c}{Two-side 95\% CI} &  & \multicolumn{4}{c}{Lower one-side 95\% CI} &  & \multicolumn{4}{c}{Upper one-side 95\% CI} \\
$n$    & \multicolumn{1}{c}{250}   & \multicolumn{1}{c}{500}    & \multicolumn{1}{c}{1,000}  & \multicolumn{1}{c}{2,000} &  & \multicolumn{1}{c}{250}   & \multicolumn{1}{c}{500}    & \multicolumn{1}{c}{1,000}  & \multicolumn{1}{c}{2,000}   &  & \multicolumn{1}{c}{250}   & \multicolumn{1}{c}{500}    & \multicolumn{1}{c}{1,000}  & \multicolumn{1}{c}{2,000}   \\
\midrule
       & \multicolumn{14}{c}{DGP A}                                                                                                           \\
Nor.   & 0.944  & 0.948  & 0.951    & 0.949   &  & 0.934    & 0.940   & 0.945     & 0.943     &  & 0.958    & 0.958   & 0.956     & 0.954     \\
Sim.   & 0.937  & 0.945  & 0.948    & 0.946   &  & 0.945    & 0.947   & 0.949     & 0.947     &  & 0.940    & 0.945   & 0.948     & 0.948     \\
D-Sim. & 0.939  & 0.946  & 0.948    & 0.946   &  & 0.946    & 0.947   & 0.949     & 0.947     &  & 0.941    & 0.946   & 0.948     & 0.948     \\[1.5ex]
       & \multicolumn{14}{c}{DGP B, $k_n = 2\sqrt{n}$}                                                                                        \\
Nor.   & 0.618  & 0.636  & 0.644    & 0.647   &  & 0.506    & 0.525   & 0.529     & 0.528     &  & 1.000    & 1.000   & 1.000     & 1.000     \\
Sim.   & 0.887  & 0.911  & 0.922    & 0.934   &  & 0.858    & 0.894   & 0.911     & 0.923     &  & 0.959    & 0.958   & 0.954     & 0.955     \\
D-Sim. & 0.921  & 0.932  & 0.940    & 0.947   &  & 0.883    & 0.911   & 0.922     & 0.933     &  & 0.975    & 0.969   & 0.965     & 0.962     \\[1.5ex]
       & \multicolumn{14}{c}{DGP B, $k_n = 4\sqrt{n}$}                                                                                        \\
Nor.   & 0.133  & 0.133  & 0.134    & 0.125   &  & 0.081    & 0.080   & 0.081     & 0.073     &  & 1.000    & 1.000   & 1.000     & 1.000     \\
Sim.   & 0.695  & 0.788  & 0.845    & 0.889   &  & 0.613    & 0.720   & 0.790     & 0.846     &  & 0.989    & 0.986   & 0.978     & 0.972     \\
D-Sim. & 0.757  & 0.831  & 0.879    & 0.913   &  & 0.669    & 0.765   & 0.826     & 0.872     &  & 0.994    & 0.992   & 0.987     & 0.982     \\[1.5ex]
       & \multicolumn{14}{c}{DGP C, $\theta_{0,1}$}                                                                                           \\
Nor.   & 0.857  & 0.885  & 0.899    & 0.902   &  & 0.997    & 0.991   & 0.991     & 0.993     &  & 0.781    & 0.823   & 0.843     & 0.846     \\
Sim.   & 0.919  & 0.928  & 0.935    & 0.940   &  & 0.955    & 0.948   & 0.945     & 0.950     &  & 0.907    & 0.923   & 0.934     & 0.934     \\
D-Sim. & 0.935  & 0.937  & 0.940    & 0.944   &  & 0.967    & 0.955   & 0.950     & 0.953     &  & 0.915    & 0.929   & 0.938     & 0.937     \\[1.5ex]
       & \multicolumn{14}{c}{DGP C, $\theta_{0,2}$}                                                                                           \\
Nor.   & 0.824  & 0.867  & 0.880    & 0.886   &  & 0.736    & 0.794   & 0.811     & 0.819     &  & 0.999    & 0.997   & 0.995     & 0.995     \\
Sim.   & 0.914  & 0.927  & 0.934    & 0.941   &  & 0.901    & 0.921   & 0.932     & 0.931     &  & 0.954    & 0.947   & 0.948     & 0.953     \\
D-Sim. & 0.928  & 0.935  & 0.940    & 0.946   &  & 0.908    & 0.925   & 0.935     & 0.935     &  & 0.966    & 0.955   & 0.952     & 0.954     \\[1.5ex]
       & \multicolumn{14}{c}{DGP D}                                                                                                           \\
Nor.   & 0.954  & 0.935  & 0.903    & 0.898   &  & 0.909    & 0.867   & 0.829     & 0.826     &  & 0.995    & 0.998   & 0.997     & 0.997     \\
Sim.   & 0.866  & 0.909  & 0.943    & 0.962   &  & 0.981    & 0.975   & 0.966     & 0.965     &  & 0.837    & 0.888   & 0.933     & 0.953     \\
D-Sim. & 0.712  & 0.805  & 0.877    & 0.923   &  & 0.944    & 0.940   & 0.935     & 0.945     &  & 0.702    & 0.798   & 0.876     & 0.922     \\[1.5ex]
       & \multicolumn{14}{c}{DGP D – Median correction for the debiased estimator}                                                            \\
D-Sim. & 0.944  & 0.957  & 0.959    & 0.964   &  & 0.970    & 0.965   & 0.957     & 0.962     &  & 0.935    & 0.951   & 0.960     & 0.960    \\\bottomrule
\end{tabular}
\begin{tablenotes}[para,flushleft]
\footnotesize
This table displays the 95\% confidence interval coverage rates obtained using various methods. "Nor." reports the coverage rates for normal CDFs with a zero mean, corresponding to the curves $\hat H_n$ in Figures \ref{fig:simuAB} and \ref{fig:simuCD}. "Sim." indicates the coverage rates using our proposed method, corresponding to the curves $\hat F_n$ in Figures \ref{fig:simuAB} and \ref{fig:simuCD}. "D-Sim" indicates the coverage rates using the debiased estimator, corresponding to the curves $\hat F_n^{\ast}$ in Figures \ref{fig:simuRAB} and \ref{fig:simuRCD}.
\end{tablenotes}
\end{threeparttable}
\end{adjustbox}
\end{table}
\renewcommand{\arraystretch}{1}

\section{Supplementary Materials on the Application}\label{sm:app}
\subsection{Data Summary}\label{sm:app:data}
Our dependent variable is the weekly fast-food consumption frequency, measured by the reported frequency (in days) of fast-food restaurant visits in the past week. We control for 25 observable characteristics in $\mathbf{X}_r$ such as students' gender, grade, race, weekly allowance, and parent's education and occupation. On average, students report consuming fast food 2.35 days per week.  (see Table \ref{tab:data}).

\renewcommand{\arraystretch}{1}
\begin{table}[!htbp]
\small
\centering
\caption{Data Summary}
\label{tab:data}
\begin{threeparttable}
\begin{tabular}{lld{4}d{4}d{4}d{4}}
\toprule
\multicolumn{2}{l}{Statistic}                & \multicolumn{1}{c}{Mean}   & \multicolumn{1}{c}{St. Dev.} & \multicolumn{1}{l}{Min} & \multicolumn{1}{l}{Max} \\ \midrule
\multicolumn{2}{l}{Fast food consumption}          & 2.353  & 1.762    & 0   & 7   \\
\multicolumn{2}{l}{Female}                         & 0.501  & 0.500    & 0   & 1   \\
\multicolumn{2}{l}{Age}                            & 16.628 & 1.554    & 12  & 21  \\
\multicolumn{2}{l}{Hispanic}                       & 0.200  & 0.400    & 0   & 1   \\
\multicolumn{2}{l}{Grade 7--8}                     & 0.100  & 0.300    & 0   & 1   \\
\multicolumn{2}{l}{Grade 9--10}                    & 0.230  & 0.421    & 0   & 1   \\
\multicolumn{2}{l}{Grade 11--12}                   & 0.533  & 0.499    & 0   & 1   \\
\multicolumn{2}{l}{Race (White)}                   &        &          &     &     \\
          & Black                                  & 0.142  & 0.349    & 0   & 1   \\
          & Asian                                  & 0.138  & 0.345    & 0   & 1   \\
          & Other                                  & 0.117  & 0.321    & 0   & 1   \\
\multicolumn{2}{l}{With parents}                   & 0.675  & 0.468    & 0   & 1   \\
\multicolumn{2}{l}{Allowance per week}             & 7.893  & 11.609   & 0   & 95  \\
\multicolumn{2}{l}{Mother Education (High school)} &        &          &     &     \\
          & $<$ High school                        & 0.146  & 0.353    & 0   & 1   \\
          & $>$ High school and not graduated        & 0.173  & 0.378    & 0   & 1   \\
          & $>$ High school and graduated          & 0.226  & 0.418    & 0   & 1   \\
          & Missing                                & 0.124  & 0.330    & 0   & 1   \\
\multicolumn{2}{l}{Father Education (High school)} &        &          &     &     \\
          & $<$ High school                        & 0.124  & 0.329    & 0   & 1   \\
          & $>$ High school and not graduated        & 0.137  & 0.344    & 0   & 1   \\
          & $>$ High school and graduated          & 0.202  & 0.402    & 0   & 1   \\
          & Missing                                & 0.284  & 0.451    & 0   & 1   \\
\multicolumn{2}{l}{Mother Job (None)}              &        &          &     &     \\
          & Professional                           & 0.157  & 0.364    & 0   & 1   \\
          & Other                                  & 0.623  & 0.485    & 0   & 1   \\
          & Missing                                & 0.088  & 0.283    & 0   & 1   \\
\multicolumn{2}{l}{Father Job (None)}              &        &          &     &     \\
          & Professional                           & 0.053  & 0.223    & 0   & 1   \\
          & Other                                  & 0.663  & 0.473    & 0   & 1   \\
          & Missing                                & 0.240  & 0.427    & 0   & 1  \\\bottomrule
\end{tabular}
\begin{tablenotes}[para,flushleft]
\footnotesize This table presents the mean, standard deviation (St. Dev.), minimum, and maximum of the variables used in the empirical application. For the categorical explanatory variables, the level in parentheses is set as the reference level. "With parents" is a dummy variable taking 1 if the student lives with both their mother and father. 
\end{tablenotes}
\end{threeparttable}
\end{table}
\renewcommand{\arraystretch}{1}

\subsection{Estimation and Inference}\label{sm:app:results}
The following table displays the full results of estimations of the peer effect model. 

\renewcommand{\arraystretch}{0.9}
\begin{table}[htbp]
\footnotesize
\centering
\caption{Estimation Results: OLS approach}
\label{tab:appOLS}
\begin{adjustbox}{max width=0.65\linewidth}
\begin{threeparttable}
\begin{tabular}{lld{4}d{4}d{4}d{4}}
\toprule
\multicolumn{2}{l}{Fixed Effects}                                           & \multicolumn{2}{c}{No} & \multicolumn{2}{c}{Yes}  \\
                      &                                                     & \multicolumn{1}{c}{Coef}       & \multicolumn{1}{c}{Sd. Err}     & \multicolumn{1}{c}{Coef}       & \multicolumn{1}{c}{Sd. Err}     \\\midrule
\multicolumn{2}{l}{\textbf{Peer effects:} $\theta_{0,1}$}                            & 0.192      & 0.031     & 0.150       & 0.032     \\\midrule
\multicolumn{2}{l}{\textbf{Individual characteristics:} $\boldsymbol{\theta}_{0,2}$} &            &           &             &           \\
\multicolumn{2}{l}{Female}                                                  & -0.158     & 0.075     & -0.161      & 0.074     \\
\multicolumn{2}{l}{Age}                                                     & 0.109      & 0.040     & 0.083       & 0.040     \\
\multicolumn{2}{l}{Hispanic}                                                & 0.289      & 0.130     & 0.103       & 0.147     \\
\multicolumn{2}{l}{Grade 7--8}                                              & -0.033     & 0.257     & 0.102       & 0.260     \\
\multicolumn{2}{l}{Grade 9--10}                                             & 0.020      & 0.168     & 0.020       & 0.169     \\
\multicolumn{2}{l}{Grade 11--12}                                            & 0.258      & 0.116     & 0.205       & 0.117     \\
\multicolumn{2}{l}{Race (White)}                                            &            &           &             &           \\
                      & Black                                               & 0.075      & 0.145     & -0.085      & 0.167     \\
                      & Asian                                               & 0.302      & 0.144     & 0.134       & 0.167     \\
                      & Other                                               & -0.118     & 0.138     & -0.145      & 0.139     \\
\multicolumn{2}{l}{With parents}                                            & -0.001     & 0.134     & 0.002       & 0.134     \\
\multicolumn{2}{l}{Allowance per week}                                      & 0.008      & 0.003     & 0.006       & 0.003     \\
\multicolumn{2}{l}{Mother Education (High school)}                          &            &           &             &           \\
                      & $<$ High school                                     & 0.101      & 0.119     & 0.084       & 0.119     \\
                      & $>$ High school and non graduated                     & 0.054      & 0.100     & 0.023       & 0.100     \\
                      & $>$ High school and graduated                       & 0.132      & 0.106     & 0.086       & 0.106     \\
                      & Missing                                             & -0.035     & 0.191     & -0.103      & 0.190     \\
\multicolumn{2}{l}{Father Education (High school)}                          &            &           &             &           \\
                      & $<$ High school                                     & -0.251     & 0.128     & -0.244      & 0.127     \\
                      & $>$ High school and non graduated                     & -0.100     & 0.112     & -0.112      & 0.112     \\
                      & $>$ High school and graduated                       & -0.005     & 0.109     & -0.038      & 0.110     \\
                      & Missing                                             & 0.016      & 0.186     & -0.024      & 0.186     \\
\multicolumn{2}{l}{Mother Job (None)}                                       &            &           &             &           \\
                      & Professional                                        & 0.015      & 0.133     & -0.007      & 0.133     \\
                      & Other                                               & 0.065      & 0.101     & 0.061       & 0.101     \\
                      & Missing                                             & 0.376      & 0.227     & 0.441       & 0.226     \\
\multicolumn{2}{l}{Father Job (None)}                                       &            &           &             &           \\
                      & Professional                                        & -0.247     & 0.224     & -0.281      & 0.223     \\
                      & Other                                               & -0.230     & 0.165     & -0.234      & 0.164     \\
                      & Missing                                             & -0.252     & 0.250     & -0.228      & 0.250     \\ \midrule
\multicolumn{2}{l}{\textbf{Contextual peer effects:} $\boldsymbol{\theta}_{0,3}$}    &            &           &             &           \\
\multicolumn{2}{l}{Female}                                                  & 0.044      & 0.120     & -0.001      & 0.121     \\
\multicolumn{2}{l}{Age}                                                     & -0.030     & 0.022     & -0.008      & 0.023     \\
\multicolumn{2}{l}{Hispanic}                                                & -0.091     & 0.195     & -0.189      & 0.203     \\
\multicolumn{2}{l}{Grade 7--8}                                              & -0.268     & 0.274     & -0.281      & 0.277     \\
\multicolumn{2}{l}{Grade 9--10}                                             & -0.189     & 0.209     & -0.107      & 0.209     \\
\multicolumn{2}{l}{Grade 11--12}                                            & -0.028     & 0.189     & 0.028       & 0.188     \\
\multicolumn{2}{l}{Race (White)}                                            &            &           &             &           \\
                      & Black                                               & 0.165      & 0.198     & 0.129       & 0.206     \\
                      & Asian                                               & -0.118     & 0.187     & -0.122      & 0.196     \\
                      & Other                                               & -0.511     & 0.228     & -0.470      & 0.228     \\
\multicolumn{2}{l}{With parents}                                            & -0.199     & 0.216     & -0.230      & 0.218     \\
\multicolumn{2}{l}{Allowance per week}                                      & 0.006      & 0.005     & 0.003       & 0.005     \\
\multicolumn{2}{l}{Mother Education (High school)}                          &            &           &             &           \\
                      & $<$ High school                                     & 0.417      & 0.196     & 0.359       & 0.196     \\
                      & $>$ High school and non graduated                     & -0.163     & 0.170     & -0.226      & 0.171     \\
                      & $>$ High school and graduated                       & -0.065     & 0.178     & -0.189      & 0.180     \\
                      & Missing                                             & -0.288     & 0.373     & -0.386      & 0.372     \\
\multicolumn{2}{l}{Father Education (High school)}                          &            &           &             &           \\
                      & $<$ High school                                     & -0.090     & 0.216     & -0.111      & 0.217     \\
                      & $>$ High school and non graduated                     & 0.140      & 0.176     & 0.084       & 0.179     \\
                      & $>$ High school and graduated                       & 0.093      & 0.177     & 0.052       & 0.182     \\
                      & Missing                                             & 0.289      & 0.324     & 0.278       & 0.324     \\
\multicolumn{2}{l}{Mother Job (None)}                                       &            &           &             &           \\
                      & Professional                                        & -0.217     & 0.223     & -0.246      & 0.224     \\
                      & Other                                               & -0.249     & 0.171     & -0.272      & 0.172     \\
                      & Missing                                             & -0.193     & 0.427     & -0.105      & 0.428     \\
\multicolumn{2}{l}{Father Job (None)}                                       &            &           &             &           \\
                      & Professional                                        & 0.484      & 0.370     & 0.111       & 0.376     \\
                      & Other                                               & 0.368      & 0.268     & 0.238       & 0.274     \\
                      & Missing                                             & 0.044      & 0.424     & -0.116      & 0.430     \\\bottomrule
\end{tabular}
\begin{tablenotes}[para,flushleft]
\footnotesize For the categorical variables, the level in parentheses is set as the reference level.
\end{tablenotes}
\end{threeparttable}
\end{adjustbox}
\end{table}

\begin{table}[htbp]
\footnotesize
\centering
\caption{Estimation Results: Classical and Optimal GMM approaches}
\label{tab:appGMM}
\begin{adjustbox}{max width=\linewidth}
\begin{threeparttable}
\begin{tabular}{lld{4}d{4}d{4}d{4}d{4}d{4}d{4}d{4}}
\toprule
 Model                    &                                                      & \multicolumn{2}{c}{CIV }   & \multicolumn{2}{c}{CIV}                       & \multicolumn{2}{c}{OIV}         & \multicolumn{2}{c}{OIV}                 \\
\multicolumn{2}{l}{Fixed Effects}                                           & \multicolumn{2}{c}{No} & \multicolumn{2}{c}{Yes} & \multicolumn{2}{c}{No} & \multicolumn{2}{c}{Yes} \\
                     &                                                      & \multicolumn{1}{c}{Coef}       & \multicolumn{1}{c}{Sd. Err}     & \multicolumn{1}{c}{Coef}       & \multicolumn{1}{c}{Sd. Err}     & \multicolumn{1}{c}{Coef}       & \multicolumn{1}{c}{Sd. Err}     & \multicolumn{1}{c}{Coef}       & \multicolumn{1}{c}{Sd. Err}     \\ \midrule
\multicolumn{2}{l}{\textbf{Peer effects:} $\theta_{0,1}$}                            & 0.149      & 0.160     & 0.081       & 0.169     & -0.065     & 0.287     & 0.016       & 0.208     \\\midrule
\multicolumn{2}{l}{\textbf{Individual characteristics:} $\boldsymbol{\theta}_{0,2}$} &            &           &             &           &            &           &             &           \\
\multicolumn{2}{l}{Female}                                                  & -0.156     & 0.075     & -0.158      & 0.075     & -0.145     & 0.077     & -0.154      & 0.075     \\
\multicolumn{2}{l}{Age}                                                     & 0.111      & 0.040     & 0.085       & 0.040     & 0.121      & 0.042     & 0.087       & 0.041     \\
\multicolumn{2}{l}{Hispanic}                                                & 0.288      & 0.130     & 0.095       & 0.148     & 0.284      & 0.132     & 0.088       & 0.149     \\
\multicolumn{2}{l}{Grade 7--8}                                              & -0.033     & 0.258     & 0.105       & 0.260     & -0.035     & 0.261     & 0.108       & 0.261     \\
\multicolumn{2}{l}{Grade 9--10}                                             & 0.019      & 0.168     & 0.018       & 0.169     & 0.017      & 0.171     & 0.016       & 0.169     \\
\multicolumn{2}{l}{Grade 11--12}                                            & 0.258      & 0.116     & 0.203       & 0.117     & 0.261      & 0.118     & 0.201       & 0.117     \\
\multicolumn{2}{l}{Race (White)}                                            &            &           &             &           &            &           &             &           \\
                     & Black                                                & 0.073      & 0.145     & -0.091      & 0.168     & 0.066      & 0.147     & -0.097      & 0.168     \\
                     & Asian                                                & 0.305      & 0.144     & 0.133       & 0.167     & 0.318      & 0.147     & 0.132       & 0.167     \\
                     & Other                                                & -0.123     & 0.139     & -0.153      & 0.141     & -0.144     & 0.142     & -0.160      & 0.142     \\
\multicolumn{2}{l}{With parents}                                            & -0.008     & 0.136     & -0.006      & 0.135     & -0.042     & 0.143     & -0.015      & 0.137     \\
\multicolumn{2}{l}{Allowance per week}                                      & 0.008      & 0.003     & 0.006       & 0.003     & 0.008      & 0.003     & 0.006       & 0.003     \\
\multicolumn{2}{l}{Mother Education (High school)}                          &            &           &             &           &            &           &             &           \\
                     & $<$ High school                                      & 0.102      & 0.120     & 0.085       & 0.119     & 0.110      & 0.121     & 0.087       & 0.119     \\
                     & $>$ High school and not graduated                    & 0.057      & 0.100     & 0.025       & 0.100     & 0.070      & 0.103     & 0.026       & 0.101     \\
                     & $>$ High school and graduated                        & 0.131      & 0.106     & 0.082       & 0.107     & 0.128      & 0.108     & 0.079       & 0.107     \\
                     & Missing                                              & -0.045     & 0.194     & -0.121      & 0.195     & -0.094     & 0.204     & -0.138      & 0.198     \\
\multicolumn{2}{l}{Father Education (High school)}                          &            &           &             &           &            &           &             &           \\
                     & $<$ High school                                      & -0.252     & 0.128     & -0.246      & 0.127     & -0.260     & 0.130     & -0.247      & 0.128     \\
                     & $>$ High school and not graduated                    & -0.100     & 0.112     & -0.113      & 0.112     & -0.101     & 0.113     & -0.114      & 0.112     \\
                     & $>$ High school and graduated                        & -0.006     & 0.109     & -0.040      & 0.111     & -0.011     & 0.111     & -0.042      & 0.111     \\
                     & Missing                                              & 0.019      & 0.187     & -0.020      & 0.186     & 0.037      & 0.190     & -0.017      & 0.186     \\
\multicolumn{2}{l}{Mother Job (None)}                                       &            &           &             &           &            &           &             &           \\
                     & Professional                                         & 0.011      & 0.134     & -0.014      & 0.134     & -0.009     & 0.137     & -0.021      & 0.135     \\
                     & Other                                                & 0.066      & 0.101     & 0.061       & 0.101     & 0.069      & 0.103     & 0.061       & 0.102     \\
                     & Missing                                              & 0.381      & 0.228     & 0.452       & 0.228     & 0.404      & 0.232     & 0.463       & 0.229     \\
\multicolumn{2}{l}{Father Job (None)}                                       &            &           &             &           &            &           &             &           \\
                     & Professional                                         & -0.242     & 0.224     & -0.278      & 0.224     & -0.219     & 0.228     & -0.274      & 0.224     \\
                     & Other                                                & -0.225     & 0.165     & -0.228      & 0.165     & -0.203     & 0.169     & -0.222      & 0.166     \\
                     & Missing                                              & -0.255     & 0.251     & -0.230      & 0.250     & -0.267     & 0.254     & -0.231      & 0.251     \\\midrule
\multicolumn{2}{l}{\textbf{Contextual peer effects:} $\boldsymbol{\theta}_{0,3}$}    &            &           &             &           &            &           &             &           \\
\multicolumn{2}{l}{Female}                                                  & 0.037      & 0.123     & -0.013      & 0.124     & 0.003      & 0.130     & -0.024      & 0.126     \\
\multicolumn{2}{l}{Age}                                                     & -0.025     & 0.029     & 0.001       & 0.031     & 0.000      & 0.040     & 0.009       & 0.034     \\
\multicolumn{2}{l}{Hispanic}                                                & -0.075     & 0.203     & -0.169      & 0.209     & 0.002      & 0.222     & -0.150      & 0.213     \\
\multicolumn{2}{l}{Grade 7--8}                                              & -0.264     & 0.274     & -0.273      & 0.278     & -0.246     & 0.278     & -0.266      & 0.279     \\
\multicolumn{2}{l}{Grade 9--10}                                             & -0.186     & 0.209     & -0.099      & 0.210     & -0.174     & 0.212     & -0.092      & 0.211     \\
\multicolumn{2}{l}{Grade 11--12}                                            & -0.011     & 0.199     & 0.056       & 0.201     & 0.073      & 0.222     & 0.083       & 0.207     \\
\multicolumn{2}{l}{Race (White)}                                            &            &           &             &           &            &           &             &           \\
                     & Black                                                & 0.172      & 0.199     & 0.141       & 0.208     & 0.205      & 0.205     & 0.153       & 0.209     \\
                     & Asian                                                & -0.106     & 0.192     & -0.101      & 0.202     & -0.047     & 0.205     & -0.082      & 0.206     \\
                     & Other                                                & -0.519     & 0.230     & -0.483      & 0.230     & -0.560     & 0.237     & -0.495      & 0.232     \\
\multicolumn{2}{l}{With parents}                                            & -0.212     & 0.221     & -0.247      & 0.222     & -0.276     & 0.235     & -0.264      & 0.224     \\
\multicolumn{2}{l}{Allowance per week}                                      & 0.007      & 0.006     & 0.004       & 0.005     & 0.009      & 0.006     & 0.004       & 0.006     \\
\multicolumn{2}{l}{Mother Education (High school)}                          &            &           &             &           &            &           &             &           \\
                     & $<$ High school                                      & 0.420      & 0.197     & 0.359       & 0.196     & 0.438      & 0.200     & 0.359       & 0.197     \\
                     & $>$ High school and not graduated                    & -0.159     & 0.170     & -0.223      & 0.172     & -0.139     & 0.174     & -0.220      & 0.172     \\
                     & $>$ High school and graduated                        & -0.053     & 0.184     & -0.178      & 0.182     & 0.010      & 0.199     & -0.168      & 0.184     \\
                     & Missing                                              & -0.264     & 0.383     & -0.351      & 0.382     & -0.146     & 0.409     & -0.319      & 0.387     \\
\multicolumn{2}{l}{Father Education (High school)}                          &            &           &             &           &            &           &             &           \\
                     & $<$ High school                                      & -0.112     & 0.231     & -0.144      & 0.231     & -0.223     & 0.264     & -0.175      & 0.239     \\
                     & $>$ High school and not graduated                    & 0.124      & 0.186     & 0.054       & 0.193     & 0.043      & 0.209     & 0.025       & 0.201     \\
                     & $>$ High school and graduated                        & 0.082      & 0.181     & 0.035       & 0.186     & 0.029      & 0.192     & 0.020       & 0.189     \\
                     & Missing                                              & 0.282      & 0.325     & 0.265       & 0.326     & 0.251      & 0.331     & 0.254       & 0.327     \\
\multicolumn{2}{l}{Mother Job (None)}                                       &            &           &             &           &            &           &             &           \\
                     & Professional                                         & -0.225     & 0.225     & -0.262      & 0.227     & -0.265     & 0.232     & -0.276      & 0.229     \\
                     & Other                                                & -0.245     & 0.172     & -0.268      & 0.173     & -0.227     & 0.175     & -0.265      & 0.173     \\
                     & Missing                                              & -0.209     & 0.431     & -0.125      & 0.431     & -0.286     & 0.445     & -0.144      & 0.433     \\
\multicolumn{2}{l}{Father Job (None)}                                       &            &           &             &           &            &           &             &           \\
                     & Professional                                         & 0.501      & 0.375     & 0.119       & 0.377     & 0.587      & 0.392     & 0.125       & 0.378     \\
                     & Other                                                & 0.382      & 0.273     & 0.256       & 0.277     & 0.452      & 0.287     & 0.273       & 0.280     \\
                     & Missing                                              & 0.061      & 0.429     & -0.091      & 0.434     & 0.146      & 0.444     & -0.067      & 0.438     \\\bottomrule
\end{tabular}
\begin{tablenotes}[para,flushleft]
\footnotesize For the categorical variables, the level in parentheses is set as the reference level.
\end{tablenotes}
\end{threeparttable}
\end{adjustbox}
\end{table}

\begin{table}[htbp]
\footnotesize
\centering
\caption{Estimation Results: Many Instrument Approaches}
\label{tab:appGMMlarge}
\begin{adjustbox}{max width=\linewidth}
\begin{threeparttable}
\begin{tabular}{lld{4}d{4}d{4}d{4}d{4}d{4}d{4}d{4}}
\toprule
Model                     &                                                      & \multicolumn{2}{c}{IV-MI}   & \multicolumn{2}{c}{IV-MI}                       & \multicolumn{2}{c}{DIV-MI}         & \multicolumn{2}{c}{DIV-MI}                 \\
\multicolumn{2}{l}{Fixed Effects}                                           & \multicolumn{2}{c}{No}    & \multicolumn{2}{c}{Yes}   & \multicolumn{2}{c}{No}     & \multicolumn{2}{c}{Yes} \\
                     &                                                      & \multicolumn{1}{c}{Coef}       & \multicolumn{1}{c}{Sd. Err}     & \multicolumn{1}{c}{Coef}       & \multicolumn{1}{c}{Sd. Err}     & \multicolumn{1}{c}{Coef}       & \multicolumn{1}{c}{Sd. Err}     & \multicolumn{1}{c}{Coef}       & \multicolumn{1}{c}{Sd. Err}     \\ \midrule
\multicolumn{2}{l}{\textbf{Peer effects:} $\theta_{0,1}$}           & 0.276       & 0.063       & 0.208       & 0.067       & 0.300       & 0.063        & 0.218       & 0.067        \\\midrule
\multicolumn{2}{l}{\textbf{Individual characteristics:} $\boldsymbol{\theta}_{0,2}$} &             &             &             &             &             &              &             &              \\
\multicolumn{2}{l}{Female}                                                  & -0.162      & 0.072       & -0.164      & 0.070       & -0.163      & 0.072        & -0.165      & 0.070        \\
\multicolumn{2}{l}{Age}                                                     & 0.105       & 0.038       & 0.082       & 0.038       & 0.104       & 0.038        & 0.082       & 0.038        \\
\multicolumn{2}{l}{Hispanic}                                                & 0.290       & 0.126       & 0.109       & 0.135       & 0.290       & 0.126        & 0.110       & 0.135        \\
\multicolumn{2}{l}{Grade 7--8}                                              & -0.032      & 0.252       & 0.100       & 0.241       & -0.030      & 0.252        & 0.102       & 0.241        \\
\multicolumn{2}{l}{Grade 9--10}                                             & 0.021       & 0.157       & 0.021       & 0.154       & 0.020       & 0.157        & 0.023       & 0.154        \\
\multicolumn{2}{l}{Grade 11--12}                                            & 0.257       & 0.118       & 0.207       & 0.119       & 0.255       & 0.118        & 0.209       & 0.119        \\
\multicolumn{2}{l}{Race (White)}                                            &             &             &             &             &             &              &             &              \\
                     & Black                                                & 0.078       & 0.147       & -0.080      & 0.158       & 0.080       & 0.147        & -0.078      & 0.158        \\
                     & Asian                                                & 0.297       & 0.134       & 0.135       & 0.151       & 0.296       & 0.134        & 0.136       & 0.151        \\
                     & Other                                                & -0.110      & 0.129       & -0.138      & 0.128       & -0.108      & 0.129        & -0.135      & 0.128        \\
\multicolumn{2}{l}{With parents}                                            & 0.012       & 0.128       & 0.010       & 0.127       & 0.016       & 0.129        & 0.011       & 0.128        \\
\multicolumn{2}{l}{Allowance per week}                                      & 0.008       & 0.003       & 0.006       & 0.003       & 0.008       & 0.003        & 0.006       & 0.003        \\
\multicolumn{2}{l}{Mother Education (High school)}                          &             &             &             &             &             &              &             &              \\
                     & $<$ High school                                      & 0.098       & 0.113       & 0.082       & 0.114       & 0.097       & 0.114        & 0.082       & 0.114        \\
                     & $>$ High school and not graduated                    & 0.048       & 0.093       & 0.022       & 0.093       & 0.047       & 0.094        & 0.021       & 0.093        \\
                     & $>$ High school and graduated                        & 0.133       & 0.095       & 0.089       & 0.094       & 0.134       & 0.095        & 0.089       & 0.094        \\
                     & Missing                                              & -0.015      & 0.179       & -0.088      & 0.178       & -0.008      & 0.179        & -0.085      & 0.178        \\
\multicolumn{2}{l}{Father Education (High school)}                          &             &             &             &             &             &              &             &              \\
                     & $<$ High school                                      & -0.248      & 0.114       & -0.243      & 0.112       & -0.247      & 0.114        & -0.244      & 0.112        \\
                     & $>$ High school and not graduated                    & -0.100      & 0.102       & -0.111      & 0.102       & -0.100      & 0.102        & -0.111      & 0.102        \\
                     & $>$ High school and graduated                        & -0.003      & 0.103       & -0.036      & 0.105       & -0.004      & 0.103        & -0.035      & 0.105        \\
                     & Missing                                              & 0.009       & 0.180       & -0.027      & 0.179       & 0.007       & 0.180        & -0.025      & 0.179        \\
\multicolumn{2}{l}{Mother Job (None)}                                       &             &             &             &             &             &              &             &              \\
                     & Professional                                         & 0.023       & 0.122       & -0.001      & 0.124       & 0.024       & 0.122        & 0.002       & 0.124        \\
                     & Other                                                & 0.064       & 0.093       & 0.060       & 0.095       & 0.063       & 0.093        & 0.062       & 0.095        \\
                     & Missing                                              & 0.366       & 0.216       & 0.432       & 0.213       & 0.362       & 0.216        & 0.429       & 0.213        \\
\multicolumn{2}{l}{Father Job (None)}                                       &             &             &             &             &             &              &             &              \\
                     & Professional                                         & -0.256      & 0.192       & -0.284      & 0.196       & -0.261      & 0.192        & -0.286      & 0.196        \\
                     & Other                                                & -0.238      & 0.136       & -0.239      & 0.139       & -0.241      & 0.137        & -0.239      & 0.139        \\
                     & Missing                                              & -0.248      & 0.230       & -0.227      & 0.231       & -0.246      & 0.230        & -0.226      & 0.231        \\\midrule
\multicolumn{2}{l}{\textbf{Contextual peer effects:} $\boldsymbol{\theta}_{0,3}$}    &             &             &             &             &             &              &             &              \\
\multicolumn{2}{l}{Female}                                                  & 0.058       & 0.111       & 0.010       & 0.112       & 0.062       & 0.111        & 0.013       & 0.112        \\
\multicolumn{2}{l}{Age}                                                     & -0.040      & 0.020       & -0.015      & 0.021       & -0.043      & 0.020        & -0.016      & 0.021        \\
\multicolumn{2}{l}{Hispanic}                                                & -0.121      & 0.167       & -0.206      & 0.180       & -0.129      & 0.167        & -0.210      & 0.180        \\
\multicolumn{2}{l}{Grade 7--8}                                              & -0.275      & 0.241       & -0.287      & 0.241       & -0.279      & 0.241        & -0.286      & 0.242        \\
\multicolumn{2}{l}{Grade 9--10}                                             & -0.194      & 0.179       & -0.114      & 0.180       & -0.193      & 0.179        & -0.112      & 0.180        \\
\multicolumn{2}{l}{Grade 11--12}                                            & -0.061      & 0.171       & 0.003       & 0.173       & -0.067      & 0.172        & 0.003       & 0.173        \\
\multicolumn{2}{l}{Race (White)}                                            &             &             &             &             &             &              &             &              \\
                     & Black                                                & 0.152       & 0.190       & 0.119       & 0.197       & 0.147       & 0.190        & 0.113       & 0.197        \\
                     & Asian                                                & -0.141      & 0.173       & -0.139      & 0.181       & -0.149      & 0.173        & -0.145      & 0.181        \\
                     & Other                                                & -0.495      & 0.177       & -0.459      & 0.180       & -0.489      & 0.177        & -0.456      & 0.180        \\
\multicolumn{2}{l}{With parents}                                            & -0.174      & 0.194       & -0.215      & 0.198       & -0.165      & 0.194        & -0.215      & 0.198        \\
\multicolumn{2}{l}{Allowance per week}                                      & 0.005       & 0.005       & 0.003       & 0.005       & 0.005       & 0.005        & 0.003       & 0.005        \\
\multicolumn{2}{l}{Mother Education (High school)}                          &             &             &             &             &             &              &             &              \\
                     & $<$ High school                                      & 0.410       & 0.180       & 0.359       & 0.177       & 0.409       & 0.180        & 0.361       & 0.177        \\
                     & $>$ High school and not graduated                    & -0.171      & 0.147       & -0.229      & 0.146       & -0.173      & 0.147        & -0.228      & 0.146        \\
                     & $>$ High school and graduated                        & -0.090      & 0.161       & -0.199      & 0.166       & -0.096      & 0.162        & -0.199      & 0.167        \\
                     & Missing                                              & -0.335      & 0.351       & -0.415      & 0.354       & -0.345      & 0.352        & -0.419      & 0.354        \\
\multicolumn{2}{l}{Father Education (High school)}                          &             &             &             &             &             &              &             &              \\
                     & $<$ High school                                      & -0.046      & 0.182       & -0.083      & 0.180       & -0.031      & 0.183        & -0.082      & 0.180        \\
                     & $>$ High school and not graduated                    & 0.172       & 0.155       & 0.110       & 0.158       & 0.184       & 0.156        & 0.115       & 0.158        \\
                     & $>$ High school and graduated                        & 0.113       & 0.157       & 0.065       & 0.163       & 0.122       & 0.158        & 0.065       & 0.163        \\
                     & Missing                                              & 0.301       & 0.241       & 0.288       & 0.243       & 0.308       & 0.241        & 0.289       & 0.242        \\
\multicolumn{2}{l}{Mother Job (None)}                                       &             &             &             &             &             &              &             &              \\
                     & Professional                                         & -0.202      & 0.200       & -0.233      & 0.201       & -0.196      & 0.200        & -0.231      & 0.201        \\
                     & Other                                                & -0.256      & 0.150       & -0.276      & 0.153       & -0.254      & 0.150        & -0.275      & 0.153        \\
                     & Missing                                              & -0.163      & 0.385       & -0.088      & 0.384       & -0.150      & 0.386        & -0.082      & 0.384        \\
\multicolumn{2}{l}{Father Job (None)}                                       &             &             &             &             &             &              &             &              \\
                     & Professional                                         & 0.450       & 0.289       & 0.105       & 0.298       & 0.438       & 0.289        & 0.106       & 0.298        \\
                     & Other                                                & 0.341       & 0.187       & 0.223       & 0.199       & 0.332       & 0.187        & 0.219       & 0.198        \\
                     & Missing                                              & 0.010       & 0.322       & -0.137      & 0.332       & 0.000       & 0.322        & -0.142      & 0.331        \\\bottomrule
\end{tabular}
\begin{tablenotes}[para,flushleft]
\footnotesize For the categorical variables, the level in parentheses is set as the reference level.
\end{tablenotes}
\end{threeparttable}
\end{adjustbox}
\end{table}
\renewcommand{\arraystretch}{1}

\clearpage
{\linespread{0.8}
\fontsize{9}{10}\selectfont
\bibliographyoa{Referencesoa}
\bibliographystyleoa{ecta}}
\end{document}